\newtheorem{theorem}{Theorem}
\newtheorem{lemma}{Lemma}
\newtheorem{corollary}{Corollary}
\newtheorem{definition}{Definition}
\newtheorem{claim}{Claim}[theorem]
\theoremstyle{plain}
\newtheorem{assumption}{Assumption}
\newcommand{\xRightarrow}[2][]{\ext@arrow 0359\Rightarrowfill@{#1}{#2}}
\newcommand{\mc}{\mathcal}
\newcommand{\eg}{e.g., }
\newcommand{\ie}{i.e., }
\newcommand{\wrt}{w.r.t. }
\newcommand{\etal}{\textit{et al.}}
\newcommand{\etc}{etc}
\newcommand{\st}{s.t. }
\newcommand{\secparam}{\kappa}
\newcommand{\adversary}{\mc{A}}
\newcommand{\env}{\mc{Z}}
\newcommand{\totalParties}{n}
\newcommand{\party}{\mc{P}}
\newcommand{\partySet}{\mathbb{P}}
\newcommand{\observer}{\Omega}
\newcommand{\attacker}{\mc{M}}
\newcommand{\miningpower}{\mu}
\newcommand{\hash}{\mathsf{H}}
\newcommand{\chain}{\mc{C}}
\newcommand{\block}{\mc{B}}
\newcommand{\genesis}{\block_G}
\newcommand{\oracle}{\mc{O}}
\newcommand{\mesg}{m}
\newcommand{\msgOutputSet}{M}
\newcommand{\execution}{\mc{E}}
\newcommand{\executionTrace}{\Im}
\newcommand{\proto}{\Pi}
\newcommand{\strategy}{S}
\newcommand{\strategySet}{\mathbb{S}}
\newcommand{\profile}{\sigma}
\newcommand{\messageValidityPredicate}{\mc{V}}
\newcommand{\infractionPredicate}{\mc{X}}
\newcommand{\compliant}{\infractionPredicate\mathsf{-compliant}}
\newcommand{\slot}{r}
\newcommand{\epoch}{e}
\newcommand{\epochLength}{l_\epoch}
\newcommand{\utility}{U}
\newcommand{\reward}{R}
\newcommand{\totalReward}{\mc{R}}
\newcommand{\rewardVal}{\rho}
\newcommand{\proportionalRewardFunc}{\varrho}
\newcommand{\universalRewardFunc}{\xi}
\newcommand{\cost}{C}
\newcommand{\costVal}{c}
\newcommand{\networkLossProb}{d}
\newcommand{\difficulty}{\delta}
\newcommand{\powQueryNum}{q}
\newcommand{\queryCost}{\lambda}
\newcommand{\attestation}{\alpha}
\newcommand{\validator}{V}
\newcommand{\exchangeRate}{X}
\newcommand{\utilityBoost}{B}
\newcommand{\exchangeRateVal}{x}
\newcommand{\utilityBoostVal}{b}
\newcommand{\deposit}{g}
\title{
    Blockchain Nash Dynamics and the Pursuit of Compliance
}
\author{
    Dimitris Karakostas \\ University of Edinburgh \\ d.karakostas@ed.ac.uk
    \and
    Aggelos Kiayias \\ University of Edinburgh and IOHK \\ akiayias@inf.ed.ac.uk
    \and
    Thomas Zacharias \\ University of Edinburgh \\ tzachari@inf.ed.ac.uk
}
\begin{document}

\maketitle

\begin{abstract}
    We study Nash-dynamics in the context of blockchain protocols.
    We introduce a formal model, within which one can assess whether the Nash
    dynamics can lead utility-maximizing participants to defect from the ``honest''
    protocol operation, towards variations that exhibit one or more undesirable
    {\em infractions}, such as abstaining from participation and producing
    conflicting protocol histories. Blockchain protocols that do not lead to
    such infraction states are said to be \emph{compliant}. Armed with this model, we
    evaluate the compliance of various Proof-of-Work (PoW) and Proof-of-Stake
    (PoS) protocol families, with respect to different utility functions and
    reward schemes, leading to the following results:
    \begin{inparaenum}[i)]
        \item PoS ledgers under resource-proportional rewards can be compliant if costs are negligible, but non-compliant if costs are significant;
        \item PoW and PoS under block-proportional rewards exhibit different compliance behavior, depending on the lossiness of the network;
        \item PoS ledgers can be compliant \wrt one infraction, \ie producing
            conflicting messages, but non-compliant (and non-equilibria) \wrt
            abstaining or an attack we call selfish signing;
        \item taking externalities, such as exchange rate fluctuations, into
            account, we quantify the benefit of economic penalties, in the
            context of PoS protocols, in disincentivizing particular infractions.
    \end{inparaenum}
\end{abstract}

\section{Introduction}\label{sec:introduction}

The advent of
Bitcoin~\cite{nakamoto2008bitcoin} brought the economic aspects of  consensus
protocols to the forefront. While classical literature in consensus
primarily dealt with fail-stop or Byzantine ``error models''~\cite{DBLP:journals/jacm/PeaseSL80},
 the pressing question post-Bitcoin is whether the participants' incentives
align with what the consensus protocol asks them
to do.
Motivated by this, a line of work investigated if Bitcoin
is an equilibrium under certain conditions
\cite{KrollDaveyFeltenWEIS2013,kiayias16EC}.
Another pinpointed deviations
that can be more profitable for some players, assuming others follow
the protocol~\cite{FC:EyaSir14,FC:SapSomZoh16,FCW:JLGVM14,CCS:CKWN16}. The research body also includes tweaks towards improving the
blockchain protocol in various settings~\cite{FC:FKORVW19,koutsoupias19www}, game-theoretic studies of pooling behavior~\cite{lewenberg15,CCS:CKWN16,ITCS:ArnWei19},
and equilibria that involve abstaining from the protocol~\cite{DBLP:conf/ec/FiatKKP19} in high cost scenarios.
Going beyond consensus, economic mechanisms
have also been considered in the context of multi-party computation~\cite{CCS:KumMorBen15,FC:DavDowLar19,FC:DavDowLar18}, to
disincentivize ``cheating''.
Finally, various works optimized particular attacks, \eg:
\begin{inparaenum}[i)]
    \item optimal selfish mining strategies~\cite{FC:SapSomZoh16};
    \item a framework~\cite{CCS:GKWGRC16} for quantitatively
    evaluating blockchain parameters and identifying optimal strategies for
    selfish mining and double-spending, taking into account network delays;
    \item alternative strategies~\cite{EPRINT:NKMS15}, that are more
        profitable than selfish mining.
\end{inparaenum}

Though these works provide glimpses on these protocols'
behavior in a game-theoretic perspective, they offer little
guidance on how to design and parameterize new consensus protocols. This problem is
of high importance, given the negative light shed on Bitcoin's
perceived energy inefficiency and carbon footprint~\cite{martin2021energy},
that necessitates alternative designs.
Proof-of-Stake (PoS) is currently the most prominent
alternative to Bitcoin's Proof-of-Work (PoW) mechanism. PoW requires computational
effort to produce valid messages, \ie blocks acceptable by the
protocol. PoS relies on each party's stake, \ie assets they own, so
blocks are created at (virtually) no cost beyond transaction processing.
Interestingly, while it is proven that PoS protocols are Byzantine resilient~\cite{C:KRDO17,EPRINT:CGMV18,EPRINT:GHMVZ17}
and are even equilibriums under
certain conditions~\cite{C:KRDO17},
their security is heavily contested by PoW protocols proponents
via an economic argument termed the \emph{nothing-at-stake} attack~\cite{li2017securing,ethereumFaq,nothing-at-stake-1}. This argument
asserts that maintainers of PoS ledgers can maximize their expected rewards
by producing conflicting blocks when possible.

What merit do these criticisms have?
Participating in a blockchain protocol is a voluntary action that involves
a participant downloading the software and committing resources to run it.
Given the open source nature of these protocols,
nothing prevents the participant from modifying the behaviour of the software
in some way and engage with the other parties following a modified strategy.
There are a number of undesirable adjustments that a participant can do, \eg
i) run the protocol intermittently instead of continuously;
ii) not extend the most recent ledger of transactions they are aware of;
iii) extend simultaneously more than one ledger of transactions.
One can consider the above as fundamental {\em infractions} to the protocol
rules and they may have serious security implications, both in terms of
the consistency and the liveness of the underlying ledger.

To address these issues, many blockchain systems
introduce additional mechanisms on top of
incentives, frequently with only rudimentary game theoretic analysis.
These include: i) rewards for ``uncle blocks'' (Ethereum);
ii) stake delegation (EOS, Polkadot, Cardano~\cite{SCN:KarKiaLar20}), where
users assign their participation rights to delegates or
stake pools;
iii) penalties for
misbehavior, also referred to as \emph{``slashing''} (Ethereum 2.0~\cite{buterin2017casper,casper-incentives}).
Unfortunately, the lack of thorough
analysis of these mechanisms is, naturally, a serious impediment to wider adoption.
For instance, in the case of penalties
employing multiple replicas for redundancy, \ie to increase crash-fault
tolerance, may produce conflicting blocks due to a faulty configuration, if two replicas come alive simultaneously.
However, if a party employs no failover mechanism and experiences network connectivity
issues, it may fail to participate. Furthermore, software or hardware bugs can
always compromise an -- otherwise safe and secure -- configuration.
This highlights the flip side of such penalty mechanisms:
participants may choose to not engage,
(\eg to avoid the risk of forfeiting funds, or because they do not
own sufficient funds to make a deposit), or, if they do engage, they may steer clear
of fault-tolerant sysadmin practices,  which could pose quality of service concerns and hurt the system in the long run.

The above considerations put forth the fundamental question that motivates
our work: {\em How effective are blockchain protocol designs in disincentivizing
 particularly adverse protocol infractions?} In more detail, the question we
ask is whether selfish behavior can lead to specific types of deviations,
taking a blockchain protocol as the initial point of reference of honest ---
compliant --- behavior.

\paragraph{Our Contributions and Roadmap.}\label{sec:contributions}
Our main question relates to the Nash dynamics of blockchain protocols.
In the classical Nash dynamics problem~\cite{rosenthal73},
the question is whether allowing selfish
players to perform step-wise payoff-improving moves leads the system
to an equilibrium, and in how many steps this may happen; \eg
\cite{DBLP:conf/stoc/FabrikantPT04} considers the case of congestion games.
In this perspective, the action space can be seen as a directed graph,
where vertices represent vectors of player strategies and edges correspond
to player moves. Notably, deciding whether the Nash dynamics converge to a
(Nash or sink) equilibrium is particularly difficult, often being a
NP-hard or PSPACE-complete problem~\cite{sink2009}.

This work adapts Nash dynamics to the setting of blockchain protocols,
with a particular focus on studying specific undesirable protocol infractions.
Importantly, instead of asking for convergence, we ask whether the ``cone''
in the directed graph positioned at the protocol contains strategies
from a given infraction set $\infractionPredicate$
(Figure~\ref{fig:cone}). If the cone is free of infractions,
the protocol is deemed $\infractionPredicate$-compliant.
In turn, we also consider
$\epsilon$-Nash-dynamics~\cite{DBLP:journals/geb/ChienS11}, \ie considering
only steps in the graph which represent best responses and improve the participant's payoff more than
$\epsilon$. Armed with this model, we investigate various protocols from a
compliance perspective.

\begin{figure}[h]
    \begin{center}
        \includegraphics[width=0.6\columnwidth]{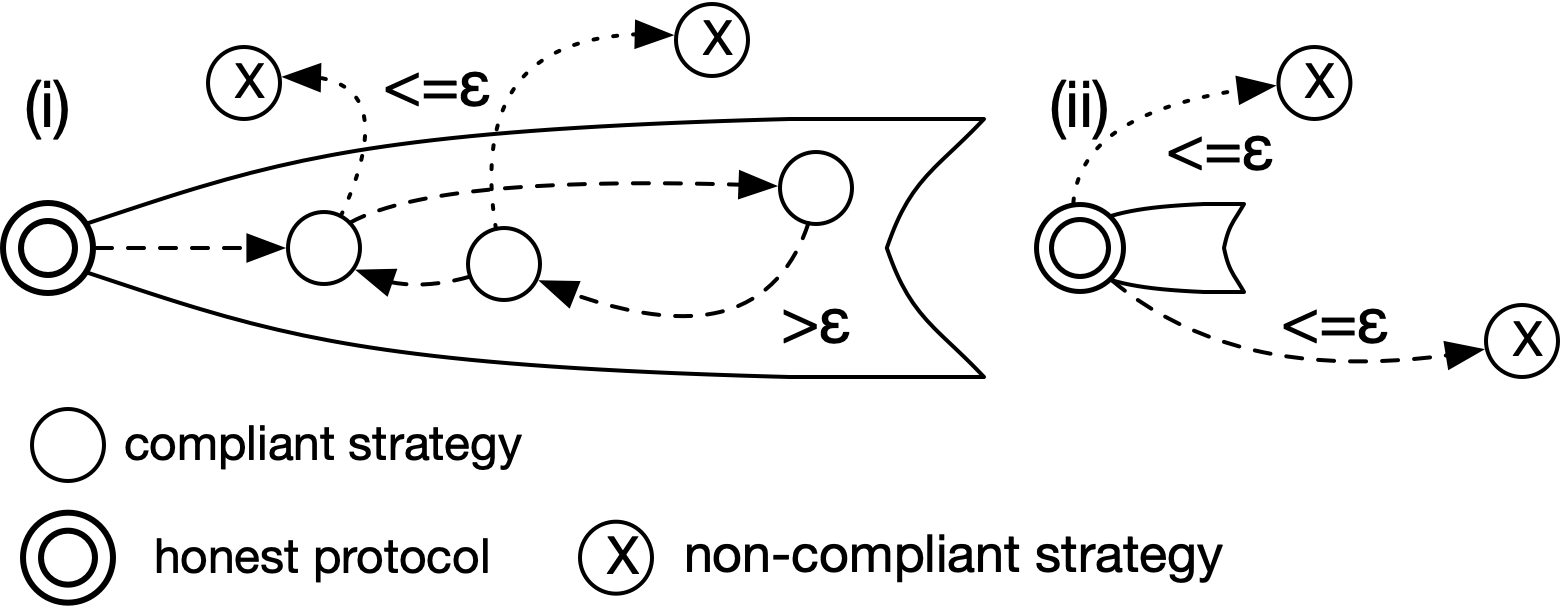}
    \end{center}
    \caption{
        Illustration of a compliant protocol that does not exhibit an
        equilibrium (i), vs a protocol which is an approximate Nash equilibrium
        (ii).
    }
    \label{fig:cone}
\end{figure}

A core motivation of our work is that
$\infractionPredicate$-compliance enables
to validate the incentive structure of a blockchain
protocol \wrt specific disruptive behaviors (as captured by $\infractionPredicate$), while abstracting away any deviations
that avoid the infractions. In this sense, $\infractionPredicate$-compliance of a protocol is a weaker notion compared to a Nash equilibrium, allowing a variety of possible protocol deviations as long as they do not fall into $\infractionPredicate$. This also enables a two-tiered analysis where  compliance analysis rules out crucial deviations, while the set of all compliant behaviors can be analyzed in, say, worst-case fashion.
Moreover, negative results in terms of compliance are immediately informative as they identify one or more specific infractions. This is helpful from the point of view of mechanism parameterisation for blockchain protocols, where various penalties (\eg reward reduction or slashing of funds) are typically employed to mitigate specific deviant behaviors. So far, there exists no framework that enables a formal argument as to whether a specific penalty is sufficient to mitigate a certain behavior. Our work provides such framework and we illustrate its applicability in this setting, by analyzing an array of Nakamoto longest chain protocol families.

In detail, our paper is organized as follows.
Section~\ref{sec:model} describes our model of
\emph{compliant} strategies and protocols. A strategy is compliant if a party
that employs it never violates a predicate $\infractionPredicate$, which
captures well-defined types of deviant behavior. Accordingly, a protocol is
compliant if, assuming a starting point where no party deviates, no party will
eventually employ a non-compliant strategy, assuming sequential unilateral
defections. Section~\ref{sec:blockchains} specifies compliance for blockchain
protocols, under an infraction predicate that captures abstaining and producing
conflicting blocks, and two types of utility, absolute rewards and profit.
We then explore different reward schemes and protocol families.
Section~\ref{sec:universal} shows that \emph{resource-proportional} rewards, \ie which depend
only on a party's mining or staking power, result in compliance \wrt
rewards alone (i.e., when costs are negligible), but non-compliance \wrt profit (rewards minus costs). Next, we explore \emph{block-proportional}
rewards, using as baseline the total blocks adopted by an impartial observer of the
system. Section~\ref{subsec:bitcoin} shows that PoW systems are compliant \wrt
rewards alone. Section~\ref{subsec:single-leader-pos} shows
that PoS systems, which enforce that a single party participates at a time, are
compliant, under a synchronous network, but non-compliant under a lossy
network (contrary to PoW). Section~\ref{subsec:multi-leader-pos} shows that PoS systems, which
allow multiple parties to produce blocks for the same time slot, are not
compliant. Notably, our negative results show that a party can gain a
\emph{non-negligible} reward by being non-compliant under a certain network routing assumption, in this way also highlighting the way the network interacts with protocol incentives.
Section~\ref{sec:compliant-non-equilibrium} highlights the distinction between
compliance and Nash equilibria, by showcasing a protocol under two types of
rewards that, in both cases, is compliant \wrt an infraction predicate that captures realistic deviations and non-compliant \wrt another realistic predicate (and hence, not a Nash equilibrium).
Finally, we evaluate
compliance under various externalities, specifically the varying exchange rate of the platform's underlying token, which models
real-world prices, and external rewards, which come as a result of successful
attacks. We show that
applying a penalty would be necessary if a certain deviant behavior is synergistic to mounting the attacks, and we provide estimations for such penalties \wrt the ledger's parameters and the market's expected behavior.

\section{Compliance Model}\label{sec:model}

We assume a distributed protocol $\proto$, which is executed by a set of
parties $\partySet$ over a number of time slots.
Every party $\party \in \partySet$ is activated on each time slot, following a
schedule set by an environment $\env$, which also provides the parties with
inputs.
Each party $\party \in \partySet$ is associated with a number
$\miningpower_{\party} \in [0, 1]$. $\miningpower_{\party}$ identifies
$\party$'s percentage of participation power in the protocol, \eg its votes,
hashing power, staking power, \etc; consequently, $\sum_{\party \in \partySet}
\miningpower_{\party} = 1$.
$\secparam$ denotes $\proto$'s security parameter, $\mathsf{negl}(\cdot)$ denotes that a function is negligible, i.e., asymptotically smaller than the inverse of any polynomial, $[n]$ denotes the set $\{1, \ldots, n\}$, and $E[X]$ denotes the expectation of random variable $X$.

\subsection{Preliminaries}\label{sec:preliminaries}

We assume a peer-to-peer network, \ie parties do not communicate via
point-to-point connections. Instead, they use the following variant of a \emph{diffuse}
functionality (cf.~\cite{EC:GarKiaLeo15}).

\paragraph{Router.}
We consider a special party called \emph{router} $\adversary$. On each time slot,
$\adversary$ retrieves all created messages and decides their order and time of
delivery. In essence, $\adversary$ models the underlying communication network.
In this work, we consider the following three routers, which are of
interest:
\begin{inparaenum}[i)]
    \item \emph{synchronous}: all messages are delivered at the end of the
        round during which they were created;
    \item \emph{lossy}: a message is omitted by $\adversary$, \ie it is never
        delivered to any recipient, with probability
        $\networkLossProb$;\footnote{This router aims to model the
        setting where a network with stochastic delays is used by an
        application, where users reject messages delivered with delay above a
        (protocol-specific) limit. For example, various protocols, like
        Bitcoin~\cite{nakamoto2008bitcoin}, resolve message conflicts based on
        the order of delivery; thus, delaying a message for long enough, such
        that a competing message is delivered beforehand, is equivalent to
        dropping the message altogether.}
    \item \emph{uniform}: the order of message delivery is uniformly
        randomized.
\end{inparaenum}

\paragraph{Diffuse Functionality.}\label{par:diffuse}
The functionality, parameterized by a router $\adversary$, initializes a variable $\mathit{slot}$ to $1$, which is
readable from all parties. In addition, it maintains a string
$\textsc{Receive}_{\party}()$ for each party $\party$. Each party $\party$ is
allowed to fetch the contents of $\textsc{Receive}_{\party}()$ at the beginning
of each time slot. To diffuse a (possibly empty) message $\mesg$, $\party$
sends to the functionality $\mesg$, which records it. On each slot, every party
completes its activity by sending a special $\textsc{Complete}$ message to the
functionality.  When all parties submit $\textsc{Complete}$, the functionality
delivers the messages, which are diffused during this slot, as follows. First,
it sends all messages to $\adversary$. Following,
$\adversary$ responds with a list of tuples $\langle \party, l_\party \rangle$,
where $\party \in \mathbb{P}$ and $l_\party$ is an ordered list of messages.
Subsequently, the functionality includes all messages in $l_\party$, following
its specified order, in the $\textsc{Receive}_{\party}()$ string of $\party$.
Hence, the received messages contain no information on each message's creator.
Finally, the functionality increases the value of $\mathit{slot}$ by $1$.

\paragraph{Approximate Nash Equilibrium.}\label{sec:equilibrium}
An approximate Nash equilibrium is a common tool for expressing a solution to a
non-cooperative game involving $\totalParties$ parties $\party_1,\ldots,\party_n$. Each party $\party_i$
employs a strategy $\strategy_i$. The strategy is a set of rules and actions
the party makes, depending on what has happened up to any point in the game,
\ie it defines the part of the entire distributed protocol $\proto$ performed
by $\party_i$. There exists an ``honest'' strategy, defined by $\proto$, which
parties may employ; for ease of notation,
$\proto$ denotes both the distributed protocol and the honest strategy.
A \emph{strategy profile} is a vector of all players' strategies.
Each party $\party_i$ has a game \emph{utility} $\utility_i$, which is a real
function that takes as input a strategy profile. A strategy profile is an
$\epsilon$-Nash equilibrium when no party can increase its utility more than
$\epsilon$ by \emph{unilaterally} changing its strategy
(Definition~\ref{def:equilibrium}).

\begin{definition}\label{def:equilibrium}
    Let:
    \begin{inparaenum}[i)]
        \item $\epsilon$ be a non-negative real number;
        \item $\strategySet$ be the set of strategies a party may employ;
        \item $\profile^* = (\strategy^*_i, \strategy^*_{-i})$ be a strategy
            profile of $\partySet$, where $\strategy^*_i$ is the strategy
            followed by $\party_i$;
        \item $\strategy^*_{-i}$ denote the $\totalParties - 1$ strategies
            employed by all parties except $\party_i$.
    \end{inparaenum}
    We say that $\profile^*$ is an \emph{$\epsilon$-Nash equilibrium} \wrt a
    utility vector $\bar{\utility} = \langle \utility_1, \ldots,
    \utility_\totalParties \rangle$ if:
    $\forall \party_i \in \partySet \; \forall \strategy_i \in \strategySet \setminus \{ \strategy^*_i \} : \utility_i(\strategy^*_i, \strategy^*_{-i}) \geq \utility_i(\strategy_i, \strategy^*_{-i}) - \epsilon$.
\end{definition}

For simplicity, when all parties have the same utility $U$, we say that the profile $\profile^*$ is an
$\epsilon$-Nash equilibrium \wrt $U$. We also say that $\proto$ is an $\epsilon$-Nash equilibrium \wrt $U$ when the strategy profile $\profile_\proto=\langle\proto,\ldots,\proto\rangle$ where all parties follow the honest strategy is an $\epsilon$-Nash equilibrium \wrt $U$.

\subsection{Basic Notions}\label{subsec:basic}
A protocol's execution $\execution_{\env, \adversary, \profile, \slot}$ until a given time slot
$\slot$ is probabilistic and parameterized by:
\begin{inparaenum}[i)]
    \item the environment $\env$;
    \item a router $\adversary$;
    \item the strategy profile $\profile$ of the participating parties.
\end{inparaenum}
As discussed, $\env$
provides the parties with inputs and schedules their activation. For notation
simplicity, when $\slot$ is omitted, $\execution_{\env, \adversary, \profile}$ refers to the
end of the execution, which occurs after polynomially many time slots.

An \emph{execution trace} $\executionTrace_{\env, \adversary, \profile, \slot}$ until a time slot
$\slot$ is the value that the random variable $\execution_{\env, \adversary, \profile, \slot}$ takes for a fixed
environment $\env$, router $\adversary$, and strategy profile $\profile$, and for fixed random coins of
$\env$, each party $\party \in \partySet$, and every
protocol-specific oracle (see below).
A party $\party$'s view of an
execution trace $\executionTrace_{\env, \adversary, \profile, \slot}^{\party}$ consists of
the messages that $\party$ has sent and received until slot $\slot$. For
notation simplicity, we omit the subscripts $\{ \env, \adversary,
\profile, \slot \}$ from both $\execution$ and $\executionTrace$, unless
required for clarity.\smallskip

The protocol $\proto$ defines two components, which are related to our
analysis: (1) the oracle $\oracle_\proto$, and (2) the ``infraction'' predicate
$\infractionPredicate$. We present them below.

\paragraph{The Oracle $\oracle_\proto$.}
The oracle $\oracle_\proto$ provides the parties with the core functionality
needed to participate in $\proto$. For example, in a Proof-of-Work (PoW) system,
$\oracle_\proto$ is the random or hashing oracle, whereas in an authenticated
Byzantine Agreement protocol, $\oracle_\proto$ is a signing oracle. On each time
slot, a party can perform at most a polynomial number of queries to
$\oracle_\proto$; in the simplest case, each party can submit a single query
per slot. Finally, $\oracle_\proto$ is \emph{stateless}, \ie its random coins
are decided upon the beginning of the execution and its responses do not depend
on the order of the queries.

\paragraph{The Infraction Predicate $\infractionPredicate$.}
The infraction predicate $\infractionPredicate$ abstracts the
deviant behavior that the analysis aims to capture.
Given the execution trace and a party $\party$, $\infractionPredicate$
responds with $1$ only if $\party$ deviates from the protocol in some
well-defined manner. Definition~\ref{def:infraction-predicate} provides the
core generic property of $\infractionPredicate$, \ie that honest
parties never deviate.
With hindsight, our analysis will
focus on infraction predicates that capture either producing conflicting
messages or abstaining.

\begin{definition}[Infraction Predicate Property]\label{def:infraction-predicate}
    The infraction predicate $\infractionPredicate$ has the property that, for
    every execution trace $\executionTrace$ and for every party $\party \in
    \partySet$, if $\party$ employs the (honest) strategy $\proto$ then
    $\infractionPredicate(\executionTrace, \party) = 0$.
\end{definition}

We stress that Definition~\ref{def:infraction-predicate} implies that
$\infractionPredicate$ being $0$ is a necessary but not sufficient
condition for honesty. Specifically, for all honest parties
$\infractionPredicate$ is always $0$, but $\infractionPredicate$ might
also be $0$ for a party that deviates from $\proto$, in a way not
captured by $\infractionPredicate$. In that case, we say that the party employs
an $\compliant$ strategy (Definition~\ref{def:compliant-strategy}). A strategy
profile is $\compliant$ if all its strategies are $\compliant$, so
the ``all honest'' profile $\profile_\proto$, where all parties employ
$\proto$, is $\compliant$.

\begin{definition}[Compliant Strategy]\label{def:compliant-strategy}
    Let $\infractionPredicate$ be an infraction predicate. A strategy $\strategy$ is
    \emph{$\mathcal{X}$-compliant} if and only if $\infractionPredicate(\executionTrace, \party) =
    0$ for every party $\party$ and for every trace $\executionTrace$ where $\party$ employs $\strategy$.
\end{definition}

\paragraph{The observer $\observer$.}
We assume a special party $\observer$, the \emph{(passive)
observer}. This party does not actively participate in the execution, but it
runs $\proto$ and observes the protocol's execution. Notably, $\observer$ is
\emph{always online}, \ie it bootstraps at the beginning of the execution and
is activated on every slot, in order to receive diffused messages. Therefore,
the observer models a user of the system, who frequently uses the system but
does not actively participate in its maintenance. Additionally, at the last
round of the execution, the environment $\env$ activates only $\observer$, in
order to receive the diffused messages of the penultimate round and have a
complete point of view.

\subsection{Compliant Protocols}\label{subsec:compliant}

To define the notion of an $(\epsilon, \infractionPredicate)$-compliant protocol
$\proto$, we require two parameters:
\begin{inparaenum}[(i)]
    \item the associated infraction predicate $\infractionPredicate$ and
    \item a non-negative real number $\epsilon$.
\end{inparaenum}
Following Definition~\ref{def:compliant-strategy}, $\infractionPredicate$
determines the set of compliant strategies that the parties may follow in
$\proto$. Intuitively, $\epsilon$ specifies the sufficient gain threshold after which a party
switches strategies. In particular, $\epsilon$ is
used to define when a strategy profile $\profile'$ is \emph{directly reachable}
from a strategy profile $\profile$, in the sense that $\profile'$ results from the
unilateral deviation of a party $\party_i$ from $\profile$ and,
by this
deviation, the utility of $\party_i$ increases more than $\epsilon$
while $\profile'$ sets a \emph{best response} for $\party_i$.
Generally, $\profile'$ is \emph{reachable} from $\profile$, if $\profile'$ results
from a ``path'' of strategy profiles, starting from $\profile$, which are
sequentially related via direct reachability. Finally, we define the \emph{cone}
of a profile $\profile$ as the set of all strategies that are
reachable from $\profile$, including $\profile$ itself.

Given the above definitions, we say that $\proto$ is
$(\epsilon,\mathcal{X})$-compliant if the cone of the ``all honest'' strategy
profile $\profile_\proto$ contains only profiles that consist of $\compliant$
strategies. Thus, if a protocol is compliant, then the parties may
(unilaterally) deviate from the honest strategy only in a compliant manner, as
dictated by $\infractionPredicate$.
Formally, first we define ``reachability'' between two strategy profiles, as
well as the notion of a ``cone'' of a strategy profile \wrt the reachability
relation. Then, we define a compliant protocol \wrt its associated
infraction predicate.

\begin{definition}\label{def:reach}
    Let:
    \begin{inparaenum}[i)]
        \item $\epsilon$ be a non-negative real number;
        \item $\proto$ be a protocol run by parties $\party_1, \ldots, \party_\totalParties$;
        \item $\bar{\utility}=\langle \utility_1, \ldots, \utility_\totalParties \rangle$ be a utility vector, where $\utility_i$ is the utility of $\party_i$;
        \item $\strategySet$ be the set of all strategies a party may employ.
    \end{inparaenum}
    We provide the following definitions.
    \begin{enumerate}
        \item Let $\profile, \profile' \in \strategySet^\totalParties$ be two strategy profiles where $\profile = \langle \strategy_1, \ldots, \strategy_\totalParties \rangle$ and $\profile'= \langle \strategy'_1, \ldots, \strategy'_\totalParties \rangle$. We say that $\profile'$ is \emph{directly $\epsilon$-reachable from $\profile$ \wrt $\bar{\utility}$}, if there exists $i \in [\totalParties]$ \st (i) $\forall j \in [\totalParties] \setminus \{i\}: \strategy'_j = \strategy_j$, (ii) $\utility_i(\profile') > \utility_i(\profile) + \epsilon$, and (iii) for every strategy profile $\profile''=\langle \strategy''_1, \ldots, \strategy''_\totalParties \rangle$ \st $\forall j \in [\totalParties] \setminus \{i\}: \strategy''_j = \strategy_j$, it holds that $\utility_i(\profile'') \leq \utility_i(\profile')$. (i.e., $\profile'$ sets a best response for $\party_i$)
        \item Let $\profile, \profile' \in \strategySet^\totalParties$ be two distinct strategy profiles. We say that $\profile'$ is \emph{$\epsilon$-reachable from $\profile$ \wrt $\bar{\utility}$}, if there exist profiles $\profile_1, \ldots, \profile_k$ such that (i) $\profile_1 = \profile$, (ii) $\profile_k = \profile'$, and (iii) $\forall j \in [2, k]$ it holds that $\profile_j$ is directly $\epsilon$-reachable from $\profile_{j-1}$ \wrt $\bar{\utility}$.
        \item For every strategy profile $\profile \in \strategySet^\totalParties$ we define the \emph{$(\epsilon, \bar{\utility})$-cone of $\profile$} as the set:
        $\mathsf{Cone}_{\epsilon, \bar{\utility}}(\profile) := \{\profile' \in \strategySet^\totalParties\;|\;(\profile' = \profile) \lor (\profile'\mbox{ is $\epsilon$-reachable from }\profile\mbox{ \wrt }\bar{\utility})\}$.
    \end{enumerate}
\end{definition}

\begin{definition}\label{def:compliant}
    Let:
    \begin{inparaenum}[i)]
        \item $\epsilon$ be a non-negative real number;
        \item $\proto$ be a protocol run by the parties $\party_1, \ldots, \party_\totalParties$;
        \item $\infractionPredicate$ be an infraction predicate;
        \item $\bar{\utility}=\langle \utility_1, \ldots, \utility_\totalParties \rangle$ be a utility vector, where $\utility_i$ is the utility of party $\party_i$;
        \item $\strategySet$ be the set of all strategies a party may employ;
        \item $\strategySet_{\infractionPredicate}$ be the set of $\compliant$ strategies.
    \end{inparaenum}
   \par A strategy profile $\profile \in \strategySet^\totalParties$ is $\compliant$ if $\profile \in (\strategySet_{\infractionPredicate})^\totalParties$.
   \par The \emph{$(\epsilon,\bar{\utility})$-cone of $\proto$}, denoted by $\mathsf{Cone}_{\epsilon,\bar{U}}(\proto)$, is the set $\mathsf{Cone}_{\epsilon, \bar{\utility}}(\profile_\proto)$, \ie the set of all strategies that are $\epsilon$-reachable from the ``all honest'' strategy profile $\profile_\proto = \langle \proto, \ldots, \proto \rangle$ \wrt $\bar{\utility}$, including $\profile_\proto$.
   \par $\proto$ is \emph{$(\epsilon,\infractionPredicate)$-compliant \wrt $\bar{\utility}$} if $\mathsf{Cone}_{\epsilon,\bar{U}}(\proto) \subseteq (\strategySet_{\infractionPredicate})^\totalParties$, \ie all strategy profiles in the $(\epsilon,\bar{\utility})$-cone of $\proto$ are $\compliant$.
\end{definition}

\subsection{Compliance and Approximate Nash Equilibria}\label{subsec:char_app_Nash}

In this subsection, we show that a protocol is an $\epsilon$-Nash equilibrium \wrt some utility if and only if it is $(\epsilon,\infractionPredicate)$-compliant \wrt the same utility, for any associated infraction predicate $\infractionPredicate$. We begin by proving a useful lemma stating that a protocol is an approximate Nash equilibrium if and only if the cone of the protocol includes only the all-honest strategy profile.

\begin{lemma}\label{lem:app_Nash_singleton}
    Let:
    \begin{inparaenum}[i)]
        \item $\epsilon$ be a non-negative real number;
        \item $\proto$ be a protocol run by the parties $\party_1, \ldots, \party_\totalParties$;
        \item $\bar{\utility}=\langle \utility_1, \ldots, \utility_\totalParties \rangle$ be a utility vector, with $\utility_i$ the utility of $\party_i$.
    \end{inparaenum}
    Then, $\proto$ is an $\epsilon$-Nash equilibrium \wrt $\bar{\utility}$ (\ie $\profile_\proto=\langle \proto, \ldots, \proto \rangle$ is an $\epsilon$-Nash equilibrium \wrt $\bar{\utility}$) if and only if the $(\epsilon,\bar{\utility})$-cone of $\proto$, $\mathsf{Cone}_{\epsilon,\bar{U}}(\proto)$, is the singleton $\{\profile_\proto\}$.
\end{lemma}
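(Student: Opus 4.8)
The plan is to prove both implications by reducing the global statement about the whole cone to a \emph{local} statement about profiles that are directly $\epsilon$-reachable from $\profile_\proto$. First I would record the elementary observation that $\mathsf{Cone}_{\epsilon,\bar{\utility}}(\proto)$ always contains $\profile_\proto$ by definition, so the cone equals the singleton $\{\profile_\proto\}$ if and only if no profile other than $\profile_\proto$ is $\epsilon$-reachable from $\profile_\proto$. Since $\epsilon$-reachability is defined through finite paths of directly $\epsilon$-reachable profiles (Definition~\ref{def:reach}, item 2), any such path of length $\geq 2$ must begin with a profile that is directly $\epsilon$-reachable from $\profile_\proto$; conversely, if no profile is directly $\epsilon$-reachable from $\profile_\proto$, no nontrivial path can leave $\profile_\proto$. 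Hence the claim reduces to: \emph{$\profile_\proto$ is an $\epsilon$-Nash equilibrium if and only if no profile is directly $\epsilon$-reachable from $\profile_\proto$.}

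For the forward direction I would argue the contrapositive of the local statement. Suppose some $\profile'$ is directly $\epsilon$-reachable from $\profile_\proto$, witnessed by a party $\party_i$. By condition (i) of direct reachability, $\profile' = (\strategy'_i, \proto_{-i})$ for some $\strategy'_i$, and by condition (ii), $\utility_i(\profile') > \utility_i(\profile_\proto) + \epsilon \geq \utility_i(\profile_\proto)$ (using $\epsilon \geq 0$); in particular $\strategy'_i \neq \proto$. This contradicts the $\epsilon$-Nash condition of Definition~\ref{def:equilibrium}, which for $\strategy'_i \in \strategySet \setminus \{\proto\}$ would force $\utility_i(\strategy'_i, \proto_{-i}) \leq \utility_i(\profile_\proto) + \epsilon$. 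Thus, if $\profile_\proto$ is an $\epsilon$-Nash equilibrium, nothing is directly $\epsilon$-reachable from it, and the cone is a singleton.

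For the reverse direction I would again use the contrapositive: assuming $\profile_\proto$ is \emph{not} an $\epsilon$-Nash equilibrium, there exist a party $\party_i$ and a strategy $\strategy_i \neq \proto$ with $\utility_i(\strategy_i, \proto_{-i}) > \utility_i(\profile_\proto) + \epsilon$. Let $\strategy^*_i$ be a best response of $\party_i$ against $\proto_{-i}$, \ie a maximizer of $\utility_i(\cdot, \proto_{-i})$ over $\strategySet$. Then $(\strategy^*_i, \proto_{-i})$ satisfies all three conditions of direct $\epsilon$-reachability: it fixes the other parties' strategies (i); it achieves $\utility_i(\strategy^*_i, \proto_{-i}) \geq \utility_i(\strategy_i, \proto_{-i}) > \utility_i(\profile_\proto) + \epsilon$ (ii); and it is by construction a best response (iii). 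Since its utility strictly exceeds that of $\profile_\proto$, it is distinct from $\profile_\proto$, so the cone is not a singleton, completing the equivalence.

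The step I expect to be the main obstacle is the reverse direction, specifically promoting the witnessing deviation to an actual \emph{best response}: condition (iii) demands a maximizer, so the argument relies on the existence of a best response of $\party_i$ against $\proto_{-i}$. If $\strategySet$ is infinite and the supremum of $\utility_i(\cdot,\proto_{-i})$ is not attained, then no profile would be directly reachable even though profitable $\epsilon$-deviations exist, and the ``only if'' direction could fail. I would therefore make explicit the (mild, and already implicit) hypothesis that best responses exist in the settings considered — which holds, \eg whenever $\strategySet$ is finite or the relevant utility attains its maximum — so that any profitable $\epsilon$-deviation can be replaced by a best-responding one of at least as high utility. With that in hand, both directions close.
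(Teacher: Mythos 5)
Your proposal is correct and takes essentially the same route as the paper's own proof: the forward direction rules out any directly $\epsilon$-reachable profile via the $\epsilon$-Nash condition (hence the cone collapses to $\{\profile_\proto\}$), and the reverse direction argues the contrapositive by promoting a profitable unilateral deviation to a best response, which then witnesses direct $\epsilon$-reachability. Your closing remark about the existence of a best response is a genuine (if mild) hypothesis that the paper uses silently --- it writes ``consider the strategy profile $\profile^{**}$ that sets the best response'' without justifying that such a maximizer exists --- so making it explicit is a small improvement rather than a deviation.
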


\begin{proof}$(\Rightarrow)$: Assume that $\profile_\proto$ is an $\epsilon$-Nash equilibrium \wrt $\bar{\utility}$ and let $\profile=\langle\strategy_1, \ldots, \strategy_\totalParties\rangle\neq\profile_\proto$ be a strategy profile s.t. there exists $i \in [\totalParties]$ s.t. $\forall j \in [\totalParties] \setminus\{i\}: \strategy_j = \proto$. We will show that $\profile$ is not directly $\epsilon$-reachable from $\profile_\proto$ \wrt $\bar{\utility}$.
Since $\profile_\proto$ is an $\epsilon$-Nash equilibrium \wrt $\bar{\utility}$, it holds that $\utility_i(\profile) \leq \utility_i(\profile_\proto) + \epsilon$. Therefore, Definition~\ref{def:reach} is not satisfied and $\profile$ is not directly $\epsilon$-reachable from $\profile_\proto$ \wrt $\bar{\utility}$.
Since no strategy profiles are directly $\epsilon$-reachable from $\profile_\proto$ \wrt $\bar{\utility}$, it is straightforward that there are no $\epsilon$-reachable strategy profiles from $\profile_\proto$ \wrt $\bar{\utility}$. The latter implies that the $(\epsilon, \bar{\utility})$-cone of $\proto$ contains only $\profile_\proto$, \ie $\mathsf{Cone}_{\epsilon,\bar{\utility}}(\proto) = \{ \profile_\proto\}$. \\[2pt]

$(\Leftarrow)$: Assume that $\profile_\proto$ is not an $\epsilon$-Nash equilibrium \wrt $\bar{\utility}$. This means that there exists a strategy profile $\profile^*=\langle\strategy^*_1, \ldots, \strategy^*_\totalParties\rangle$ s.t. there exists $i \in [\totalParties]$ s.t. $\forall j \in [\totalParties] \setminus\{i\}: \strategy^*_j = \proto$ and it holds that $\utility_i(\profile^*) > \utility_i(\profile_\proto) + \epsilon$. Among all unilateral deviations of $\party_i$, consider the strategy profile $\profile^{**}$ that sets the best response for $\party_i$ (that may not necessarily be $\profile^*)$. We directly get that $\utility_i(\profile^{**})\geq\utility_i(\profile^*) > \utility_i(\profile_\proto) + \epsilon$.
Thus, by Definition~\ref{def:reach}, $\profile^{**}$ is (directly) $\epsilon$-reachable from $\profile_\proto$ \wrt $\bar{\utility}$, i.e. $\profile^{**}\in\mathsf{Cone}_{\epsilon,\bar{\utility}}(\proto)$, which implies that $\{ \profile_\proto\}\subsetneq\mathsf{Cone}_{\epsilon,\bar{\utility}}(\proto)$.

\end{proof}

The statement in Lemma~\ref{lem:app_Nash_singleton} resembles the well-known statement that a pure Nash equilibrium is a sink equilibrium that contains a single strategy profile~\cite{sink}. Nonetheless, there are differences between the notions of a sink equilibrium and a cone. Recall that a sink equilibrium is a strongly connected component of the strategy profile graph that has no outgoing edges. On the other hand, according to Definition~\ref{def:reach}, the subgraph induced by the nodes of a cone of a strategy profile $\profile$ may not even be strongly connected (e.g., the cone could be a subtree rooted at $\profile$).

By applying Lemma~\ref{lem:app_Nash_singleton}, we prove the main result of this subsection.

\begin{theorem}\label{thm:eq_comp}
    Let:
    \begin{inparaenum}[i)]
        \item $\epsilon$ be a non-negative real number;
        \item $\proto$ be a protocol run by the parties $\party_1, \ldots, \party_\totalParties$;
        \item $\bar{\utility}=\langle \utility_1, \ldots, \utility_\totalParties \rangle$ be a utility vector, with $\utility_i$ the utility of $\party_i$.
    \end{inparaenum}
    Then, $\proto$ is an $\epsilon$-Nash equilibrium \wrt $\bar{\utility}$ if and only if $\proto$ is $(\epsilon, \infractionPredicate)$-compliant \wrt $\bar{\utility}$ for any associated infraction predicate~$\infractionPredicate$.
\end{theorem}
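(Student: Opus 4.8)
The plan is to reduce the theorem to Lemma~\ref{lem:app_Nash_singleton}, which already characterizes the $\epsilon$-Nash equilibrium property as the cone being the singleton $\{\profile_\proto\}$. Reading ``for any associated infraction predicate'' as a universal quantifier, I would prove the two implications separately: (a) if $\proto$ is an $\epsilon$-Nash equilibrium then it is $(\epsilon,\infractionPredicate)$-compliant \wrt $\bar{\utility}$ for every $\infractionPredicate$, and (b) if $\proto$ is $(\epsilon,\infractionPredicate)$-compliant for every $\infractionPredicate$ then it is an $\epsilon$-Nash equilibrium.

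For direction (a) I would apply Lemma~\ref{lem:app_Nash_singleton} to get $\mathsf{Cone}_{\epsilon,\bar{U}}(\proto) = \{\profile_\proto\}$, and then observe that the all-honest profile is $\compliant$ for every infraction predicate. Indeed, by Definition~\ref{def:infraction-predicate} we have $\infractionPredicate(\executionTrace,\party)=0$ for every trace and every party employing $\proto$, so each coordinate of $\profile_\proto$ is an $\infractionPredicate$-compliant strategy (Definition~\ref{def:compliant-strategy}), i.e.\ $\profile_\proto \in (\strategySet_{\infractionPredicate})^\totalParties$. Hence $\mathsf{Cone}_{\epsilon,\bar{U}}(\proto) = \{\profile_\proto\} \subseteq (\strategySet_{\infractionPredicate})^\totalParties$, which is exactly $(\epsilon,\infractionPredicate)$-compliance, and this holds for any $\infractionPredicate$.

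For direction (b) I would argue the contrapositive. Suppose $\proto$ is not an $\epsilon$-Nash equilibrium; by Lemma~\ref{lem:app_Nash_singleton}, $\{\profile_\proto\} \subsetneq \mathsf{Cone}_{\epsilon,\bar{U}}(\proto)$, and following that lemma's proof there is a profile $\profile^{**}$ directly $\epsilon$-reachable from $\profile_\proto$ in which a single party $\party_i$ plays some $\strategy_i^{**} \neq \proto$ (all others play $\proto$) with $\utility_i(\profile^{**}) > \utility_i(\profile_\proto)+\epsilon \geq \utility_i(\profile_\proto)$. The key step is to turn this utility-improving deviation into a trace-level detectable one: the strict gap forces a trace $\executionTrace$ arising in $\profile^{**}$ on which $\party_i$'s emitted messages are inconsistent with the prescription of $\proto$ on $\party_i$'s view. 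I would then exhibit the witnessing predicate $\infractionPredicate^{*}$ defined by $\infractionPredicate^{*}(\executionTrace,\party)=1$ iff at some slot of $\executionTrace$ the messages $\party$ sends deviate from those $\proto$ would produce given $\party$'s view up to that slot. This $\infractionPredicate^{*}$ satisfies Definition~\ref{def:infraction-predicate}, since a party employing $\proto$ never departs from $\proto$'s prescription and is thus never flagged, yet $\strategy_i^{**}$ is not $\infractionPredicate^{*}$-compliant because of $\executionTrace$. Therefore $\profile^{**} \notin (\strategySet_{\infractionPredicate^{*}})^\totalParties$ while $\profile^{**} \in \mathsf{Cone}_{\epsilon,\bar{U}}(\proto)$, so $\proto$ fails to be $(\epsilon,\infractionPredicate^{*})$-compliant for this particular associated predicate.

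The main obstacle is exactly the passage in direction (b) from a utility-improving deviation to a trace-detectable one that a legitimate infraction predicate can flag. The crux is the observation that if $\strategy_i^{**}$ induced the same observable messages as $\proto$ on every trace against the same opponents, then $\profile^{**}$ and $\profile_\proto$ would yield identically distributed executions and hence equal utility for $\party_i$, contradicting $\utility_i(\profile^{**}) > \utility_i(\profile_\proto)$. This guarantees a trace on which $\infractionPredicate^{*}$ is designed to fire while staying $0$ on all honest behavior, which is what reconciles Definition~\ref{def:infraction-predicate} with non-compliance of the deviation; I would make sure to state carefully how ``$\party$ employs $\strategy$'' is read at the level of observable messages so that this detectability argument is sound.
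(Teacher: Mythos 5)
Your overall route is the same as the paper's: your direction (a) is exactly the paper's forward direction (Lemma~\ref{lem:app_Nash_singleton} plus the observation that $\profile_\proto$ is $\compliant$ for every $\infractionPredicate$), and your direction (b) is the contrapositive of the paper's backward direction, with your witnessing predicate $\infractionPredicate^{*}$ playing the role of the paper's ``all honest'' predicate $\infractionPredicate_\mathsf{hon}$, which returns $0$ on a trace exactly when the party follows $\proto$ in that trace. The place where you differ --- and where there is a genuine gap --- is that you define $\infractionPredicate^{*}$ only over the \emph{messages} a party sends, and you justify detectability by claiming that a message-identical deviation would induce identically distributed executions and hence equal utility.

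That claim is false in this model. A trace records more than messages: it records oracle queries (this is essential elsewhere in the paper, e.g., $\infractionPredicate_\mathrm{abs}$ in Definition~\ref{def:blockchain-infraction} is defined in terms of queries to $\oracle_\proto$), and by Assumption~\ref{ass:zero-cost} the cost random variable --- hence the utility \emph{Profit} --- is a function of the number of queries, not of the messages sent. So a unilateral deviation can emit exactly the messages $\proto$ prescribes in every trace while changing only its query pattern (e.g., skipping prescribed queries that never result in messages), thereby strictly increasing Profit without ever being flagged by a message-level predicate. In that case your contradiction (``message-identical $\Rightarrow$ equal utility'') does not go through, and $\infractionPredicate^{*}$ fails to witness non-compliance of the improving deviation. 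The fix is what the paper's $\infractionPredicate_\mathsf{hon}$ implicitly does: define the predicate over the party's \emph{full trace-level behavior} --- messages \emph{and} oracle queries --- i.e., $\infractionPredicate^{*}(\executionTrace,\party)=1$ iff $\party$'s recorded actions in $\executionTrace$ differ from what $\proto$ prescribes given $\party$'s view. This still satisfies Definition~\ref{def:infraction-predicate}, and since both rewards and costs are functions of the trace, a strategy that is trace-identical to $\proto$ in every execution yields the same utility as $\proto$, which is exactly the detectability statement your argument needs. (A caveat your proof shares with the paper's: any such argument implicitly identifies strategies with their trace-level behavior, since a predicate, being a function of traces, cannot separate a strategy $\strategy\neq\proto$ from $\proto$ if the two act identically in every reachable trace; but once utility is trace-determined, such a strategy cannot be a strictly improving deviation, so nothing is lost.)
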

\begin{proof}$(\Rightarrow)$: Assume that $\profile_\proto$ is an $\epsilon$-Nash equilibrium \wrt $\bar{\utility}$. By Lemma~\ref{lem:app_Nash_singleton}, we have that $\mathsf{Cone}_{\epsilon,\bar{\utility}}(\proto) = \{ \profile_\proto\}$. By Definitions~\ref{def:infraction-predicate} and~\ref{def:compliant-strategy}, we have that for every infraction predicate $\infractionPredicate$ $\profile_\proto$ is $\compliant$, so we deduce that $\mathsf{Cone}_{\epsilon,\bar{\utility}}(\proto) = \{ \profile_\proto\}\subseteq (\strategySet_{\infractionPredicate})^\totalParties$, i.e., the protocol $\proto$ is $(\epsilon,\infractionPredicate)$-compliant w.r.t. $\bar{\utility}$.\\[2pt]
$(\Leftarrow)$: Assume that $\proto$ is $(\epsilon, \infractionPredicate)$-compliant \wrt $\bar{\utility}$ for every associated infraction predicate $\infractionPredicate$. We define the ``all honest'' infraction predicate, $\infractionPredicate_\mathsf{hon}$ as follows: for every party $\party$ and every trace $\executionTrace$,
\begin{equation*}
\infractionPredicate_\mathsf{hon}(\executionTrace,\party):=
 \left\{\begin{array}{ll}
  0,&\mbox{if $\party$ follows $\proto$ in }\executionTrace\\
  1,&\mbox{otherwise}
\end{array}
\right.\;.
\end{equation*}

Clearly, $\infractionPredicate_\mathsf{hon}$ satisfies Definition~\ref{def:infraction-predicate}, i.e., it is indeed an infraction predicate. By the definition of $\infractionPredicate_\mathsf{hon}$, for every strategy $\strategy\neq\proto$, it holds that $\infractionPredicate_\mathsf{hon}(\executionTrace,\party)=1$ for every party $\party$ and every trace $\executionTrace$ where $\party$ follows $\strategy$ in $\executionTrace$. The latter implies that for every strategy $\strategy\neq\proto$ it holds that $S\notin\strategySet_{\infractionPredicate_\mathsf{hon}}$ (i.e., $S$ is not $\infractionPredicate_\mathsf{hon}$-compliant), which in turn implies that for every strategy profile $\profile\neq \profile_\proto=\langle\proto,\ldots,\proto\rangle$ it holds that $\profile\notin(\strategySet_{\infractionPredicate_\mathsf{hon}})^\totalParties$. On the other hand, it is straightforward that $\profile_\proto\in(\strategySet_{\infractionPredicate_\mathsf{hon}})^\totalParties$.

By the above, we have that $(\strategySet_{\infractionPredicate_\mathsf{hon}})^\totalParties=\{\profile_\proto\}$. By assumption, we have that $\proto$ is $(\epsilon, \infractionPredicate_\mathsf{hon})$-compliant \wrt $\bar{\utility}$, i.e. $\mathsf{Cone}_{\epsilon,\bar{\utility}}(\proto)\subseteq (\strategySet_{\infractionPredicate_\mathsf{hon}})^\totalParties=\{\profile_\proto\}$. Besides, by Definition~\ref{def:reach}, we know that $\{\profile_\proto\}\subseteq\mathsf{Cone}_{\epsilon,\bar{\utility}}(\proto)$, thus $\mathsf{Cone}_{\epsilon,\bar{\utility}}(\proto)=\{\profile_\proto\}$. By Lemma~\ref{lem:app_Nash_singleton}, we conclude that $\proto$ is an $\epsilon$-Nash equilibrium \wrt $\bar{\utility}$.

\end{proof}

According to the equivalence proven in Theorem~\ref{thm:eq_comp}, the property that ``a protocol is an approximate Nash equilibrium'' can be interpreted as a composition of all possible
statements that ``it is not in any party's interest to be non-compliant'', however compliance is specified by the associated infraction predicate.
\paragraph{Remark.}
It is easy to see that $\infractionPredicate$-compliance is a \emph{strict relaxation} of the approximate Nash equilibrium notion. For example, consider a protocol $\proto^*$ that is not an $\epsilon^*$-Nash equilibrium \wrt some utility $\bar{\utility}^*$ (cf. Theorem~\ref{thm:universal-profit} for such a counterexample). Now set $\infractionPredicate^*$ to be the predicate that always returns $0$. Clearly, by Definitions~\ref{def:compliant-strategy} and~\ref{def:compliant}, the protocol $\proto^*$ is $(\epsilon^*, \infractionPredicate^*)$-compliant \wrt $\bar{\utility^*}$.

\section{Blockchain Protocols}\label{sec:blockchains}

In this work, we focus on blockchain-based distributed ledger protocols.  In
the general case, a ledger defines a global state, which is distributed across
multiple parties and is maintained via a consensus protocol. The distributed
ledger protocol defines the validity rules which allow a party to extract the
final ledger from its view. A
blockchain is a distributed database, where each message $\mesg$ is a block
$\block$ of transactions and each transaction updates the system's global
state. Therefore, at any point of the execution, a party $\party$ holds some view of
the global state, which comprises of the blocks that $\party$ has
adopted. We note that, if at least one valid block is diffused (\wrt the
validity rules of the protocol), then every honest party can
extract a final ledger from its execution view.

\subsection{The Setting}\label{subsec:setting}

Every blockchain protocol $\proto$ defines a \emph{message validity} predicate
$\messageValidityPredicate$.  Party $\party$ accepts block $\block$,
received during a time slot $\slot$, if
$\messageValidityPredicate(\executionTrace^{\party}_\slot, \block) = 1$. For example, in Proof-of-WorK
(PoW) systems like Bitcoin, a block is valid if its
hash is below a certain threshold; in Proof-of-Stake (PoS) protocols like
Ouroboros~\cite{C:KRDO17}, a block is valid if it was created by a
specific party, given a known leader schedule. In all cases,
$\block$ is valid if its creator submits at least one query for
$\block$ to $\oracle_\proto$.

Each block $\block$ is associated with the following metadata:
\begin{inparaenum}[i)]
    \item an index $\mathit{index}(\block)$;
    \item the party $\mathit{creator}(\block)$ that created $\block$;
    \item a set $\mathit{ancestors}(\block) \subseteq
    \executionTrace^{\mathit{creator}(\block)}$, \ie blocks in the view of
    $\mathit{creator}(\block)$ (at the time of $\block$'s creation)
    referenced by $\block$.
\end{inparaenum}
Message references are implemented as hash pointers, given a hash function
$\hash$ employed by the protocol. Specifically, each block $\block$ contains the
hash of all blocks in the referenced blocks $\mathit{ancestors}(\block)$.
Blockchain systems
are typically bootstrapped via a global common reference string, \ie a
``genesis'' block $\genesis$.
Therefore, the blocks form a hash tree, stemming from $\genesis$ and
$\mathit{index}(\block)$ is the height of $\block$ in the hash tree. If
$\block$ references multiple messages, \ie belongs to multiple tree branches,
$\mathit{index}(\block)$ is the height of the longest one.

The protocol also defines the \emph{message equivalency operator}, $\equiv$. Specifically, two messages are equivalent if
their hashes match, \ie $\mesg_1 \equiv \mesg_2 \Leftrightarrow \hash(\mesg_1) =
\hash(\mesg_2)$. At a high level, two equivalent messages are
interchangeable by the protocol.

\paragraph{Infraction Predicate.}\label{sec:blockchain-infraction-predicate}
In our analysis of blockchain systems, we will consider two types of deviant
behavior (Definition~\ref{def:blockchain-infraction}):
\begin{inparaenum}[i)]
    \item creating conflicting valid messages of same origin, and
    \item abstaining.
\end{inparaenum}
We choose these predicates because they may lead to non-compliance in
interesting use cases. The former refers to the widely discussed topic in
blockchain systems of one participant extending two conflicting transaction
histories. The latter deals with the issue of participants who intermittently
engage in the system's maintenance, thus potentially hurting the safety of the
deployed system; in particular, the more users participate in maintenance, the
higher the level of resources that an adversary needs to reach to break a
system's security. Other infraction predicates are of course also possible to
define --- see Section~\ref{sec:conclusion}.

\begin{definition}[Blockchain Infraction Predicate]\label{def:blockchain-infraction}
    Given a party $\party$ and an execution trace $\executionTrace$,
    we define the following infraction predicates:
    \begin{enumerate}
        \item
            \emph{conflicting predicate}: $\infractionPredicate_\mathrm{conf}(\executionTrace, \party) = 1$ if
            there exist blocks $\block, \block'
            \in \executionTrace$ such that $\mathit{creator}(\block) = \mathit{creator}(\block') = \party\land\messageValidityPredicate(\executionTrace^{\party}, \block)=\messageValidityPredicate(\executionTrace^{\party}, \block')=1\land
            \mathit{index}(\block) = \mathit{index}(\block') \land \block \not \equiv
            \block';$
        \item
            \emph{abstaining predicate}: $\infractionPredicate_\mathrm{abs}(\executionTrace, \party) = 1$ if
            there exists a time slot $\slot$ such that $\party$ makes \emph{no} queries to oracle $\oracle_\proto$
            during $\slot$;
        \item
            \emph{blockchain predicate}: $\infractionPredicate_\mathrm{bc}(\executionTrace, \party) = 1$ if
            $(\infractionPredicate_\mathrm{conf}(\executionTrace, \party) = 1) \lor (\infractionPredicate_\mathrm{abs}(\executionTrace,\party) = 1)$.
    \end{enumerate}
\end{definition}

We note that preventing conflicting messages is not the same
as resilience against Sybil attacks~\cite{douceur2002sybil}. The latter
restricts an attacker from creating multiple identities.
Instead, our infraction predicate ensures that a user does not increase their
utility by creating conflicting messages with one of its identities.
Thus, a system may be compliant but not Sybil resilient, \eg if a party
participates via multiple identities without increasing its utility via
conflicting messages.

Finally, at the end of the execution, the observer $\observer$ outputs a chain
$\chain_{\observer, \executionTrace}$. Typically, this is the longest valid chain, \ie the longest
branch of the tree that stems from genesis $\genesis$.\footnote{We assume that the longest chain (in
blocks) contains the most hashing power, which is the metric used in PoW
systems.} In case multiple longest chains exist, a choice is made
either at random or following a chronological ordering of messages. The number
of messages in $\chain_{\observer, \executionTrace}$ that are created by a party $\party$ is
denoted by $\msgOutputSet_{\party, \executionTrace}$.

\subsection{Utility: Rewards and Costs}\label{sec:blockchain-utility}

For each execution, the blockchain protocol defines a number of total rewards,
which are distributed among the participating parties. For each party $\party$,
these rewards are expressed via the \emph{reward random variable}
$\reward_{\party, \execution_{\env, \adversary, \profile}}$.  For a specific trace $\executionTrace_{\env, \adversary, \profile}$, the
random variable takes a non-negative real value, denoted by $\reward_{\party,
\executionTrace_{\env, \adversary, \profile}}$. Intuitively, $\reward_{\party, \executionTrace_{\env, \adversary, \profile}}$ describes
the rewards that $\party$ receives from the protocol from the point of view of
the observer $\observer$, \ie \wrt the blocks output by $\observer$ at the end
of the execution.

Our analysis is restricted to systems where rewards are distributed to parties if
and only if the genesis block is extended by at least one block during the
execution, in which case at least one party receives a non-negative amount of
rewards (Assumption~\ref{ass:reward-distribution}).

\begin{assumption}\label{ass:reward-distribution}
    Let $\executionTrace$ be an execution trace. If no block is produced during $\executionTrace$, then it
    holds that $\forall \party \in \partySet: \reward_{\party, \executionTrace} =
    0$.
    If at least one block is produced during $\executionTrace$, then it
    holds that $\exists \party \in \partySet: \reward_{\party, \executionTrace} \neq
    0$.
\end{assumption}

In addition to rewards, a party's utility is affected by cost. Specifically,
the \emph{cost random variable} $\cost_{\party, \execution_{\env, \adversary, \profile}}$ expresses the
operational cost of $\party$ during an execution $\execution_{\env, \adversary, \profile}$. For a fixed
trace $\executionTrace_{\env, \adversary, \profile}$, $\cost_{\party, \executionTrace_{\env, \adversary, \profile}}$ is a non-negative
real value. Our analysis is restricted to cost schemes which are \emph{linearly
monotonically increasing} in the number of queries that a party makes to the
oracle $\oracle_\proto$, with no queries incurring zero cost
(Assumption~\ref{ass:zero-cost}). Intuitively, this assumption considers the
electricity cost of participation, while the cost of equipment and other
operations, such as parsing or publishing messages, is zero.

\begin{assumption}\label{ass:zero-cost}
    For every execution trace $\executionTrace$, a party $\party$'s cost is
    $\cost_{\party, \executionTrace} = 0$ if and only if it performs no queries
    to $\oracle_\proto$ in every time slot. Else, if during $\executionTrace$ a party $\party$ performs
    $t$ queries, then its cost is $\cost_{\party, \executionTrace} = t \cdot \queryCost$, for some fixed parameter $\queryCost$.
\end{assumption}

We define two types of utility. First is \emph{Reward}, \ie the expected
rewards that a party receives when the cost is $0$. Second is \emph{Profit},
\ie rewards minus participation cost.

\begin{definition}\label{def:utility}
    Let $\profile$ be a strategy profile and $\execution_{\env, \adversary, \profile}$ be an
    execution during which parties follow $\profile$. We define two types of
    blockchain utility $\utility_{\party}$ of a party $\party$ for $\profile$:
    \begin{enumerate}
        \item \emph{Reward}: $\utility_{\party}(\profile) = E[\reward_{\party, \execution_{\env, \adversary, \profile}}]$
        \item \emph{Profit}: $\utility_{\party}(\profile) = E[\reward_{\party, \execution_{\env, \adversary, \profile}}] - E[\cost_{\party, \execution_{\env, \adversary, \profile}}]$
    \end{enumerate}
\end{definition}

For the computation of $\utility_{\party}$, the environment $\env$ and the
router $\adversary$ are fixed. Therefore, the expectation of the random
variables $\reward_{\party, \execution_{\env, \adversary, \profile}}$ and
$\cost_{\party, \execution_{\env, \adversary, \profile}}$ is computed over the
random coins of $\env$, $\adversary$, $\oracle_\proto$, and every party $\party
\in \partySet$. Intuitively, a party's utility depends on both their strategy
choice and the underlying network (expressed via the router). As such,
different routers may yield different optimal strategies for parties to employ
and possibly different equilibria ceteris paribus.
Following, we evaluate the compliance of various Proof-of-Work (PoW) and Proof-of-Stake (PoS)
blockchain protocols \wrt two types of rewards, \emph{resource-proportional} and
\emph{block-proportional}.

\section{Resource-Proportional Rewards}\label{sec:universal}

As described in Section~\ref{sec:model}, a party $\party$ controls a percentage
$\miningpower_{\party}$ of the system's participating power. Although
this is set at the beginning of the execution, it is not always
public. For instance, $\party$ could obscure its amount of hashing power by
refraining from performing some queries. In some cases, each
party's power is published on the ledger and, for all executions,
can be extracted from the observer's chain. This is the case in
non-anonymous PoS ledgers, where each party's power, denoted by
its assets, is logged in real time on the ledger.

These systems, where power distribution is public, can employ a special
type of rewards, \emph{resource-proportional rewards}.
Specifically, the system
defines a fixed, total number of rewards $\totalReward > 0$. At the end
of an execution, if at least one block is created, each party $\party$ receives
a percentage $\universalRewardFunc(\miningpower_{\party})$ of
$\totalReward$, where $\universalRewardFunc(\cdot): [0, 1] \rightarrow [0, 1]$;
in the real world, $\universalRewardFunc$ is usually the identity
function. If no blocks are created during the execution, then every party gets
$0$ rewards.

Intuitively, resource-proportional rewards (Definition~\ref{def:universal-rewards}) compensate users for investing in the
system. Unless no block is created (which typically happens with negligible
probability when the parties follow the protocol), the
reward level depends \emph{solely} on a party's power, instead of the messages
diffused in the execution.

\begin{definition}[Resource-proportional Rewards]\label{def:universal-rewards}
    For a total number of rewards $\totalReward \in \mathbb{R}_{> 0}$ and $\universalRewardFunc: [0, 1] \rightarrow [0, 1]$ such that $\sum_{\party \in \partySet} \universalRewardFunc(\miningpower_{\party}) = 1$, a
    \emph{resource-proportional reward} random variable $\reward_{\party, \execution}$ satisfies
    the following:
    \begin{equation*}\forall \executionTrace \; \forall \party \in \partySet:
        \reward_{\party, \executionTrace} =
 \left\{\begin{array}{ll}
  \universalRewardFunc(\miningpower_{\party}) \cdot \totalReward,&\mbox{if there is at least one valid block in }\executionTrace\\
  0,&\mbox{otherwise}
\end{array}
\right.\end{equation*}

\end{definition}

As shown in Theorem~\ref{thm:universal-reward}, blockchains with resource-proportional
rewards are $\epsilon$-Nash equilibria w.r.t. utility \emph{Reward} (Definition~\ref{def:utility}), with $\epsilon$ typically small. By Theorem~\ref{thm:eq_comp}, the latter implies that such protocols are also $(\epsilon,\infractionPredicate)$-compliant w.r.t. the same utility, where $\infractionPredicate$ is an arbitrary associated infraction predicate. Intuitively, a
party is rewarded the same amount regardless of their protocol-related actions,
so nobody can increase their rewards by deviating from the honest
strategy.

\begin{theorem}\label{thm:universal-reward}
    Let:
    \begin{inparaenum}[i)]
        \item $\proto$ be a blockchain protocol run by the parties $\party_1, \ldots, \party_\totalParties$;
        \item $\adversary$ be a synchronous router (cf. Section~\ref{sec:preliminaries});
        \item $\bar{\utility}=\langle \utility_1, \ldots, \utility_\totalParties \rangle$ be a utility vector, where $\utility_i$ is the utility \emph{Reward} of party $\party_i$;
        \item $\totalReward$ be the total rewards distributed by the protocol;
        \item $\universalRewardFunc: [0, 1] \rightarrow [0, 1]$ be a resource-proportional reward function;
        \item $\alpha$ be the probability that no blocks are produced when all parties follow the honest strategy.
    \end{inparaenum}
    Then, $\proto$ is an $\epsilon$-Nash equilibrium \wrt $\bar{\utility}$, for $\epsilon := \alpha \cdot \underset{j \in [\totalParties]}{\mathsf{max}}\{\universalRewardFunc(\miningpower_{\party_j}) \cdot \totalReward \}$.
\end{theorem}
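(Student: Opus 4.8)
The plan is to verify the $\epsilon$-Nash condition of Definition~\ref{def:equilibrium} directly: for every party $\party_i$ and every alternative strategy $\strategy_i \in \strategySet \setminus \{\proto\}$, I would show that $\utility_i(\proto, \proto_{-i}) \geq \utility_i(\strategy_i, \proto_{-i}) - \epsilon$, where $\proto_{-i}$ denotes the profile in which every party other than $\party_i$ plays honestly and $\epsilon = \alpha \cdot \max_{j \in [\totalParties]}\{\universalRewardFunc(\miningpower_{\party_j}) \cdot \totalReward\}$. Since all parties share the utility \emph{Reward}, the whole argument reduces to bounding, for a single deviating party, the difference between its expected reward when it deviates and when it stays honest.

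First I would compute the honest payoff. Under $\profile_\proto$, by the definition of $\alpha$ no valid block is produced with probability $\alpha$ and at least one valid block is produced with probability $1 - \alpha$. By Definition~\ref{def:universal-rewards}, in the former case $\reward_{\party_i, \executionTrace} = 0$ and in the latter $\reward_{\party_i, \executionTrace} = \universalRewardFunc(\miningpower_{\party_i}) \cdot \totalReward$, so taking expectations yields $\utility_i(\profile_\proto) = (1 - \alpha)\, \universalRewardFunc(\miningpower_{\party_i}) \cdot \totalReward$.

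The key observation --- and the step that really uses the resource-proportional structure --- is that a party's reward is insensitive to its own strategy. Because $\miningpower_{\party_i}$ is fixed at the start of the execution, Definition~\ref{def:universal-rewards} forces $\reward_{\party_i, \executionTrace}$ to take one of only two values, $0$ or $\universalRewardFunc(\miningpower_{\party_i}) \cdot \totalReward$, in every trace, and which of the two it is depends solely on whether some valid block exists in $\executionTrace$ --- never on who created it nor on how $\party_i$ behaved. Hence for \emph{any} profile of the form $(\strategy_i, \proto_{-i})$ we obtain the crude but sufficient bound $\utility_i(\strategy_i, \proto_{-i}) = E[\reward_{\party_i, \execution}] \leq \universalRewardFunc(\miningpower_{\party_i}) \cdot \totalReward$, since the reward variable never exceeds that value.

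Combining the two estimates, the gain from any unilateral deviation is at most $\universalRewardFunc(\miningpower_{\party_i}) \cdot \totalReward - (1-\alpha)\,\universalRewardFunc(\miningpower_{\party_i}) \cdot \totalReward = \alpha \cdot \universalRewardFunc(\miningpower_{\party_i}) \cdot \totalReward$, which is at most $\alpha \cdot \max_{j \in [\totalParties]}\{\universalRewardFunc(\miningpower_{\party_j}) \cdot \totalReward\} = \epsilon$. This gives $\utility_i(\strategy_i, \proto_{-i}) - \utility_i(\profile_\proto) \leq \epsilon$, i.e. the equilibrium inequality, for every $i$ and every $\strategy_i$, completing the argument. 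I do not expect a serious obstacle; the one point to state carefully is precisely the strategy-independence of the reward, since that is where the synchronous router and the resource-proportional reward function are essential --- under other reward schemes (e.g. the block-proportional rewards treated in later sections) the analogous step fails, which is exactly why those sections demand substantially more work.
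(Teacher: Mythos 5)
Your proposal is correct and follows essentially the same route as the paper's own proof: compute the honest payoff $(1-\alpha)\cdot\universalRewardFunc(\miningpower_{\party_i})\cdot\totalReward$, bound any unilateral deviation's payoff by the pointwise cap $\universalRewardFunc(\miningpower_{\party_i})\cdot\totalReward$ (which holds for all random coins because resource-proportional rewards depend only on whether some valid block exists, not on the deviator's behavior), and conclude the gain is at most $\alpha\cdot\universalRewardFunc(\miningpower_{\party_i})\cdot\totalReward\leq\epsilon$. Nothing further is needed.
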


\begin{proof}
    By Definition~\ref{def:universal-rewards} and the definition of $\alpha$, for the ``all honest'' strategy profile $\profile_\proto := \langle \proto, \ldots, \proto \rangle$, we have that $\Pr[\reward_{\party_i,\execution_{\profile_\proto}} = \universalRewardFunc(\miningpower_{\party_i}) \cdot \totalReward] = 1 - \alpha$ and $\Pr[\reward_{\party_i, \execution_{\profile_\proto}} = 0] = \alpha$, for every $i \in [\totalParties]$. Therefore, for every $i \in [\totalParties]$,
    $\utility_i(\profile_\proto) = E\big[\reward_{\party_i,\execution_{\profile_{\party}}}\big] = (1 - \alpha) \cdot \universalRewardFunc(\miningpower_{\party_i}) \cdot \totalReward$.

    Assume that for some $i \in [\totalParties]$, $\party_i$ unilaterally deviates
    from $\proto$ by employing a different strategy $\strategy_i$. In this case, we
    consider the strategy profile $\profile = \langle \strategy_1, \ldots, \strategy_n \rangle$
    where $\strategy_j = \proto$ for $j \in [\totalParties] \setminus \{i\}$. Since $\utility_i$ is the utility
    \emph{Reward} under fair rewards with $\totalReward, \universalRewardFunc(\cdot)$, we have that for all random coins of the execution $\execution_{\profile}$, the value of the reward random variable $\reward_{\party_i, \execution_{\profile}}$ is no more than $\universalRewardFunc(\miningpower_{\party_i}) \cdot \totalReward$.
    Consequently, $\utility_i(\profile) \leq \universalRewardFunc(\miningpower_{\party_i}) \cdot \totalReward$, and so we have that
    \[\utility_i(\profile) \leq \utility_i(\profile_\proto) + \alpha \cdot \universalRewardFunc(\miningpower_{\party_i}) \cdot \totalReward \leq \utility_i(\profile_\proto) + \alpha \cdot \underset{j \in [\totalParties]}{\mathsf{max}}\{\universalRewardFunc(\miningpower_{\party_j}) \cdot \totalReward \}\;.\]
    If $\epsilon := \alpha \cdot \underset{j \in [\totalParties]}{\mathsf{max}}\{\universalRewardFunc(\miningpower_{\party_j}) \cdot \totalReward \}$ and since $i$ and $\strategy_i$ are arbitrary, we conclude that $\proto$ is an $\epsilon$-Nash equilibrium
    \wrt $\bar{\utility}$.
\end{proof}

Theorem~\ref{thm:universal-reward} is consistent with the incentives' analysis
of~\cite{C:KRDO17} under resource-proportional rewards. However, when
introducing operational costs to analyze profit, a problem arises: a user can
simply abstain and be rewarded nonetheless. Such behavior results in a
``free-rider problem''~\cite{baumol2004welfare}, where a user reaps some
benefits while not under-paying them or not paying at all.
Theorem~\ref{thm:universal-profit} formalizes this argument and
shows that a blockchain protocol, associated with the abstaining infraction predicate $\infractionPredicate_\mathrm{abs}$ (cf. Definition~\ref{def:blockchain-infraction}), under resource-proportional rewards is \emph{not}
$(\epsilon, \infractionPredicate_\mathrm{abs})$-compliant \wrt utility \emph{Profit}, for reasonable values of $\epsilon$.

\begin{theorem}\label{thm:universal-profit}
Let:
    \begin{inparaenum}[i)]
        \item $\proto$ be a blockchain protocol run by the parties $\party_1, \ldots, \party_\totalParties$;
        \item $\adversary$ be a synchronous router (cf. Section~\ref{sec:preliminaries});
        \item $\bar{\utility}=\langle \utility_1, \ldots, \utility_\totalParties \rangle$ be a utility vector, where $\utility_i$ is the utility \emph{Profit} of party $\party_i$;
        \item $\totalReward$ be the total rewards distributed by the protocol;
        \item $\universalRewardFunc: [0, 1] \rightarrow [0, 1]$ be a resource-proportional reward function;
        \item $\alpha$ be the probability that no blocks are produced when all parties follow the honest strategy.
    \end{inparaenum}\\
    For $i \in [\totalParties]$, also let the following:
    \begin{inparaenum}[i)]
        \item $q$ be the maximum number of queries that a party can make to the oracle $\oracle_\proto$ in each time slot.
        \item $\cost$ be the cost of a single query to $\oracle_\proto$;
        \item $\cost_i$ be the expected cost of $\party_i$ when $\party_i$ employs $\proto$;
        \item $\beta_i$ be the probability that no blocks are produced when $\party_i$ abstains throughout the entire execution and all the other parties follow $\proto$.
    \end{inparaenum}\\
     Assume that for every $i\in[\totalParties]$, it holds that $\cost>\beta_{i}\cdot \universalRewardFunc(\miningpower_{\party_{i}}) \cdot \totalReward\cdot q$. Then, for every $\epsilon\geq0$ s.t.
     $\epsilon < \underset{i \in [\totalParties]}{\mathsf{max}}\{\cost_i - (\beta_i - \alpha) \cdot \universalRewardFunc(\miningpower_{\party_i}) \cdot \totalReward \}$,
     the protocol $\proto$ is \emph{not} $(\epsilon, \infractionPredicate_\mathrm{abs})$-compliant \wrt $\bar{\utility}$.
\end{theorem}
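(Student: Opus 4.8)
The plan is to refute compliance by exhibiting a single non-compliant profile inside $\mathsf{Cone}_{\epsilon,\bar{\utility}}(\proto)$, namely the profile $\profile^{\star}$ in which one carefully chosen party abstains for the \emph{entire} execution while everybody else stays honest. First I would fix an index $i^\star$ attaining $\max_{i\in[\totalParties]}\{\cost_i-(\beta_i-\alpha)\cdot\universalRewardFunc(\miningpower_{\party_i})\cdot\totalReward\}$ and set $\profile^{\star}=\langle\strategy_1,\ldots,\strategy_\totalParties\rangle$ with $\strategy_{i^\star}$ the ``abstain throughout'' strategy and $\strategy_j=\proto$ for $j\neq i^\star$. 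Because $\strategy_{i^\star}$ issues no oracle query in some (in fact every) slot, $\infractionPredicate_\mathrm{abs}(\executionTrace,\party_{i^\star})=1$ on every trace, so $\profile^{\star}\notin(\strategySet_{\infractionPredicate_\mathrm{abs}})^\totalParties$. It therefore suffices to prove that $\profile^{\star}$ is directly $\epsilon$-reachable from $\profile_\proto$, which by Definition~\ref{def:reach} reduces to the utility-gain condition and the best-response condition.

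For the utility-gain condition I would compute both utilities exactly. Under resource-proportional rewards (Definition~\ref{def:universal-rewards}) party $\party_{i^\star}$ collects $\universalRewardFunc(\miningpower_{\party_{i^\star}})\cdot\totalReward$ exactly when at least one valid block is produced and $0$ otherwise; as in Theorem~\ref{thm:universal-reward} this gives $\utility_{i^\star}(\profile_\proto)=(1-\alpha)\cdot\universalRewardFunc(\miningpower_{\party_{i^\star}})\cdot\totalReward-\cost_{i^\star}$. In $\profile^{\star}$ the abstaining party pays zero cost (Assumption~\ref{ass:zero-cost}) and a block fails to appear with probability $\beta_{i^\star}$, so $\utility_{i^\star}(\profile^{\star})=(1-\beta_{i^\star})\cdot\universalRewardFunc(\miningpower_{\party_{i^\star}})\cdot\totalReward$. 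Subtracting, the gain is exactly $\cost_{i^\star}-(\beta_{i^\star}-\alpha)\cdot\universalRewardFunc(\miningpower_{\party_{i^\star}})\cdot\totalReward$, which by the choice of $i^\star$ is the maximum appearing in the statement and hence strictly larger than $\epsilon$; I would also note that the hypothesis $\cost>\beta_{i^\star}\cdot\universalRewardFunc(\miningpower_{\party_{i^\star}})\cdot\totalReward\cdot\powQueryNum$ together with $\cost_{i^\star}\ge\cost$ guarantees this gain is positive, so that a non-negative $\epsilon$ below it indeed exists.

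The main obstacle is the best-response condition: that abstaining throughout is genuinely optimal for $\party_{i^\star}$, i.e. no unilateral strategy $\strategy_{i^\star}$ yields larger profit. This is precisely where the assumption $\cost>\beta_{i}\cdot\universalRewardFunc(\miningpower_{\party_{i}})\cdot\totalReward\cdot\powQueryNum$ is used. The structural fact I would exploit is that resource-proportional rewards make $\party_{i^\star}$'s reward equal to $\universalRewardFunc(\miningpower_{\party_{i^\star}})\cdot\totalReward$ times the indicator that \emph{some} block appears, independently of which blocks $\party_{i^\star}$ itself creates. Consequently extra queries can raise the reward only in the event---of probability at most $\beta_{i^\star}$---that no other party produces a block, so the expected reward of \emph{any} strategy exceeds the abstaining reward by at most $\beta_{i^\star}\cdot\universalRewardFunc(\miningpower_{\party_{i^\star}})\cdot\totalReward$. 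Since any non-abstaining strategy performs at least one query and hence costs at least $\cost$ (Assumption~\ref{ass:zero-cost}), and since the hypothesis gives $\cost>\beta_{i^\star}\cdot\universalRewardFunc(\miningpower_{\party_{i^\star}})\cdot\totalReward\cdot\powQueryNum\ge\beta_{i^\star}\cdot\universalRewardFunc(\miningpower_{\party_{i^\star}})\cdot\totalReward$, the marginal cost of querying dominates the marginal reward and every deviation from abstaining is strictly loss-making; the factor $\powQueryNum$ merely absorbs a coarser per-slot accounting of the reward increment. I would thus obtain $\utility_{i^\star}(\strategy_{i^\star},\proto_{-i^\star})\le\utility_{i^\star}(\profile^{\star})$ for every $\strategy_{i^\star}$, which is the best-response property.

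Combining the two checks, $\profile^{\star}$ is directly $\epsilon$-reachable from $\profile_\proto$ and therefore lies in $\mathsf{Cone}_{\epsilon,\bar{\utility}}(\proto)$; since $\profile^{\star}$ is not $\infractionPredicate_\mathrm{abs}$-compliant, the cone is not contained in $(\strategySet_{\infractionPredicate_\mathrm{abs}})^\totalParties$, so by Definition~\ref{def:compliant} the protocol $\proto$ is \emph{not} $(\epsilon,\infractionPredicate_\mathrm{abs})$-compliant \wrt $\bar{\utility}$. The one point I would treat with care is to phrase the marginal-reward bound purely in terms of Assumptions~\ref{ass:reward-distribution} and~\ref{ass:zero-cost} and the resource-proportional reward definition, so that the best-response argument is oracle-agnostic and the conclusion covers every such blockchain protocol rather than a specific PoW or PoS instantiation.
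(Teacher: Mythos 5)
Your overall skeleton --- exhibit a non-compliant profile that is directly $\epsilon$-reachable from $\profile_\proto$ --- and your computations of $\utility_{i^\star}(\profile_\proto)=(1-\alpha)\cdot\universalRewardFunc(\miningpower_{\party_{i^\star}})\cdot\totalReward-\cost_{i^\star}$ and $\utility_{i^\star}(\profile^\star)=(1-\beta_{i^\star})\cdot\universalRewardFunc(\miningpower_{\party_{i^\star}})\cdot\totalReward$ coincide exactly with the paper's. The genuine gap is your best-response step. You claim that abstaining throughout is optimal among \emph{all} unilateral deviations of $\party_{i^\star}$, on the grounds that ``any non-abstaining strategy performs at least one query and hence costs at least $\cost$.'' That cost bound is false for adaptive or randomized strategies, which need not query in every trace: consider a strategy that queries only in the (probability-$\beta_{i^\star}$) contingency in which the honest parties would produce no block --- \eg in a PoS protocol with a public leader schedule, the party can recognize exactly those executions where it alone is elected. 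Such a strategy has expected cost roughly $\beta_{i^\star}$ times the cost of querying throughout the execution, yet it recovers essentially the entire reward shortfall $\beta_{i^\star}\cdot\universalRewardFunc(\miningpower_{\party_{i^\star}})\cdot\totalReward$. The hypothesis $\cost>\beta_{i^\star}\cdot\universalRewardFunc(\miningpower_{\party_{i^\star}})\cdot\totalReward\cdot\powQueryNum$ does not exclude this (it does not imply that $\universalRewardFunc(\miningpower_{\party_{i^\star}})\cdot\totalReward$ is smaller than the cost of querying in every slot of the execution), so your intermediate claim ``full abstaining is a best response'' is not merely unproven --- it can actually be false for protocols satisfying every hypothesis of the theorem. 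Since Definition~\ref{def:reach} requires the reached profile to be a best response, your witness $\profile^\star$ may then fail to be directly $\epsilon$-reachable at all, and the proof collapses.

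The paper avoids this trap by never identifying the best response. Its Claim~\ref{claim:abstain_wins} compares abstaining only against $\infractionPredicate_\mathrm{abs}$-\emph{compliant} unilateral deviations $\profile$: by the definition of $\infractionPredicate_\mathrm{abs}$, a compliant strategy must query $\oracle_\proto$ in \emph{every} slot of \emph{every} trace, which yields the deterministic cost lower bound $E[\cost_{\party_{i^\star}, \execution_\profile}]\geq\frac{1}{\powQueryNum}\cdot\cost_{i^\star}$ --- precisely the per-slot guarantee that is unavailable for arbitrary strategies, and the true role of the factor $\powQueryNum$ in the hypothesis (not the ``coarser per-slot accounting'' you gesture at). Abstaining therefore strictly dominates every compliant deviation; hence whatever the best response $\profile^*$ is, its utility is at least that of abstaining, so $\profile^*$ cannot be compliant, and its gain over $\profile_\proto$ is at least the abstaining gain, hence exceeds $\epsilon$. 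Thus $\profile^*$ --- not $\profile^\star$ --- is the directly reachable non-compliant profile witnessing non-compliance. If you replace your best-response paragraph by this ``abstaining dominates all compliant deviations, so the best response is non-compliant'' argument, the remainder of your proof goes through unchanged.
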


\begin{proof}
    By Definition~\ref{def:universal-rewards} and the definition of $\alpha$, for the ``all honest'' strategy profile $\profile_{\proto} := \langle \proto, \ldots, \proto \rangle$, we have that $\Pr[\reward_{\party_i, \execution_{\profile_{\proto}}} = \universalRewardFunc(\miningpower_{\party_i}) \cdot \totalReward] = 1 - \alpha$ and $\Pr[\reward_{\party_i, \execution_{\profile_{\proto}}} = 0] = \alpha$, for every $i \in [\totalParties]$.
    Since $\proto$ is an $\infractionPredicate_\mathrm{abs}$-compliant strategy, if $\party_i$ follows $\proto$ then it does not abstain, \ie it makes queries to $\oracle_\proto$. Therefore, by Assumption~\ref{ass:zero-cost}, the expected cost of $\party_i$, $C_i$, is greater than $0$ and for $\profile_{\proto}$, the utility \emph{Profit} $\utility_i(\profile_{\proto})$ is:
    $\utility_i(\profile_{\proto}) = E\big[\reward_{\party_i, \execution_{\profile_{\proto}}}\big] - E\big[\cost_{\party_i, \execution_{\profile_{\proto}}}\big] = (1 - \alpha) \cdot \universalRewardFunc(\miningpower_{\party_i}) \cdot \totalReward - \cost_i$.

    Now assume that $\party_i$ unilaterally deviates by following the ``always abstain" strategy, $\strategy_\mathsf{abs}$, which is of course not $\infractionPredicate_\mathrm{abs}$-compliant. Then, $\party_i$ makes no queries to $\oracle_\proto$ and, by Assumption~\ref{ass:zero-cost}, its cost is $0$. Let $\profile_i$ be the strategy profile where $\party_i$ follows $\strategy_\mathsf{abs}$ and every party $\party \neq \party_i$ follows $\proto$. By the definition of $\beta_i$, we have that $\Pr[\reward_{\party_i, \execution_{\profile_i}} = \universalRewardFunc(\miningpower_{\party_i}) \cdot \totalReward] = 1 - \beta_i$ and $\Pr[\reward_{\party_i, \execution_{\profile_i}} = 0] = \beta_i$.

    By the definition of $\utility_i$, it holds that:
    $\utility_i(\profile_i) = E\big[\reward_{\party_i, \execution_{\profile_i}}\big] - E\big[\cost_{\party_i, \execution_{\profile_i}}\big] = (1 - \beta_i) \cdot \universalRewardFunc(\miningpower_{\party_i}) \cdot \totalReward - 0$.
    So, for $\epsilon_i < \cost_i - (\beta_i - \alpha) \cdot \universalRewardFunc(\miningpower_{\party_i}) \cdot \totalReward$, we have that:
    \begin{equation*}
        \begin{split}
            \utility_i(\profile_i) & = (1 - \beta_i) \cdot \universalRewardFunc(\miningpower_{\party_i}) \cdot \totalReward = \\
            & = (1 - \alpha) \cdot \universalRewardFunc(\miningpower_{\party_i}) \cdot \totalReward + (\alpha - \beta_i) \cdot \universalRewardFunc(\miningpower_{\party_i}) \cdot \totalReward \geq \\
            & \geq \utility_i(\profile_{\proto}) + \cost_i - (\beta_i - \alpha) \cdot \universalRewardFunc(\miningpower_{\party_i}) \cdot \totalReward > \\
            & > \utility_i(\profile_{\proto}) + \epsilon_i
        \end{split}
    \end{equation*}

 Let $i^*\in[n]$ be such that $\cost_{i^*} - (\beta_{i^*} - \alpha) \cdot \universalRewardFunc(\miningpower_{\party_{i^*}}) \cdot \totalReward = \underset{i \in [\totalParties]}{\mathsf{max}}\{ \cost_i - (\beta_i - \alpha) \cdot \universalRewardFunc(\miningpower_{\party_i}) \cdot \totalReward \}$ and assume that $0<\cost_{i^*} - (\beta_{i^*} - \alpha) \cdot \universalRewardFunc(\miningpower_{\party_{i^*}}) \cdot \totalReward$. By the above, we have that

\begin{equation}\label{eq:abstain_wins}
\utility_{i^*}(\profile_{i^*})>\utility_{i^*}(\profile_{\proto}) + \epsilon, \mbox{ for every }0\leq\epsilon < \cost_{i^*} - (\beta_{i^*} - \alpha) \cdot \universalRewardFunc(\miningpower_{\party_{i^*}}) \cdot \totalReward \;.
\end{equation}

The following claim will imply that the best response for $\party_{i^*}$ must be non $\infractionPredicate_\mathrm{abs}$-compliant.

\begin{claim}\label{claim:abstain_wins}
Let $\profile$ be a $\infractionPredicate_\mathrm{abs}$-compliant profile strategy that is a unilateral deviation from $\profile_\proto$ for $\party_{i^*}$. Then, it holds that $\utility_{i^*}(\profile)<\utility_{i^*}(\profile_{i^*})$.
\end{claim}

\noindent\emph{Proof of Claim~\ref{claim:abstain_wins}}. Since $\profile$ is $\infractionPredicate_\mathrm{abs}$-compliant, it holds that $\party_{i^*}$ participates in every trace as if she were honest, with the difference that she makes at least $1$ query, whereas as an honest player she would make at most $q$ queries per time slot. Therefore, it holds that $E[\cost_{\party_{i^*}, \execution_\profile}]\geq\frac{1}{q}\cost_{i^*}$. Besides, the rewards of $\party_{i^*}$ \wrt $\profile$ are bounded by the maximum value $\universalRewardFunc(\miningpower_{\party_{i^*}}) \cdot \totalReward$. Thus,

\begin{equation}\label{eq:abstain_wins1}
\begin{split}
\utility_{i^*}(\profile)-\utility_{i^*}(\profile_\proto)&\leq\big(\universalRewardFunc(\miningpower_{\party_{i^*}}) \cdot \totalReward-E[\cost_{\party_{i^*}, \execution_\profile}]\big)-\big((1-\alpha)\cdot\universalRewardFunc(\miningpower_{\party_{i^*}}) \cdot \totalReward-\cost_{i^*}\big)=\\
&=\alpha\cdot\universalRewardFunc(\miningpower_{\party_{i^*}}) \cdot \totalReward+\big(\cost_{i^*}-E[\cost_{\party_{i^*}, \execution_\profile}]\big)
\leq\\
&\leq\alpha\cdot\universalRewardFunc(\miningpower_{\party_{i^*}}) \cdot \totalReward+\big(1-\frac{1}{q}\big)\cdot\cost_{i^*}\;.
\end{split}
\end{equation}
Recall that
\begin{equation}\label{eq:abstain_wins2}
\begin{split}
\utility_{i^*}(\profile_{i^*})-\utility_{i^*}(\profile_\proto)&= (1 - \beta_{i^*}) \cdot \universalRewardFunc(\miningpower_{\party_{i^*}}) \cdot \totalReward - \big((1 - \alpha) \cdot \universalRewardFunc(\miningpower_{\party_{i^*}}) \cdot \totalReward - \cost_{i^*}\big)=\\
&=(\alpha-\beta_{i^*}) \cdot \universalRewardFunc(\miningpower_{\party_{i^*}}) \cdot \totalReward+\cost_{i^*}\;.
\end{split}
\end{equation}
By Eq.~\eqref{eq:abstain_wins1} and~\eqref{eq:abstain_wins2} and given that $\cost>\beta_{i^*}\cdot \universalRewardFunc(\miningpower_{\party_{i^*}}) \cdot \totalReward\cdot q$, we have that

\begin{equation*}
\utility_{i^*}(\profile_{i^*})-\utility_{i^*}(\profile)\geq \frac{1}{q}\cdot\cost_{i^*}-\beta_{i^*} \cdot \universalRewardFunc(\miningpower_{\party_{i^*}}) \cdot \totalReward>0\;.
\end{equation*}
\hfill $\Diamond$\\

By Claim~\ref{claim:abstain_wins}, we have that the best response $\profile^*$ for $\party_{i^*}$ is non $\infractionPredicate_\mathrm{abs}$-compliant, as it holds that $\utility_{i^*}(\profile^*)\geq\utility_{i^*}(\profile_{i^*})$. In addition, by Eq.~\eqref{eq:abstain_wins}, we get that for every $0\leq\epsilon < \cost_{i^*} - (\beta_{i^*} - \alpha) \cdot \universalRewardFunc(\miningpower_{\party_{i^*}}) \cdot \totalReward $, it holds that $\utility_{i^*}(\profile^*)>\utility_{i^*}(\profile_{\profile_\proto})+\epsilon$, therefore $\profile^*$ is directly $\epsilon$-reachable from $\profile_\proto$ \wrt $\bar{\utility}$. Thus, $\profile^*$ is a non $\infractionPredicate_\mathrm{abs}$-compliant strategy profile that is in $\mathsf{Cone}_{\epsilon, \bar{\utility}}(\proto)$.

Consequently, for $0\leq\epsilon < \underset{i \in [\totalParties]}{\mathsf{max}}\{ \cost_i - (\beta_i - \alpha) \cdot \universalRewardFunc(\miningpower_{\party_i}) \cdot \totalReward \}$, it holds that $\mathsf{Cone}_{\epsilon, \bar{\utility}}(\proto) \subsetneq (\strategySet_{\infractionPredicate_\mathrm{abs}})^\totalParties$, \ie the protocol $\proto$ is not $(\epsilon, \infractionPredicate_\mathrm{abs})$-compliant \wrt $\bar{\utility}$.

\end{proof}

Before concluding, let us examine the variables of the bound
$\underset{i \in [\totalParties]}{\mathsf{max}}\{ \cost_i - (\beta_i - \alpha) \cdot \universalRewardFunc(\miningpower_{\party_i}) \cdot \totalReward \}$
of Theorem~\ref{thm:universal-profit}. We note that, in the context of
blockchain systems, a ``party'' is equivalent to a unit of power; therefore, a
party $\party$ that controls $\miningpower_{\party}$ of the total power, in
effect controls $\miningpower_{\party}$ of all ``parties'' that participate in
the blockchain protocol.

To discuss $\alpha$ and $\beta_i$, we first consider the liveness
property~\cite{RSA:GarKia20} of blockchain protocols. Briefly, if a protocol
guarantees liveness with parameter $u$, then a transaction which is diffused on
slot $\slot$ is part of the (finalized) ledger of every honest party on round
$\slot + u$. Therefore, assuming that the environment gives at least one
transaction to the parties, if a protocol $\proto$ guarantees liveness unless
with negligible probability $\mathsf{negl}(\secparam)$,\footnote{Recall that $\secparam$ is
$\proto$'s security parameter, while $\mathsf{negl}(\cdot)$ is a negligible function.}
then at least one block is created during the execution with overwhelming
probability (in $\secparam$).

Now, we consider $\alpha$ and  $\beta_i$.
The former is
negligible,
since consensus protocols typically guarantee liveness against a
number of crash (or Byzantine) faults, let alone if all parties are honest. The
latter, however, depends on $\party_i$'s percentage of power
$\miningpower_{\party_i}$. For instance, consider Ouroboros, which is secure if
a deviating party $\party_i$ controls less than $\frac{1}{2}$ of the staking power and all others employ $\proto$.
 Thus, if
$\miningpower_{\party_i} = \frac{2}{3}$ and $\party_i$ abstains, the protocol
cannot guarantee liveness, \ie it is probable that no blocks are created. However,
if $\miningpower_{\party_i} = \frac{1}{4}$, then
liveness is guaranteed with overwhelming probability; hence, even if $\party_i$
abstains, at least one block is typically created.
Corollary~\ref{col:universal-profit-non-compliance}
generalizes this argument, by showing
that, if enough parties participate, then at least one of them is small enough,
such that its abstaining does not result in a system halt, hence it is
incentivized to be non-compliant.

\begin{corollary}\label{col:universal-profit-non-compliance}
    Let $\proto$ be a blockchain protocol, with security parameter $\secparam$,
    which is run by $\totalParties$ parties, under the same considerations of
    Theorem~\ref{thm:universal-profit}.
    Additionally, assume that $\proto$ has liveness with security threshold $\frac{1}{x}$ in the following sense: for every strategy profile $\profile$, if $\underset{\party \in \partySet_{-\profile}}{\sum} \miningpower_{\party} < \frac{1}{x}$,
    where $\partySet_{-\profile}$ is the set of parties that deviate from $\proto$ when $\profile$ is followed, then $\proto$ guarantees liveness with overwhelming (i.e., $1-\mathsf{negl}(\secparam)$) probability.
    If $x < \totalParties$, then for (non-negligible) values
    $\epsilon < \underset{i \in [\totalParties]}{\mathsf{max}}\{\cost_i\} - \mathsf{negl}(\secparam)$,
    $\proto$ is \emph{not} $(\epsilon, \infractionPredicate_\mathrm{abs})$-compliant \wrt $\bar{\utility}$.
\end{corollary}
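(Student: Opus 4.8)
The plan is to invoke Theorem~\ref{thm:universal-profit} directly, using the new liveness assumption only to control the sizes of $\alpha$ and of one carefully chosen $\beta_i$. The first step is a counting argument that produces a ``small'' party. Since $\sum_{\party \in \partySet} \miningpower_\party = 1$, at most $x$ parties can each control power at least $\frac{1}{x}$, as otherwise their combined power would exceed $1$. Because $x < \totalParties$, there is at least one party $\party_{i^*}$ with $\miningpower_{\party_{i^*}} < \frac{1}{x}$; among all such small parties I would take one of maximal expected honest cost $\cost_{i^*}$.

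Next I would convert the liveness threshold into bounds on the halting probabilities. For the all-honest profile $\profile_\proto$ the deviating set $\partySet_{-\profile_\proto}$ is empty, so its total power is $0 < \frac{1}{x}$; liveness therefore holds except with probability $\mathsf{negl}(\secparam)$, and since liveness forces at least one block to be created (given the environment supplies a transaction), we get $\alpha = \mathsf{negl}(\secparam)$. When $\party_{i^*}$ alone abstains, the deviating set is $\{\party_{i^*}\}$ with total power $\miningpower_{\party_{i^*}} < \frac{1}{x}$, so the same reasoning yields $\beta_{i^*} = \mathsf{negl}(\secparam)$.

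I would then feed these estimates into the non-compliance range of Theorem~\ref{thm:universal-profit}, which asserts that $\proto$ is not $(\epsilon, \infractionPredicate_\mathrm{abs})$-compliant for every $\epsilon < \max_{i}\{\cost_i - (\beta_i - \alpha)\cdot\universalRewardFunc(\miningpower_{\party_i})\cdot\totalReward\}$. Lower-bounding this maximum by its $i^*$-th term and noting that $|\beta_{i^*} - \alpha|\cdot\universalRewardFunc(\miningpower_{\party_{i^*}})\cdot\totalReward \leq \max(\alpha,\beta_{i^*})\cdot\totalReward = \mathsf{negl}(\secparam)$ (using $\universalRewardFunc \leq 1$ and that $\totalReward$ is a fixed constant) collapses that term to $\cost_{i^*} - \mathsf{negl}(\secparam)$. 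Under the paper's convention that a party is a unit of power this equals $\max_{i}\{\cost_i\} - \mathsf{negl}(\secparam)$, giving the advertised bound; I would also confirm that the standing hypothesis $\cost > \beta_i\cdot\universalRewardFunc(\miningpower_{\party_i})\cdot\totalReward\cdot q$ of Theorem~\ref{thm:universal-profit} is inherited, as it is satisfied for $i^*$ once $\beta_{i^*}$ is negligible, so that the theorem genuinely applies and the range of $\epsilon$ is non-empty whenever some $\cost_i$ is bounded away from $0$.

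The main obstacle is precisely this last matching step. Theorem~\ref{thm:universal-profit} only makes the abstaining deviation profitable for $\party_{i^*}$ up to $\cost_{i^*}$, whereas the corollary advertises $\max_i\{\cost_i\}$, and in a genuinely asymmetric population the party with the largest cost is also the party with the largest power, whose abstention may break liveness and hence keep $\beta_i$ non-negligible. Reconciling the two therefore requires either the unit-of-power reading, under which all parties are symmetric and every $\cost_i$ is equal, or an explicit argument that, since costs grow with power while only small parties retain negligible $\beta$, the effective profitable-abstention threshold is governed by $\cost_{i^*}$. Pinning down this point cleanly is where I would spend the most care.
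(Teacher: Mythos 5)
Your proposal is correct and follows essentially the same route as the paper: the same counting argument producing a party $\party_{i^*}$ with $\miningpower_{\party_{i^*}} < \frac{1}{x}$, the same use of the liveness threshold to conclude $\alpha = \mathsf{negl}(\secparam)$ and $\beta_{i^*} = \mathsf{negl}(\secparam)$, and the same invocation of Theorem~\ref{thm:universal-profit} with the term $(\beta_{i^*}-\alpha)\cdot\universalRewardFunc(\miningpower_{\party_{i^*}})\cdot\totalReward$ collapsed to a negligible quantity. The matching issue you flag at the end ($\cost_{i^*}$ versus $\max_i\{\cost_i\}$) is genuine, and the paper resolves it exactly as you suspect---implicitly, via the unit-of-power convention stated just before the corollary and the observation that PoS participation cost is essentially independent of power---so your treatment is, if anything, more explicit than the paper's own.
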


\begin{proof}
    First, since $\proto$ guarantees liveness even under some byzantine faults,
    $\alpha=\mathsf{negl}(\secparam)$.

    Second, if $x < \totalParties$, then there exists $i\in[n]$ such that
    $\miningpower_{\party_i} < \frac{1}{x}$. To prove this, if $\totalParties >
    x$ and $\forall j\in[n]: \miningpower_{\party_j} \geq \frac{1}{x}$ then
    $\underset{j \in [\totalParties]}{\sum} \miningpower_{\party_i} \geq \frac{n}{x} > 1$.
    This contradicts to the definition of the parties' participating power (cf. Section~\ref{sec:model}), where it holds that $\underset{j \in [\totalParties]}{\sum} \miningpower_{\party_j}=1$.

    Now consider the strategy profile $\profile$ where $\party_i$ abstains and all the other parties honestly follow $\proto$. Then, by definition, $\partySet_{-\profile}=\{\party_i\}$ and therefore,
    \[\underset{\party \in \partySet_{-\profile}}{\sum} \miningpower_{\party}=\miningpower_{\party_i} < \frac{1}{x}\;.\]
    Thus, by the assumption for $\proto$, we have that if the parties follow $\profile$, then $\proto$ guarantees liveness with $1-\mathsf{negl}(\secparam)$ probability. Hence, $\beta_i = \mathsf{negl}(\secparam)$.
    Finally, since $\universalRewardFunc(\miningpower_{\party_i}) \in [0, 1]$
    and $\totalReward$ is a finite value irrespective of the parties' strategy
    profile, the value $(\beta_i - \alpha) \cdot
    \universalRewardFunc(\miningpower_{\party_i}) \cdot \totalReward$ is also
    negligible in $\secparam$.
\end{proof}

The minimal cost $\cost^\bot_{\party_i}$ of (honest)
participation for party $\party_i$ depends on the blockchain system's mechanism. In PoW systems, where participation
consists of repeatedly performing computations, cost increases with the
percentage of mining power; for instance, controlling $51$\% of Bitcoin's
mining power for $1$ hour costs
\$$1,700,000$.\footnote{\url{https://www.crypto51.app} [February 2022]} In PoS
systems, cost is typically irrespective of staking power, since participation
consists only of monitoring the network and regularly signing messages; for
example, running a production-grade Cardano node costs \$$180$ per
month\footnote{\url{https://forum.cardano.org/t/realistic-cost-to-operate-stake-pool/40056} [January 2022]}. Therefore, considering
Corollary~\ref{col:universal-profit-non-compliance}, the upper bound
$\underset{i \in [\totalParties]}{\mathsf{max}}\{\cost_i\} - \mathsf{negl}(\secparam)$
of $\epsilon$ is typically rather large for PoS systems.

The free-rider hazard is manifested in
Algorand\footnote{\url{https://algorand.foundation}}, a cryptocurrency system
that follows the Algorand consensus protocol~\cite{EPRINT:CGMV18,EPRINT:GHMVZ17}
and employs resource-proportional rewards, as defined above. Its users own ``Algo'' tokens and
transact over a ledger maintained by ``participation nodes'', which run the
Algorand protocol and extend the ledger via blocks. Each user receives a fixed
reward\footnote{The weekly reward per owned Algo is $0.00012$
Algos. [\url{https://algoexplorer.io/rewards-calculator}, February 2022]} per
Algo token they own~\cite{algorand-rewards}, awarded with every new block.
Users may also run a participation node, but are not
rewarded~\cite{DBLP:conf/dsn/FooladgarMJR20} for doing so, and participation is
proportional to the amount of Algos that the user owns. Therefore, a party that
owns a few Algos will expectedly abstain from participation in the consensus
protocol.

\paragraph{Remark.} In summary, under resource-proportional rewards in PoS protocols, users may
choose to abstain. This can impact performance, \eg delaying block production
and transaction finalization while, in the extreme case, it could result in a
``tragedy of the commons'' situation~\cite{lloyd1833two}, where all users
abstain and the system grinds to a halt. Interestingly, this section
illustrates a difference between PoW and PoS. In PoS systems, each party's
power is registered on the ledger, without requiring any action from them. In
PoW, power becomes evident only after the party puts their hardware to work.
Therefore, the idea behind Theorem~\ref{thm:universal-profit}'s proof, which
relies on abstaining, does not necessarily hold in PoW systems, like
Fruitchains~\cite{PODC:PasShi17}, that define rewards (approximately)
proportional to each party's mining power, as identified by their hashing
operations.

\section{Block-Proportional Rewards}\label{sec:proportional}

The arguably most common type of rewards in blockchain systems is
\emph{block-proportional} rewards. Each party is rewarded
proportionally to the number of blocks it contributes to the final chain,
at the end of the execution.
Block-proportional rewards are a generalization of the \emph{proportional allocation
rule}, which, for example, is employed in Bitcoin. The proportional allocation rule
states that a party $\party$'s \emph{expected} rewards of a single block are $\miningpower_{\party}$.
As shown by Chen \etal~\cite{chen2019axiomatic}, this
is the unique allocation rule that satisfies a list of desirable properties,
namely:
\begin{inparaenum}[i)]
    \item non-negativity,
    \item budget-balance,
    \item symmetry,
    \item sybil-proofness, and
    \item collusion-proofness.
\end{inparaenum}

Our work expands the scope by considering proportional rewards \wrt blocks for
the entirety of the execution. Specifically,
Definition~\ref{def:proportional-rewards} describes block-proportional rewards, where
a party $\party$'s rewards are \emph{strictly monotonically increasing} on the number of blocks that $\party$
contributes to the chain output by the observer $\observer$. The definition considers a \emph{proportional reward function} $\proportionalRewardFunc(\cdot,\cdot)$ that takes as input the chain of $\observer$ and $\party$ and outputs a value in $[0,1]$.

\begin{definition}[Block-Proportional Rewards]\label{def:proportional-rewards}
    For an execution trace $\executionTrace$, let $\chain_{\observer,\executionTrace}$ be the chain output by $\observer$ and
    $\totalReward_{\observer,\executionTrace} \in \mathbb{R}_{\geq 0}$ be the
    total number of rewards which are distributed by the protocol, according to
    $\observer$. Let $\msgOutputSet_{\party,\executionTrace}$ be the number of
blocks in the chain output by $\observer$ which are produced by $\party$. A \emph{block-proportional reward} random variable $\reward_{\party, \execution}$ satisfies the
    following conditions:
    \begin{enumerate}
        \item $\forall \executionTrace \; \forall \party \in \partySet: \reward_{\party, \executionTrace} = \proportionalRewardFunc(\chain_{\observer,\executionTrace}, \party) \cdot \totalReward_{\observer,\executionTrace}$
        \item $\forall \executionTrace: \sum_{\party \in \partySet} \proportionalRewardFunc(\chain_{\observer,\executionTrace}, \party) = 1$
        \item $\forall \executionTrace \; \forall \party, \party' \in \partySet:\msgOutputSet_{\party,\executionTrace}>\msgOutputSet_{\party',\executionTrace}\Rightarrow \proportionalRewardFunc(\chain_{\observer,\executionTrace}, \party) > \proportionalRewardFunc(\chain_{\observer,\executionTrace}, \party')$
    \end{enumerate}
\end{definition}

\subsection{Bitcoin}\label{subsec:bitcoin}

First, we consider the Bitcoin~\cite{nakamoto2008bitcoin} blockchain
protocol. Bitcoin is a prime example of a family of protocols that links the
amount of valid blocks, that each party can produce per execution, with the
party's hardware capabilities, including:
\begin{inparaenum}[i)]
    \item Proof-of-Work-based protocols like Ethereum~\cite{wood2014ethereum},
        Bitcoin NG~\cite{eyal2016bitcoin}, Zerocash~\cite{SP:BCGGMT14};
    \item Proof-of-Space~\cite{C:DFKP15} and
        Proof-of-Space-Time~\cite{C:MorOrl19} protocols like
        SpaceMint~\cite{FC:PKFGAP18}, Chia~\cite{AC:AACKPR17,EC:CohPie18}.
\end{inparaenum}

\paragraph{Execution Details.}
Typically, protocols from the aforementioned family enforce that, when all parties follow the protocol honestly, the expected percentage of blocks
created by a party $\party$ is $\miningpower_{\party}$ of the total blocks
produced by all parties during the execution. Along the lines of the formulation in~\cite{EC:GarKiaLeo15,C:GarKiaLeo17}, in the Bitcoin
protocol, each party $\party$ can make at most $\miningpower_{\party}
\cdot \powQueryNum$ queries to the hashing oracle $\oracle_\proto$ per time slot, where $q$ is the total number of queries that all parties can make to $\oracle_\proto$ during a time slot. We note that when $\party$ follows the Bitcoin protocol, they perform exactly $\miningpower_{\party} \cdot\powQueryNum$ queries to the hashing oracle.
Each query can be seen as an independent block production trial and is
successful with probability $\difficulty$, which is a
protocol-specific ``mining difficulty'' parameter.

From the point of view of the observer $\observer$, a party $\party$ is rewarded a
fixed amount $\reward$ for each block they contribute to the chain output by
$\observer$.
Then, Bitcoin implements a special case of block-proportional rewards (cf.
Definition~\ref{def:proportional-rewards}), such that:
\begin{itemize}
    \item The total number of rewards for $\executionTrace$ is
    $\totalReward_{\observer,\executionTrace} = \big|\chain_{\observer,\executionTrace}\big| \cdot \reward=\Big(\sum_{\hat{\party} \in \partySet} \msgOutputSet_{\hat{\party},\executionTrace}\Big)\cdot \reward$,
    where $|\cdot|$ denotes the length of a chain in blocks.

    \item The proportional reward function $\proportionalRewardFunc(\cdot,\cdot)$ is defined as
    $\proportionalRewardFunc\big(\chain_{\observer,\executionTrace}, \party\big) =\frac{\msgOutputSet_{\party,\executionTrace}}{\big|\chain_{\observer,\executionTrace}\big|}= \frac{\msgOutputSet_{\party,\executionTrace}}{\sum_{\hat{\party} \in \partySet} \msgOutputSet_{\hat{\party},\executionTrace}}$.
\end{itemize}
\noindent Thus, by Definition~\ref{def:proportional-rewards}, we have that:
\begin{equation}\label{eq:bitcoin_reward}
\forall \executionTrace \; \forall \party \in \partySet: \reward_{\party, \executionTrace} = \proportionalRewardFunc(\chain_{\observer,\executionTrace}, \party) \cdot \totalReward_{\observer,\executionTrace}=\msgOutputSet_{\party,\executionTrace}\cdot \reward\;.
\end{equation}

In Bitcoin, on each time slot a party keeps a local chain,
which is the longest among all available chains. If multiple longest chains
exist, the party follows the chronological ordering of messages.

Following, we assume that none of the participating parties has complete control
over message delivery. Therefore, when two parties $\party, \party'$
produce blocks for the same index on the same time slot, it may be unclear
which is adopted by third parties that follow the protocol, \ie depending on which arrives first.

Furthermore, the \emph{index} of each block $\mesg$ is an integer that identifies
the distance of $\mesg$ from $\genesis$, \ie its height in the tree of
blocks. Blocks on the same height, but different branches, have the same
index but are non-equivalent (recall, that two messages are equivalent if their
hash is equal).

\paragraph{Bitcoin is an Approximate Nash Equilibrium \wrt Reward.}
We prove that under our model, the Bitcoin protocol is a $\Theta(\delta^2)$-Nash equilibrium \wrt the utility \emph{Reward} (hence, by Thoerem~\ref{thm:eq_comp}, it is also a $(\Theta(\delta^2), \infractionPredicate)$-compliant w.r.t. the same utility, where $\infractionPredicate$ is any associated infraction predicate) and any synchronous router. By Definition~\ref{def:utility} and Eq.~\eqref{eq:bitcoin_reward}, we have that when parties follow the strategy profile $\sigma$, the utility $U_{\party}$ of party $\party$ is
\begin{equation}\label{eq:bitcoin_utility}
    U_{\party}(\sigma)=E\big[\msgOutputSet_{\party,\execution_\sigma}\big]\cdot \reward\;,
\end{equation}
where $\msgOutputSet_{\party,\execution_\sigma}$ is the (random variable) number of blocks produced by $\party$ in the chain output by $\observer$ and $\reward$ is the fixed amount of rewards per block. Our analysis considers typical values of the success probability $\delta$, sufficiently small such that $\delta\cdot q < 1$ (recall that $q$ is the total number of oracle queries available to all parties per slot).
We say that party $\party$ is \emph{successful} during time slot $\slot$, if $\party$ manages to produce at least one block, \ie at least one oracle query submitted by $\party$ during $\slot$ was successful. The time slot $\slot$ is \emph{uniquely successful} for $\party$, if no other party than $\party$ manages to produce a block in $\slot$.
Theorem~\ref{thm:bitcoin_eq_approx} is the main result of this subsection, which uses in its proof the result of Lemma~\ref{lem:unique_succ}.

\begin{lemma}\label{lem:unique_succ}
Assume an execution trace $\executionTrace$ of the Bitcoin protocol where all parties follow the honest strategy under a synchronous router. Let $\block_1,\ldots,\block_k$ be a sequence of blocks produced by party $\party\in\mathbb{P}$ during a time slot $\slot$ that was uniquely successful for $\party$ in $\executionTrace$. Then, $\block_1,\ldots,\block_k$ will be part of the chain output by observer $\observer$.
\end{lemma}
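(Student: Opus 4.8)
The plan is to exploit the synchrony of the router together with the fact that \emph{every} party is honest, so that no party ever extends anything other than a current longest chain. First I would record the common-view property of the synchronous router: since every message diffused in a slot is delivered by the end of that slot, at the beginning of any slot all honest parties and the observer $\observer$ hold the same tree of blocks. Let $\ell$ denote the length of the longest chain in this common tree at the beginning of slot $\slot$. Because $\party$ is honest it adopts a longest chain (of length $\ell$) and, crucially, never produces conflicting blocks (otherwise $\infractionPredicate_\mathrm{conf}$ would fire); hence the blocks it produces in $\slot$ form a path, with $\mathit{index}(\block_i)=\ell+i$ and $\block_k$ sitting at height $\ell+k$. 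As $\slot$ is uniquely successful for $\party$, these are the \emph{only} blocks added to the tree during $\slot$, so at the end of $\slot$ the chain ending in $\block_k$ has length $\ell+k$ while every chain avoiding $\block_k$ still has length at most $\ell$.

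The heart of the argument is an invariant maintained for every slot $\slot' \geq \slot+1$: (a) the longest chain has length at least $\ell+k$ and contains $\block_1,\ldots,\block_k$, and (b) every chain that does not contain $\block_k$ has length at most $\ell$. I would prove this by induction on $\slot'$, the base case being the end-of-$\slot$ situation just described. For the inductive step, observe that each honest party adopts one of the current longest chains, which by (a)--(b) necessarily passes through $\block_k$ (its length exceeds $\ell$), and extends only that chain. Consequently every block created in slot $\slot'$ lies on a chain containing $\block_k$, so no branch avoiding $\block_k$ can grow; such branches stay frozen at length at most $\ell$, re-establishing (b), while the longest chain can only keep or increase its length, re-establishing (a). This is exactly where the combination ``all parties honest'' and ``$\slot$ uniquely successful'' is used: the unique success creates a strict lead of the $\block_k$-branch, and honesty guarantees that the shorter branches receive no further blocks (there is no adversary secretly mining a competitor), so the lead can never be erased.

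Finally, at the end of the execution $\observer$ outputs a longest chain $\chain_{\observer,\executionTrace}$; by the invariant any longest chain contains $\block_k$, and since $\block_1,\ldots,\block_k$ lie on the unique path from $\genesis$ to $\block_k$, all of $\block_1,\ldots,\block_k$ appear on $\chain_{\observer,\executionTrace}$. I expect the main obstacle to be the careful treatment of ties: after slot $\slot$ several honest parties may create distinct blocks at height $\ell+k+1$ (and beyond), producing several longest chains, so I cannot claim a \emph{unique} longest chain, only that every longest chain shares the prefix through $\block_k$. Phrasing the invariant in terms of ``every chain avoiding $\block_k$ is short'' rather than ``the longest chain is unique'' is what makes the induction go through cleanly and sidesteps the tie-breaking rule entirely.
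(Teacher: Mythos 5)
Your proof is correct and rests on the same underlying mechanism as the paper's: under a synchronous router every party sees $\block_1,\ldots,\block_k$ at the start of slot $\slot+1$, and since honest parties extend only longest chains, all subsequent mining happens strictly above $\block_k$, so the lead created by the uniquely successful slot can never be overtaken. The organization differs, though. The paper argues by contradiction: it supposes some $\block_{j^*}$ is absent from the observer's chain, deduces that a competing block of the same height must occupy its place, notes (by unique success) that this competitor was produced in some slot $\slot'\neq\slot$, and then rules out both $\slot'<\slot$ and $\slot'>\slot$ using the ``everyone mines higher afterwards'' observation. Your version replaces that case analysis with a forward invariant --- after slot $\slot$, every longest chain passes through $\block_k$ and branches avoiding $\block_k$ are frozen --- which handles the paper's $\slot'<\slot$ case for free (your definition of $\ell$ as the common-view longest-chain length already rules out any pre-existing block above height $\ell$) and treats ties explicitly rather than implicitly; this is arguably the cleaner write-up. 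One small precision point: as literally stated, your claim (b) that ``every chain avoiding $\block_k$ has length at most $\ell$'' is violated by the prefixes ending at $\block_1,\ldots,\block_{k-1}$ when $k\geq 2$; either read ``chain'' as a maximal (leaf-terminated) path, or weaken (b) to ``length strictly less than that of the chain ending at $\block_k$'', which is all your induction actually needs.
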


\begin{proof}
Let $h$ be the height of $\block_1$. Then, for every $j\in\{1,\ldots,k\}$, the height of the block $\block_j$ is $h+j-1$.
Assume for the sake of contradiction that there is a $j^*\in[k]$ such that $\block_{j^*}$ is not in the observer's chain. Since each block contains the hash of the previous block in the chain of $\observer$, the latter implies that the subsequence $\block_{j^*},\ldots,\block_k$ is not in the observer's chain. There are two reasons that $\block_{j^*}$ is missing from the observer's chain.
\begin{enumerate}
\item The observer $\observer$ never received $\block_{j^*}$. However, after the end of time slot $\slot$, $\observer$ will be activated and fetch the messages included in its $\textsc{Receive}_\observer()$ string. Therefore, the case that $\observer$ never received $\block_{j^*}$ cannot happen.

\item The observer has another block $\block'_{j^*}$ included in its chain that has the same height, $h+j^*-1$, as $\block_{j^*}$. Since $\slot$ was uniquely successful for $\party$ in $\executionTrace$, the block $\block'_{j^*}$ must have been produced in a time slot $\slot'$ that is different than $\slot$. Assume that $\party$ produced the block sequence $\block'_{j^*},\ldots,\block'_{k'}$ during $\slot'$. We examine the following two cases:
\begin{enumerate}
\item $\slot>\slot'$: then $\block'_{j^*}$ was produced before $\block_{j^*}$, so in time slot $\slot'+1$ all parties received (at least) the sequence $\block'_{j^*},\ldots,\block'_{k'}$. All parties select the longest chain, so the chain that they will select will have at least $h+k'-1\geq h+j^*-1$ number of blocks in  $\slot'+1$. Thus, for time slot $\slot\geq \slot'+1$ the parties submit queries for producing blocks which height is at least $h+k'>h+j^*-1$. So at time slot $\slot$, the party $\party$ cannot have produced a block which height is $\leq h+j^*-1$.
\item $r<r'$: then $\block_{j^*}$ was produced before $\block'_{j^*}$. So in time slot $r+1$ all parties receive the sequence $\block_1,\ldots,\block_k$. Thus, they will adopt a chain with at least $h+k-1\geq h+j^*-1$ number of blocks. Thus, for time slot $r'\geq r+1$ the parties submit queries for producing blocks with height at least $h+k>h+j^*-1$. Therefore, $\party$ cannot have produced a block of height $\leq h+j^*-1$ during time slot $r'$.
\end{enumerate}
\end{enumerate}

By the above, $\block_{j^*}$ is a block with height $h+j^*-1$ received by $\observer$ and no other block with height $h+j^*-1$ is included in $\observer$'s chain. Since $\observer$ adopts the longest chain, there must be a block with height $h+j^*-1$ that is included in its chain. It is straightforward that this block will be $\block_{j^*}$, which leads to contradiction.

\end{proof}

\begin{theorem}\label{thm:bitcoin_eq_approx}
    Let:
    \begin{inparaenum}[i)]
        \item $N \geq \kappa^c$ be the number of time slots of the execution, where $\kappa$ is the security parameter and $c$ is a sufficiently large constant;
        \item $\bar{\utility}$ be the utility vector where each party employs the utility Reward;
        \item $\adversary$ be a synchronous router.
    \end{inparaenum}
    The Bitcoin protocol is an $\epsilon$-Nash equilibrium \wrt $\bar{\utility}$ under $\adversary$, for $\epsilon:=\frac{NRq^2}{2}\delta^2$ .
\end{theorem}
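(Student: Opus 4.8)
The plan is to start from the reward formula in Eq.~\eqref{eq:bitcoin_utility}, $\utility_{\party}(\profile)=E[\msgOutputSet_{\party,\execution_\profile}]\cdot\reward$, which reduces the theorem to the following statement: for every party $\party_i$ and every unilateral deviation $\strategy_i$ of $\party_i$ (all other parties honest, resulting in a profile $\profile$), one has $E[\msgOutputSet_{\party_i,\execution_\profile}]-E[\msgOutputSet_{\party_i,\execution_{\profile_\proto}}]\leq \frac{Nq^2}{2}\delta^2$. Multiplying by $\reward$ then yields the claimed $\epsilon$. I would bound the two expectations separately: an upper bound on the best possible deviation and a lower bound on the honest reward, both phrased in terms of how many blocks $\party_i$ actually produces.

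For the deviation, I would note that $\msgOutputSet_{\party_i,\execution_\profile}$ can never exceed the number of blocks $\party_i$ produces, which in turn is at most the number of its \emph{successful} oracle queries (each block needs a distinct successful query). Since the per-slot budget caps $\party_i$'s queries at $\miningpower_{\party_i}\cdot q$ irrespective of strategy, over $N$ slots $\party_i$ makes at most $N\miningpower_{\party_i}q$ queries, and each succeeds independently with probability $\delta$ under the stateless oracle with pre-fixed coins; hence $E[\msgOutputSet_{\party_i,\execution_\profile}]\leq N\miningpower_{\party_i}q\delta$ for every $\strategy_i$.

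For the honest reward I would invoke Lemma~\ref{lem:unique_succ}: in the all-honest synchronous execution, every block $\party_i$ produces during a uniquely successful slot lies in $\observer$'s chain. Splitting the total number of blocks produced by $\party_i$ into those in uniquely successful slots plus a remainder $\Delta_i$ (blocks produced in slots where some other party also succeeds), Lemma~\ref{lem:unique_succ} gives $\msgOutputSet_{\party_i,\execution_{\profile_\proto}}\geq(\text{total blocks})-\Delta_i$, so $E[\msgOutputSet_{\party_i,\execution_{\profile_\proto}}]\geq N\miningpower_{\party_i}q\delta-E[\Delta_i]$. Combined with the deviation bound, the difference telescopes to at most $E[\Delta_i]$, and it remains to show $E[\Delta_i]\leq\frac{Nq^2}{2}\delta^2$. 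For a fixed slot, let $X,Y$ be the successful queries of $\party_i$ and of the remaining parties; the slot's contribution to $\Delta_i$ is $X\cdot\mathbf{1}[Y\geq 1]\leq\binom{X+Y}{2}$, a pointwise integer inequality (equivalent to $X(X-1)+Y(Y-1)\geq 0$). Since under the honest profile exactly $q$ queries are issued per slot, $X+Y$ is $\mathrm{Bin}(q,\delta)$, so $E\big[\binom{X+Y}{2}\big]=\binom{q}{2}\delta^2\leq\frac{q^2}{2}\delta^2$ per slot, and summing over $N$ slots gives the bound.

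The step I expect to be the main obstacle is justifying the universal deviation bound rigorously: a clever strategy (withholding, selective publishing, private forks) must be shown unable to place more of $\party_i$'s \emph{own} blocks into $\observer$'s chain than $\party_i$ produces. The crucial observations are that $\msgOutputSet$ counts only $\party_i$'s own blocks (so displacing other parties' blocks cannot raise $\party_i$'s count), that block production is hard-capped by successful queries, and that this cap is strategy-independent because the oracle's coins are fixed in advance. A secondary care point is keeping the honest profile's exact per-slot budget of $q$ queries in the pair-counting step, so that $X+Y$ is genuinely binomial with $q$ trials and the constant $\frac{1}{2}$ emerges cleanly.
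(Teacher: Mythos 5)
Your proposal is correct, and its skeleton matches the paper's: both arguments upper-bound any unilateral deviation's expected reward by the deviator's maximum possible production $N\miningpower_{\party}q\delta R$ (arguing that, since the oracle is stateless and a block requires a successful query, no strategy can place more of the party's \emph{own} blocks in $\observer$'s chain than its query budget allows), and both lower-bound the honest reward via Lemma~\ref{lem:unique_succ}. Where you genuinely diverge is in executing that lower bound. The paper counts \emph{uniquely successful slots}: it computes $\Pr[\text{slot uniquely successful for }\party]=(1-\delta)^{(1-\miningpower_{\party})q}-(1-\delta)^q$, credits one block per such slot, and then needs Bernoulli's inequality plus a truncated binomial expansion to turn the resulting gap into $\frac{NRq^2}{2}\delta^2$ --- which is exactly where the standing assumption $\delta q<1$ is invoked. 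You instead split honest production into (blocks in uniquely successful slots) plus a collision term $\Delta_i$, so the utility gap telescopes to $R\cdot E[\Delta_i]$, and you bound $E[\Delta_i]$ \emph{exactly}: pointwise $X\cdot\mathbf{1}[Y\geq 1]\leq\binom{X+Y}{2}$, and since the honest profile issues exactly $q$ queries per slot, $E\big[\binom{X+Y}{2}\big]=\binom{q}{2}\delta^2$ by pair-counting, summing over slots by linearity of expectation. This buys three things: no series truncation, no need for the hypothesis $\delta q<1$, and a marginally sharper constant ($\binom{q}{2}$ versus $q^2/2$); it also uses the full strength of Lemma~\ref{lem:unique_succ} (all blocks of a uniquely successful slot survive, not just one). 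One small repair: your parenthetical claim that the pointwise inequality is ``equivalent to $X(X-1)+Y(Y-1)\geq 0$'' is not quite the right algebra --- for $Y\geq 1$ the clean justification is $\binom{X+Y}{2}=\binom{X}{2}+\binom{Y}{2}+XY\geq XY\geq X$, while for $Y=0$ the left side vanishes --- but the inequality itself is true, so nothing in the argument breaks.
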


\begin{proof}
    Consider a protocol execution where all parties follow the honest strategy $\Pi$, with $\profile_\proto$ denoting the profile $\langle\Pi,\ldots,\Pi\rangle$. Let $\party$ be a party and $\mu_{\party}$ be its mining power.
For $\slot\in[N]$, let $X_{\party,\slot}^{\profile_\proto}$ be the random variable that is $1$ if the time slot $\slot$ is uniquely successful for $\party$ and $0$ otherwise. By protocol description, a party $\party'$ makes $\mu_{\party'} \cdot q$ oracle queries during $\slot$, each with success probability $\delta$. Thus:
\begin{equation}\label{eq:pr_unique}
\begin{split}
&\Pr[X_{\party,r}^{\profile_\proto}=1]=\\
=&\Pr[\party\mbox{ is successful during }\slot]\cdot\\
&\quad\quad\cdot\Pr[\mbox{all the other parties produce no blocks in } \slot]=\\
    =&\Big(1-(1-\delta)^{\mu_{\party} q}\Big)\cdot\prod_{\party'\neq\party}(1-\delta)^{\mu_{\party'} q}=\\
    =&\Big(1-(1-\delta)^{\mu_{\party} q}\Big)\cdot(1-\delta)^{(1-\mu_{\party}) q} = \\
    =&(1-\delta)^{(1-\mu_{\party}) q}-(1-\delta)^q\;.
\end{split}
\end{equation}

    The random variable $X_{\party,\execution_{\profile_\proto}}:=\sum_{\slot\in[N]}X_{\party,r}^{\profile_\proto}$ expresses the number of uniquely successful time slots for $\party$. By Eq.~\eqref{eq:pr_unique}, $X_{\party}^\Pi$ follows the binomial distribution with $N$ trials and probability of success $(1-\delta)^{(1-\mu_{\party}) q}-(1-\delta)^q$. Therefore:
    $E\big[X_{\party,\execution_{\profile_\proto}}\big]=N\Big((1-\delta)^{(1-\mu_{\party}) q}-(1-\delta)^q\Big)$.

Let $\msgOutputSet_{\party,\execution_{\profile_\proto}}$ be the number of blocks produced by $\party$ included in the chain output by the observer $\observer$. In a uniquely successful time slot $\slot$, $\party$ produces at least one block, and by Lemma~\ref{lem:unique_succ}, all the blocks that $\party$ produces during $\slot$ will be included in the chain output by the observer. Therefore, for all random coins $\msgOutputSet_{\party,\execution_{\profile_\proto}}\geq X_{\party,\execution_{\profile_\proto}}$ and so it holds that:
\begin{equation}\label{eq:lower_bound}
    E\big[\msgOutputSet_{\party,\execution_{\profile_\proto}}\big]\geq E\big[X_{\party,\execution_{\profile_\proto}}\big]=N\Big((1-\delta)^{(1-\mu_{\party}) q}-(1-\delta)^q\Big)\;.
\end{equation}

Now assume that $\party$ decides to unilaterally deviate from the protocol, following a strategy $S$. Let $\profile$ denote the respective strategy profile.
    Let $Z_{\party,\execution_\profile}$ be the number of blocks that $\party$ produces by following $S$ and $\msgOutputSet_{\party,\execution_\profile}$ be the number of blocks produced by $\party$ that will be included in the chain output by $\observer$. Clearly, for all random coins $\msgOutputSet_{\party,\execution_\profile}\leq Z_{\party,\execution_\profile}$. Without loss of generality, we may assume that $\party$ makes all of their $N\mu_{\party} q$ available oracle queries (indeed, if $\party$ made less than $N\mu_{\party} q$ queries then on average it would produce less blocks). Thus, we observe that $Z_{\party,\execution_\profile}$ follows the binomial distribution with $N\mu_{\party} q$ trials and probability of success $\delta$. Thus:
\begin{equation}\label{eq:upper_bound}
    E\big[\msgOutputSet_{\party,\execution_\profile}\big]\leq\big[Z_{\party,\execution_{\profile}}\big]=N\mu_{\party} q\delta\;.
\end{equation}

    By definition of $\utility_{\party}$ in Eq.~\eqref{eq:bitcoin_utility} and Eq.~\eqref{eq:lower_bound} and~\eqref{eq:upper_bound}, for fixed block rewards $R$ and for every strategy $S$ that $\party$ may follow, we have:
\begin{equation}\label{eq:upper-lower}
\begin{split}
    \utility_{\party}(\profile)&-\utility_{\party}(\profile_\proto)=E\big[\msgOutputSet_{\party,\execution_\profile}\big]\cdot R-E\big[\msgOutputSet_{\party,\execution_{\profile_\proto}}\big]\cdot R\leq\\
    &\leq N\mu_{\party} q\delta R- N\Big((1-\delta)^{(1-\mu_{\party}) q}-(1-\delta)^q\Big)R\;.
\end{split}
\end{equation}
By Bernoulli's inequality, we have that:
    $(1-\delta)^{(1-\mu_{\party}) q}\geq1-(1-\mu_{\party}) q\delta$.
Besides, by binomial expansion, and the assumption that $\delta\cdot q<1$ we have that:
$(1-\delta)^q\leq1-q\delta+\frac{q^2}{2}\delta^2$.
Thus, by applying the two above inequalities in Eq.~\eqref{eq:upper-lower}, we get that:
\begin{equation}\label{eq:btc_approx}
\begin{split}
    \utility_{\party}(\profile)-\utility_{\party}(\profile_\proto)& \leq N\mu_{\party} q\delta R- N\Big((1-\delta)^{(1-\mu_{\party}) q}-(1-\delta)^q\Big)R \leq \\
   & \leq N\mu_{\party} q\delta R- N\Big(1-(1-\mu_{\party}) q\delta-\Big(1-q\delta+\frac{q^2}{2}\delta^2\Big)\Big)R=\\
    &= N\mu_{\party} q\delta R-N\Big(\mu_{\party} q\delta-\frac{q^2}{2}\delta^2\Big)R = \\
   & = \frac{NRq^2}{2}\delta^2
\end{split}
\end{equation}

By Eq.~\eqref{eq:btc_approx}, Bitcoin is an $\epsilon$-Nash equilibrium for $\epsilon:=\frac{NRq^2}{2}\delta^2$.

\end{proof}

\paragraph{Remark.} The result of Theorem~\ref{thm:bitcoin_eq_approx} is in agreement with previous
works~\cite{KrollDaveyFeltenWEIS2013,kiayias2019coalitionsafe}, while similar results
exist \wrt \emph{profit}~\cite{EC:BGMTZ18}.
Nonetheless, there are a few remarks to be made. A well-known implication
from the selfish mining attack~\cite{FC:EyaSir14,FC:SapSomZoh16} is that
Bitcoin is not an equilibrium \wrt relative rewards. However,
selfish mining relies on withholding a block's publication, which is a
compliant behavior (\wrt $\infractionPredicate_{bc}$,
Definition~\ref{def:blockchain-infraction}). Second, our analysis assumes fixed
difficulty, while Bitcoin operates under variable difficulty, which is computed
on regular intervals depending on the active mining power. Various interesting
result exist for the variable difficulty setting:
\begin{inparaenum}[i)]
    \item \cite{grunspan2019profitability} showed that Bitcoin is not an
        equilibrium \wrt rewards, as selfish mining is more
        profitable;
    \item \cite{DBLP:conf/ec/FiatKKP19} showed that Bitcoin is not an
        equilibrium \wrt profit, as miners stop performing \emph{some}
        hashing queries to artificially reduce the difficulty;
    \item \cite{DBLP:conf/ec/GorenS19} showed that Bitcoin is not an
        equilibrium \wrt profit in some cases (depending on the cost
        mechanism) and seemingly also non-compliant \wrt
        $\infractionPredicate_\mathrm{abs}$, as miners take turns shutting down
        and resuming operations per difficulty adjustment epoch.
\end{inparaenum}
Therefore, Bitcoin's compliance under alternative utilities,
variable difficulty, and alternative infraction
predicates (\eg enabling some flexibility in the amount of
hashing queries) is a promising line of future research.

\subsection{Proof-of-Stake}\label{subsec:pos}

Proof-of-Stake (PoS) systems differ from Bitcoin in a few points. Typically, the
execution of such systems is organized in \emph{epochs}, each consisting of a
fixed number $\epochLength$ of time slots. On each slot, a specified set of
parties is eligible to participate in the protocol. Depending on the protocol,
the leader schedule of each epoch may or may not be a priori public.

The core difference with PoW concerns the power $\miningpower_{\party}$ of each
party. In PoS, $\miningpower_{\party}$ represents their \emph{stake} in the
system, \ie the number of coins that $\party$ owns. Stake is dynamic,
therefore the system's coins may change hands and the leader schedule of each
epoch depends on the stake distribution at the beginning of the
epoch.
\footnote{In reality, the snapshot of the stake distribution is retrieved at an earlier point of the previous epoch, but we can employ this simplified version without loss of generality.}
As in Bitcoin, each party participates
proportionately to their power, so the expected ratio of slots for which
$\party$ is leader over the total number of the epoch's slots is
$\miningpower_{\party}$.

Also, in PoS protocols, the oracle $\oracle_\proto$ does not perform hashing
as in Bitcoin. Instead, it is parameterized by the leader schedule and
typically performs signing. A signature output by $\oracle_\proto$ is valid
if and only if the input message is submitted by the slot leader in time.
This introduces two important
consequences: i) only the slot leader can produce valid messages during a
given slot; ii) the leader can produce as many valid messages as the number
of possible queries to $\oracle_\proto$.

In the next paragraphs, we use the following notation:

\begin{itemize}

    \item $\cost$: the cost of a single query to $\oracle_\proto$;
    \item $\reward$: the (fixed) reward per block;
    \item $\epoch$: the number of epochs in an execution;
    \item $\epochLength$: the number of slots per epoch;
    \item $\miningpower_{\party, j}$: the power of party $\party$ on epoch $j$.

\end{itemize}

\subsubsection{Single-Leader Proof-of-Stake}\label{subsec:single-leader-pos}

As before, we analyze a representative of a family of protocols; the family is
single-leader PoS (SL-PoS) and the representative is Ouroboros~\cite{C:KRDO17}.
The SL-PoS family includes systems like
EOS\footnote{\url{https://developers.eos.io/welcome/latest/protocol/consensus_protocol}}
and Ouroboros
BFT~\cite{EPRINT:KiaRus18}, while the analysis is also applicable (with minor
changes) on Gasper (cf. Appendix~\ref{subsec:gasper}).  We again utilize the
blockchain infraction predicates (cf.
Definition~\ref{def:blockchain-infraction}). Section~\ref{sec:universal} showed
that resource-proportional rewards do not necessarily guarantee compliance. Ouroboros, as a
consensus protocol, does not define rewards, but the authors proposed a complex
reward mechanism that approximates resource-proportional rewards. Nonetheless, the
Ouroboros implementation (in Cardano) employs block-proportional rewards, so we
will also consider fixed rewards per block (cf. Definition~\ref{def:proportional-rewards}).

On each slot, Ouroboros defines a single party, the ``slot leader'', as
eligible to create a valid message. Specifically, the protocol restricts that
a leader cannot extend the chain with multiple blocks for the same slot,
therefore all honest parties extend their chain by at most $1$ block per slot.
The leader schedule is public and is computed at the
beginning of each epoch via a secure, publicly verifiable Multi-Party
Computation (MPC) sub-protocol, which cannot be biased by any single party. To
prevent long-range attacks~\cite{buterin2014stake}, Ouroboros employs a form of
rolling checkpoints (``a bounded-depth longest-chain rule''~\cite{C:KRDO17}),
\ie a party ignores forks that stem from a block older than a
(protocol-specific) limit from the adopted chain's head
(it should be noted that subsequent versions of Ouroboros did not utilize the same logic, cf. Subsection~\ref{subsec:multi-leader-pos}).

Generalizing from the above, a SL-PoS protocol demonstrates the following properties:
\begin{itemize}
    \item the execution is organized in epochs;
    \item within each epoch, a single party (the \emph{leader}) is eligible to
        produce a message per index;
    \item a party which is online considers the blocks of each past epoch
        finalized (\ie does not remove them in favor of a competing, albeit
        possibly longer, chain);
    \item no single party with power less than $\frac{1}{2}$ can bias the
        epoch's leader schedule.
\end{itemize}

\paragraph{Synchronous network.}
First, we assume a diffuse functionality parameterized by a synchronous router (cf. Section~\ref{sec:preliminaries}).
Theorem~\ref{thm:compliant-ouroboros-synchronous} shows that SL-PoS with
block-proportional rewards is an $\epsilon$-Nash equilibrium for negligible $\epsilon$ (hence, by Theorem~\ref{thm:eq_comp}, it is also
$(\epsilon,\infractionPredicate)$-compliant, $\infractionPredicate$ being any
associated infraction predicate); this result is in line with the incentives'
analysis of Ouroboros~\cite{C:KRDO17}.
We remark that \cite{DBLP:conf/ec/Brown-CohenNPW19} explored the same setting
(synchronous, longest-chain PoS) and identified a selfish
mining-like attack against so-called ``predictable'' protocols, like
Ouroboros. This result was later refined
by~\cite{DBLP:conf/sigecom/FerreiraW21}, showing that such attacks are profitable
for participants controlling more than $32.5$\% of total stake.
Nonetheless, that line of research targeted relative rewards and relied on
withholding a block's publication for some time; our work considers absolute
rewards and does not consider block withholding as non-compliant behavior.

\begin{theorem}\label{thm:compliant-ouroboros-synchronous}
    Assume:
    \begin{inparaenum}[i)]
        \item a synchronous router $\adversary$;
        \item $\forall \party \in \partySet: \miningpower_{\party} <$~$\frac{1}{2}$.
    \end{inparaenum}
    SL-PoS with block-proportional rewards (Definition~\ref{def:proportional-rewards} for fixed block reward $\reward$) is an $\epsilon$-Nash equilibrium \wrt utility \emph{Reward} and,
    if $\reward > \cost$, is an $\epsilon$-Nash equilibrium
    \wrt utility \emph{Profit}, both for negligible $\epsilon$ and under $\adversary$.
\end{theorem}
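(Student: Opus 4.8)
The plan is to prove both equilibrium claims by the same two-sided estimate used for Bitcoin (Theorem~\ref{thm:bitcoin_eq_approx}): pin down the honest party's utility exactly, then show that no unilateral deviation can exceed it by more than a negligible amount. The two structural facts that drive everything are (a) in the honest synchronous execution every block a leader produces in one of its slots survives into the observer's chain, and (b) for \emph{any} strategy, a party's contribution $\msgOutputSet_{\party,\executionTrace}$ to the observer's chain is bounded by the number of indices for which that party is the designated leader. Write $\ell_\party$ for the (random, schedule-determined) number of slots/indices in the execution for which $\party$ is the leader; by the SL-PoS eligibility property only the leader of an index can produce a valid block there, and the observer's chain holds at most one block per index, so $\msgOutputSet_{\party,\executionTrace}\le\ell_\party$ always.

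First I would establish fact (a), the single-leader analogue of Lemma~\ref{lem:unique_succ}. Because each slot has a unique eligible leader, every slot is trivially ``uniquely successful'' for its leader, and the synchronous router delivers that block to all parties (and to $\observer$) by the end of the slot; together with the epoch-finalization property (past-epoch blocks are never discarded) a straightforward induction on the slot number shows the honest chain is linear and contains exactly one block per leader slot. Hence under $\profile_\proto$ we have $\msgOutputSet_{\party,\executionTrace}=\ell_\party$ for all coins, so the honest \emph{Reward} is $\utility_\party(\profile_\proto)=E[\ell_\party]\cdot\reward$ and, charging the single oracle query needed to produce each block (Assumption~\ref{ass:zero-cost}), the honest \emph{Profit} is $E[\ell_\party]\cdot(\reward-\cost)$.

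Next I would bound an arbitrary deviation $\profile$. For \emph{Reward}, fact (b) gives $\utility_\party(\profile)=E[\msgOutputSet_{\party,\execution_\profile}]\cdot\reward\le E[\ell_\party]\cdot\reward$ up to the only slack available to a deviator, namely skewing the leader schedule; since $\miningpower_\party<\tfrac{1}{2}$, the unbiasability property of the epoch-MPC caps this skew — and hence the increase in $E[\ell_\party]$ — by a negligible quantity, giving the claimed negligible $\epsilon$. For \emph{Profit}, let $b=E[\msgOutputSet_{\party,\execution_\profile}]$ and $t=E[\cost_{\party,\execution_\profile}]/\cost$ be the expected blocks and queries of the deviation; validity forces $t\ge b$ (a valid block costs at least one query) and fact (b) forces $b\le E[\ell_\party]+\mathsf{negl}(\secparam)$, so $\utility_\party(\profile)=b\reward-t\cost\le b(\reward-\cost)\le E[\ell_\party](\reward-\cost)+\mathsf{negl}(\secparam)$, where the hypothesis $\reward>\cost$ is exactly what keeps the inequality $b(\reward-\cost)\le E[\ell_\party](\reward-\cost)$ pointing the right way. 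This matches the honest \emph{Profit} up to $\mathsf{negl}(\secparam)$, so $\profile_\proto$ is an $\epsilon$-Nash equilibrium for negligible $\epsilon$ under both utilities; compliance \wrt every associated $\infractionPredicate$ then follows from Theorem~\ref{thm:eq_comp}.

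The main obstacle is fact (b) together with the unbiasability step: I must argue that no deviation — in particular block-withholding (selfish signing) or producing conflicting blocks, both of which are \emph{compliant} moves that the \emph{Reward} argument still has to dominate — can place into the observer's final chain more blocks than $\party$ has leader indices, and that the adversarial influence of a below-$\tfrac{1}{2}$ party on the leader schedule contributes only $\mathsf{negl}(\secparam)$ to $E[\ell_\party]$. The first part is purely combinatorial (one block per index, leader-only validity, and the fact that forks cannot manufacture extra leader positions), but the second must invoke the cryptographic guarantee of the epoch-MPC and is where the negligibility of $\epsilon$ genuinely resides; care is also needed to confirm that the cost-side inequality $t\ge b$ leaves no room for a deviation to both shed queries and keep its blocks, which is precisely why $\reward>\cost$ is required for the \emph{Profit} claim but not for \emph{Reward}.
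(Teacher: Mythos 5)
Your proposal is correct and follows essentially the same route as the paper's proof: network synchronicity guarantees every honest leader's block reaches the observer's chain (so honest utility is maximal), unbiasability of the epoch leader election caps any deviator's expected leader slots up to a negligible term, and $\reward > \cost$ settles the profit case. If anything, your write-up is more explicit than the paper's — the paper leaves the cap $\msgOutputSet_{\party,\executionTrace} \le \ell_\party$ implicit and dispatches the profit claim by noting only that abstaining from a slot forfeits positive per-slot profit, whereas your accounting inequality $t \ge b$ handles all deviations (extra queries, withholding, conflicting blocks) uniformly.
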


\begin{proof}
    To prove the statement, it suffices to show that, if the assumptions hold,
    no party can increase its reward more than $\epsilon$ by unilaterally deviating from the protocol, where $\epsilon=\mathsf{negl}(\secparam)$.

    First, if all parties control a minority of staking power, no single
    party can bias the slot leader schedule for any epoch (unless with $\mathsf{negl}(\secparam)$ probability). Therefore,
    the (maximum) expected number of slots for which each party $\party$ is
    leader is $\sum_{j \in [1, \epoch]} \epochLength \cdot
    \miningpower_{\party,j}$, where $\miningpower_{\party,j}$ is the
    percentage of staking power of $\party$ during the $j$-th epoch.

    Second, if all parties follow $\proto$, then the total expected rewards for
    each party $\party$ are $\reward \cdot \sum_{j \in [1, \epoch]}
    \epochLength \cdot \miningpower_{\party,j}$. This is a direct consequence
    of the network synchronicity assumption. Specifically, on slot $\slot$ the
    (single) leader $\party$ creates exactly one block $\block$, which extends the longest
    chain (adopted by $\party$).
    At the beginning of slot $\slot + 1$, all other parties receive
    $\block$ and, since $\block$ is now part of the (unique) longest chain, all
    parties adopt it. Consequently, all following leaders will extend the chain
    that contains $\block$, so eventually $\block$ will be in the chain output
    by $\observer$.
    Therefore, if all parties follow the protocol and no party can bias the
    leader schedule, then no party can increase its expected rewards by
    deviating from the protocol.

    Regarding profit, a leader creates a block by performing a single
    query to $\oracle_\proto$. Additionally, cost depends only on the number of
    such queries. Therefore, if the cost of performing a single query is less
    than $\reward$, then the profit per slot is larger than $0$, so abstaining
    from even a single slot reduces the expected aggregate profit; therefore,
    all parties are incentivized to participate in all slots.
\end{proof}

\paragraph{Lossy network.}
Second, we assume a lossy, randomized router (cf.
Section~\ref{sec:preliminaries}).\footnote{The randomized router is an example,
which will be used to prove the negative result of
Theorem~\ref{thm:compliant-praos}. Other routers, which would model arguably
more realistic networks, could also be considered to, possibly, achieve
compliance results.}
Theorem~\ref{thm:compliant-ouroboros-lossy} shows that SL-PoS with block
proportional rewards is \emph{not} compliant \wrt the conflicting infraction
predicate $\infractionPredicate_\mathrm{conf}$; specifically, it shows that
$\epsilon$ is upper-bounded by a large value, which is typically
non-negligible.

\begin{theorem}\label{thm:compliant-ouroboros-lossy}
    Assume:
    \begin{inparaenum}[i)]
        \item a lossy, randomized router $\adversary$ with (non-negligible) parameter $\networkLossProb$ (cf. Section~\ref{sec:preliminaries});
        \item $\forall \party' \in \partySet: \miningpower_{\party'} < \frac{1}{2}$;
        \item $\party$ is the party with maximum power $\miningpower_\party$ across the execution and $s_{\party} = \sum_{j \in [1, \epoch]} \epochLength \cdot \miningpower_{\party, j}$ is the expected number of slots for which $\party$ is leader;
        \item $(1 - \networkLossProb) \cdot \reward \gg \cost$.
    \end{inparaenum}

    SL-PoS with block-proportional rewards (cf.
    Definition~\ref{def:proportional-rewards}) is
    \emph{not} $(\epsilon, \infractionPredicate_{conf})$-compliant (cf.
    Definition~\ref{def:compliant}) \wrt :
    \begin{inparaenum}[i)]
        \item utility \emph{Reward} for (non-negligible) $\epsilon < (\networkLossProb - \networkLossProb^t) \cdot \reward \cdot s_{\party}$;
        \item utility \emph{Profit} for (non-negligible) $\epsilon < ((\networkLossProb - \networkLossProb^t) \cdot \reward - ( t - 1) \cdot \cost) \cdot s_{\party}$, where $t = \lfloor\frac{\ln( \frac{\cost}{\reward \cdot \ln(1/\networkLossProb)} )}{\ln(\networkLossProb)}\rfloor$, in both cases under $\adversary$.
    \end{inparaenum}

\end{theorem}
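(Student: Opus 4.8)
The plan is to prove non-compliance directly from Definition~\ref{def:compliant}, by exhibiting a profile in $\mathsf{Cone}_{\epsilon,\bar{\utility}}(\proto)$ that is not $\infractionPredicate_\mathrm{conf}$-compliant. Starting from the all-honest profile $\profile_\proto$, I would let the maximum-power party $\party$ unilaterally deviate via the strategy $\strategy_\mathrm{dev}$ that, on each slot where $\party$ is leader, spends $t$ of its oracle queries to sign $t$ pairwise non-equivalent blocks of the \emph{same} index (behaving honestly on every other slot); call the resulting profile $\profile_\mathrm{dev}$. These blocks share $\mathit{creator}$ and $\mathit{index}$ but differ in content, so $\infractionPredicate_\mathrm{conf}(\executionTrace,\party)=1$ and $\strategy_\mathrm{dev}$ is non-compliant. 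The crux will be to show that $\party$'s best response has utility at least that of $\profile_\mathrm{dev}$ and strictly beats every compliant deviation, forcing the (non-compliant) best response into the cone.

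The core technical step is computing the expected number $\msgOutputSet_{\party,\executionTrace}$ of $\party$'s blocks in the observer's chain $\chain_{\observer,\executionTrace}$ under each profile, adapting Lemma~\ref{lem:unique_succ} to the lossy router. Two features drive the argument: SL-PoS is single-leader, so at each slot only the leader extends the current tip; and the lossy router drops each message globally and independently with probability $\networkLossProb$, keeping all honest views consistent. Hence under $\profile_\proto$ the chain advances at a leader slot precisely when that slot's unique block is delivered (probability $1-\networkLossProb$), and since deliveries are independent of the unbiasable leader schedule, linearity of expectation gives $E[\msgOutputSet_{\party,\execution_{\profile_\proto}}]=(1-\networkLossProb)\cdot s_{\party}$. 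Under $\profile_\mathrm{dev}$ nothing changes except at $\party$'s leader slots, where the height still advances with a block of $\party$ as long as at least one of the $t$ copies survives, i.e.\ with probability $1-\networkLossProb^{t}$ (all copies belong to $\party$, so the randomized tie-break among delivered copies is immaterial to $\party$'s credit); thus $E[\msgOutputSet_{\party,\execution_{\profile_\mathrm{dev}}}]=(1-\networkLossProb^{t})\cdot s_{\party}$.

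From these counts the bounds follow. Each block is worth $\reward$, so the Reward gain is $(\networkLossProb-\networkLossProb^{t})\cdot\reward\cdot s_{\party}$; the same deviation serves utility \emph{Reward} even though its best response would use still more copies, so the stated threshold is a valid (not necessarily tight) lower bound. For \emph{Profit}, $\strategy_\mathrm{dev}$ makes $t-1$ extra queries per leader slot, adding $(t-1)\cdot\cost\cdot s_{\party}$ to the cost and giving Profit gain $\big((\networkLossProb-\networkLossProb^{t})\cdot\reward-(t-1)\cdot\cost\big)\cdot s_{\party}$. The value of $t$ is obtained by maximizing the per-leader-slot profit $g(k)=(1-\networkLossProb^{k})\reward-k\cost$: its continuous relaxation is concave, as $g''(k)=-\reward\,\networkLossProb^{k}(\ln\networkLossProb)^{2}<0$, and $g'(k)=\reward\,\networkLossProb^{k}\ln(1/\networkLossProb)-\cost=0$ yields the maximizer $\ln\!\big(\tfrac{\cost}{\reward\ln(1/\networkLossProb)}\big)/\ln(\networkLossProb)$, whose floor is $t$. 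The hypothesis $(1-\networkLossProb)\reward\gg\cost$ ensures $t\geq2$ and that this profit gain is strictly positive, so deviating genuinely pays.

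Finally, to land the non-compliant profile in the cone I would argue, exactly as in the claim within the proof of Theorem~\ref{thm:universal-profit}, that every $\infractionPredicate_\mathrm{conf}$-compliant unilateral deviation of $\party$ is strictly worse than $\profile_\mathrm{dev}$: a compliant strategy never signs two conflicting blocks of one index, so it contributes at most one block per leader slot, delivered with probability at most $1-\networkLossProb$, bounding its expected count by $(1-\networkLossProb)s_{\party}$ and (since participation is profitable under $(1-\networkLossProb)\reward\gg\cost$) its profit by the honest value. Hence $\party$'s best response $\profile^{*}$ satisfies $\utility_\party(\profile^{*})\geq\utility_\party(\profile_\mathrm{dev})$ and exceeds the best compliant utility, so $\profile^{*}$ is non-compliant; together with the gain estimates, $\profile^{*}$ is directly $\epsilon$-reachable from $\profile_\proto$ for every $\epsilon$ below the stated thresholds, so $\mathsf{Cone}_{\epsilon,\bar{\utility}}(\proto)\not\subseteq(\strategySet_{\infractionPredicate_\mathrm{conf}})^{\totalParties}$. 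I expect the main obstacle to be the second paragraph: rigorously certifying that under the lossy, order-randomized router a delivered leader-slot block is guaranteed to enter $\chain_{\observer,\executionTrace}$ and that conflicting copies alter only the survival probability, not the chain structure---this is where single-leadership, the bounded-depth finalization rule, and independence of drops from the leader schedule must be combined with care.
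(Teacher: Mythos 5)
Your proposal matches the paper's proof essentially step for step: the same deviation (signing $t$ conflicting blocks of the same index at every leader slot), the same expected-value computations ($(1-\networkLossProb)\cdot s_{\party}$ honest blocks versus $(1-\networkLossProb^{t})\cdot s_{\party}$ deviant blocks, hence the gains $(\networkLossProb-\networkLossProb^{t})\cdot\reward\cdot s_{\party}$ and $((\networkLossProb-\networkLossProb^{t})\cdot\reward-(t-1)\cdot\cost)\cdot s_{\party}$), and the same optimization of the per-slot profit whose derivative condition yields $t=\lfloor\ln(\cost/(\reward\cdot\ln(1/\networkLossProb)))/\ln(\networkLossProb)\rfloor$. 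If anything, you are more careful than the paper: your closing argument that every $\infractionPredicate_\mathrm{conf}$-compliant unilateral deviation is capped at the honest utility, so the best response is forced to be non-compliant and hence the cone contains a non-compliant profile, is left implicit in the paper's own proof.
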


\begin{proof}
    Since $\networkLossProb$ is an independent probability, $\party$ cannot
    bias it in any way. Therefore, $\party$ can only try to increase its rewards (beyond
    $\reward_H$) by performing multiple queries.
    To prove the statement, we will show that a party is indeed incentivized to
    produce multiple conflicting blocks, in order to increase the probability
    that at least one of them is delivered.

    Following the honest model, a party's expected rewards are
    $$\reward_H = (1 - \networkLossProb) \cdot \reward \cdot \sum_{j \in [1, \epoch]} \epochLength \cdot \miningpower_{\party, j}$$
    and its expected profit is
    $$P_H = ((1 - \networkLossProb) \cdot \reward - \cost) \cdot s_{\party}$$

    Regarding utility \emph{Reward}, assume a deviation
    $\strategy_{t}$, \st every time $\party$ is the leader of a slot, $\party$ creates $t$
    conflicting blocks. Observe that, if at least one of the $t$ blocks is not
    lost, then $\party$ will get the reward for the corresponding block
    (similar to the proof of Theorem~\ref{thm:compliant-ouroboros-synchronous}).

    The expected rewards under $\strategy_{t}$ are:
    $$\reward_{\strategy_{t}} = (1 - \networkLossProb^t) \cdot \reward \cdot s_{\party}$$

    As $t$ tends to $\infty$, $\networkLossProb^t$ tends to $0$, therefore $\reward_{\strategy}$ tends to $\reward \cdot s_{\party}$, which is the maximum possible amount of rewards that $\party$ can hope for. Therefore, the deviation $\strategy_{\infty}$ is the best response and the utility improvement is:
    $(\networkLossProb - \networkLossProb^t) \cdot \reward \cdot s_{\party}$.

    When $\party$ performs $t$ queries, their aggregate cost is $t \cdot \cost$.
    Therefore, the profit under $\strategy_{t}$ is:
    $$P_{\strategy_{t}} = ((1 - \networkLossProb^t) \cdot \reward - t \cdot \cost) \cdot s_{\party}$$

    Since $(1 - \networkLossProb) \cdot \reward \gg \cost$, performing multiple
    queries yields higher expected profit. However, as more queries are
    performed, the increase rate of the profit $P_{\strategy_{t}}$ decreases,
    until a point where the expected rewards become less than the cost of
    performing $t$ queries.
    To compute this threshold we consider the function of the difference between the profit of the honest protocol and $\strategy_t$:
    \begin{equation*}
    \begin{split}
        &f(t) = ( ((1 - \networkLossProb^t) \cdot \reward - t \cdot \cost) - ((1 - \networkLossProb) \cdot \reward - \cost) ) \cdot s_{\party} \nonumber \Rightarrow \\
       \Rightarrow & f(t) = ((\networkLossProb - \networkLossProb^t) \cdot \reward - ( t - 1) \cdot \cost) \cdot s_{\party} \nonumber
     \end{split}
    \end{equation*}
    To compute the $t$ for which $f(t)$ is maximized, we consider the point where the derivative of $f$ is $0$:
   \begin{equation*}
    \begin{split}
        &f'(t) = (-\ln(\networkLossProb) \cdot \reward \cdot \networkLossProb^t - \cost) \cdot \reward - \cost) ) \cdot s_{\party}= 0 \nonumber \Rightarrow \\
       \Rightarrow &  t = \frac{\ln( -\frac{\cost}{\reward \cdot \ln(\networkLossProb)})}{\ln(\networkLossProb)}\;.
        \end{split}
    \end{equation*}

\end{proof}

The lossy network analysis is particularly of interest as it was observed
in practice. On December 2019, Cardano released its Incentivized Testnet
(ITN)\footnote{\url{https://staking.cardano.org/}}, where
stakeholders, \ie users owning Cardano's currency, participated in PoS by
forming stake pools that produced blocks. The ITN used proportional rewards
and the SL-PoS execution model of epochs and slots.
Each pool was elected as a slot leader proportionally to its
stake and received its share of an epoch's rewards based on its
\emph{performance}, \ie the number of produced blocks compared to the
\emph{expected} blocks (based on its proportional stake). Thus, pool operators
were particularly incentivized to avoid abstaining, \ie failing to produce a block when
elected. However, the network was
unstable and lossy, so forks started to form. In turn, pools were incentivized\footnote{
\url{https://www.reddit.com/r/cardano/comments/ekncza}} to ``clone'' their
nodes, \ie run multiple parallel instances, to increase network
connectivity, reduce packet loss, and also extend all possible forks. To
make matters worse, this solution both perpetuated forks and created
new ones, as clones did not coordinate but produced different blocks, even
when extending the same chain.

We note that, although a lossy network may render a PoS protocol
non-compliant, the same does not hold for PoW. As described in the
proof of Theorem~\ref{thm:compliant-ouroboros-lossy}, a party produces multiple
blocks per slot to maximize the probability that one of them is eventually
output by $\observer$. Notably, since the PoS protocol restricts that at most
one block extends the longest chain per slot, these blocks are necessarily
conflicting. However, PoW ledgers do not enforce such restriction; therefore, a
party would instead create multiple consecutive (instead of parallel,
conflicting) blocks, as covered in the proof of
Theorem~\ref{thm:bitcoin_eq_approx}, which yields maximal expected rewards even
under a lossy network.

\subsubsection{Multi-Leader Proof-of-Stake}\label{subsec:multi-leader-pos}

We now turn to multi-leader PoS (ML-PoS) and Ouroboros Praos~\cite{EC:DGKR18}, a representative of a family alongside Ouroboros
Genesis~\cite{CCS:BGKRZ18}, Peercoin~\cite{king2012ppcoin}, and Tezos' baking
system~\cite{tezos-pos}. These protocols are similar SL-PoS, but with a core difference:
multiple parties may be chosen as leaders for the same slot.
As Theorem~\ref{thm:compliant-praos} shows, ML-PoS protocols are not compliant for block-proportional rewards.
The core idea is the same as with SL-PoS under a lossy network:
a party is incentivized to produce multiple blocks
to decrease the probability that a competing leader's competing block
is adopted over their own. We note that, although consensus doesn't
enforce a tie-breaking policy for competing messages,
parties typically opt for the message
that arrives first. The dependency on
randomized routing is also worth noting. Since alternative routers could yield
positive results, an interesting research direction is to explore the class of
routers under which compliance holds, possibly
avoiding infractions via specially-crafted peer-to-peer
message passing protocols.

\begin{theorem}\label{thm:compliant-praos}
    Assume:
    \begin{inparaenum}[i)]
        \item a synchronous, randomized router $\adversary$ (cf. Section~\ref{sec:preliminaries});
        \item $\forall \party' \in \partySet: \miningpower_{\party'} < \frac{1}{2}$;
        \item $\party$ is the party with maximum power $\miningpower_\party$ across the execution and $s_{\party} = \sum_{j \in [1, \epoch]} \epochLength \cdot \miningpower_{\party, j}$ is the expected number of slots \st $\party$ is leader;
        \item $(1 - \networkLossProb) \cdot \reward \gg \cost$.
    \end{inparaenum}
    Let $p_l$ be the (protocol-dependent) probability that multiple leaders are elected in the same slot.

    ML-PoS with block-proportional rewards (cf.
    Definition~\ref{def:proportional-rewards}) is \emph{not} $(\epsilon,
    \infractionPredicate_{conf})$-compliant (cf.
    Definition~\ref{def:compliant}) \wrt :
    \begin{inparaenum}[i)]
        \item utility \emph{reward} for (non-negligible) $\epsilon < \frac{p_l}{2} \cdot \reward \cdot s_{\party}$;
        \item utility \emph{profit} for (non-negligible) $\epsilon < (\frac{t - 1}{2 \cdot (t+1)} \cdot p_l \cdot \reward - ( t - 1) \cdot \cost) \cdot s_{\party}$, where $t = \lfloor\sqrt{\frac{p_l \cdot \reward}{\cost}}\rfloor - 1$, in both cases under $\adversary$.
    \end{inparaenum}

\end{theorem}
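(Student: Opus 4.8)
The plan is to follow the template of Theorem~\ref{thm:compliant-ouroboros-lossy}, replacing the message-loss mechanism with the tie-breaking induced by the randomized router. First I would fix the maximum-power party $\party$ and compute its honest baseline. Under the synchronous, randomized router there is no loss, but whenever $\party$ is a slot leader and at least one other leader is elected for the same slot (an event of probability $p_l$), both produce a block of the same index; since honest parties and $\observer$ adopt the block that arrives first and the delivery order is uniform, $\party$'s block is output by $\observer$ with probability $\tfrac12$. Hence the honest per-leader-slot success probability is $(1-p_l)\cdot 1 + p_l\cdot\tfrac12 = 1-\tfrac{p_l}{2}$, giving $\reward_H = \reward\cdot s_{\party}\cdot(1-\tfrac{p_l}{2})$ and $P_H = \reward_H - \cost\cdot s_{\party}$, as each honest leader-slot needs a single oracle query.

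Next I would define the non-compliant deviation $\strategy_t$: whenever $\party$ is a leader, it queries $\oracle_\proto$ $t$ times to create $t$ conflicting blocks of the same index (these are necessarily conflicting, since ML-PoS permits only one block per index per slot, so $\strategy_t$ violates $\infractionPredicate_{\mathrm{conf}}$). The key probabilistic step is the win probability against a competing leader: the $t$ blocks of $\party$ together with the single competitor block are delivered in uniformly random order, and $\party$ loses the index only if the competitor's block is first among these $t+1$ messages, which happens with probability $\tfrac{1}{t+1}$. Thus $\party$ wins with probability $\tfrac{t}{t+1}$, giving $\reward_{\strategy_t} = \reward\cdot s_{\party}\cdot\big((1-p_l)+p_l\tfrac{t}{t+1}\big)$ and reward gain $\reward_{\strategy_t}-\reward_H = \reward\cdot s_{\party}\cdot p_l\cdot\tfrac{t-1}{2(t+1)}$. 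This is increasing in $t$ and tends to $\tfrac{p_l}{2}\reward\, s_{\party}$ as $t\to\infty$, so $\strategy_\infty$ is the best response and part (i) follows for every $\epsilon < \tfrac{p_l}{2}\cdot\reward\cdot s_{\party}$.

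For part (ii) I would account for cost: $\strategy_t$ uses $t$ queries per leader-slot, so its profit is $P_{\strategy_t} = \reward_{\strategy_t} - t\cdot\cost\cdot s_{\party}$ and the profit gain is
\[
P_{\strategy_t}-P_H = \Big(\tfrac{t-1}{2(t+1)}\cdot p_l\cdot\reward - (t-1)\cdot\cost\Big)\cdot s_{\party}.
\]
Maximising the bracketed function $f(t)$ over $t$ by setting $f'(t)=\tfrac{p_l\reward}{(t+1)^2}-\cost=0$ gives $t+1=\sqrt{p_l\reward/\cost}$, and taking the integer floor yields the claimed optimiser $t=\lfloor\sqrt{p_l\reward/\cost}\rfloor-1$; the assumption $(1-\networkLossProb)\cdot\reward\gg\cost$ guarantees this $t$ is positive and that $f(t)$ is a non-negligible positive quantity, establishing part (ii). In both cases the best-response deviation produces conflicting blocks, so by Definitions~\ref{def:blockchain-infraction} and~\ref{def:compliant} it is a non-$\infractionPredicate_{\mathrm{conf}}$-compliant profile lying in $\mathsf{Cone}_{\epsilon,\bar{\utility}}(\proto)$ for the stated $\epsilon$, witnessing non-compliance.

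The main obstacle will be discharging the best-response requirement of Definition~\ref{def:reach}: the profile placed in the cone must be $\party$'s best response, so I must argue that no $\infractionPredicate_{\mathrm{conf}}$-compliant strategy attains a comparable gain---an argument analogous to the claim embedded in the proof of Theorem~\ref{thm:universal-profit}, showing that a compliant strategy is confined to a single block per index and hence cannot improve the tie-break odds. A secondary modelling subtlety is the treatment of slots with three or more simultaneous leaders: the clean $\tfrac{t}{t+1}$ and $\tfrac12$ probabilities describe competition against a single rival, so I would either interpret $p_l$ as the probability of exactly one competing leader or argue that the two-leader term dominates the gain, keeping the stated bounds as the leading-order estimate.
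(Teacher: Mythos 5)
Your proposal matches the paper's own proof in all essentials: the same honest baseline $\reward_H = (1-\tfrac{p_l}{2})\cdot\reward\cdot s_{\party}$ via uniform tie-breaking, the same deviation $\strategy_t$ producing $t$ conflicting blocks with win probability $\tfrac{t}{t+1}$, and the same optimization $f'(t)=\tfrac{p_l\reward}{(t+1)^2}-\cost=0$ yielding $t=\sqrt{p_l\reward/\cost}-1$ for the profit bound. The two obstacles you flag at the end (the best-response requirement and the multi-leader case) are in fact glossed over by the paper's proof as well, so your treatment is, if anything, slightly more careful than the original.
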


\begin{proof}
    First, we define a bad event $E$, during which the expected rewards of
    party $\party$ are less if following $\proto$, compared to a non-compliant
    strategy.

    Let $\slot$ be a slot during which $\party$ is leader, along with a
    different party $\party'$; this occurs with probability $p_l$. Also, on
    slot $\slot + 1$ a single leader $\party''$ exists. $E$ occurs if
    $\party''$ receives a block $\block'$ from $\party'$ before a block
    $\block$ from $\party$ and thus extends the former. If $E$ occurs, the
    reward for slot $\slot$ is credited to $\party'$.

    Let $p_E$ be the probability that $E$ occurs. First, $p_E$ depends on the
    probability $p_l$ that multiple leaders exist alongside $\party$. $p_l$
    depends on the protocol's leader schedule functionality, but is
    typically non-negligible. Second, it depends on the order delivery of
    $\block, \block'$; since the delivery is
    randomized, the probability $p_n$
    that $\block'$ is delivered before $\block$ is at most $p_n = \frac{1}{2}$.
    Therefore, it holds $p_E = p_l \cdot p_n = \frac{1}{2} p_l$, which is
    non-negligible.

    If all parties follow $\proto$, the expected reward of $\party$ is at most
    $\reward_H = (1 - \frac{1}{2} p_l) \cdot \reward \cdot s_{\party}$.

    The proof now follows the same reasoning as
    Theorem~\ref{thm:compliant-ouroboros-lossy}. Specifically, we will show a
    deviation \st $\party$ creates multiple blocks to increase the probability
    that at least one of them is delivered to the other parties first.

    Regarding utility \emph{Reward}, assume a deviation
    $\strategy_{t}$, \st every time $\party$ is the leader of a slot, $\party$ creates $t$
    conflicting blocks. The expected rewards under $\strategy_{t}$ are:
    $$\reward_{\strategy_{t}} = (1 - \frac{1}{t+1} p_l) \cdot \reward \cdot s_{\party}$$

    As $t$ tends to $\infty$, $\frac{1}{t+1}$ tends to $0$, therefore
    $\reward_{\strategy}$ tends to $\reward \cdot s_{\party}$, which is the
    maximum possible amount of rewards that $\party$ can hope for.  Therefore,
    the deviation $\strategy_{\infty}$ is the best response and the utility
    improvement is: $\frac{p_l}{2} \cdot \reward \cdot s_{\party}$.

    Regarding costs, the aggregate cost of $t$ queries is $t \cdot \cost$.
    Therefore, the profit under $\strategy_{t}$ is:
    $$P_{\strategy_{t}} = ( (1 - \frac{1}{t+1} p_l) \cdot \reward - t \cdot \cost ) \cdot s_{\party}$$

    As in Theorem~\ref{thm:compliant-ouroboros-lossy}, our goal is to find the
    maximum $t$ \st the utility increase of $P_{\strategy_{t}}$ is maximized.
    Now, the utility increase function is:
    \begin{equation*}
    \begin{split}
        &f(t) = ( ((1 - \frac{1}{t+1} p_l) \cdot \reward - t \cdot \cost) - ((1 - \frac{1}{2} p_l) \cdot \reward - \cost) ) \cdot s_{\party} \nonumber \Rightarrow \\
    \Rightarrow &    f(t) = (\frac{t - 1}{2 \cdot (t+1)} \cdot p_l \cdot \reward - ( t - 1) \cdot \cost) \cdot s_{\party} \nonumber
     \end{split}
    \end{equation*}

    To compute the $t$ for which $f(t)$ is maximized, we consider the point where the derivative of $f$ is $0$:
     \begin{equation*}
    \begin{split}
       & f'(t) =  (\frac{p_l \cdot \reward}{(t+1)^2} - \cost) ) \cdot s_{\party}= 0 \nonumber \Rightarrow \\
       \Rightarrow & t = \sqrt{\frac{p_l \cdot \reward}{\cost}} - 1\;.
    \end{split}
    \end{equation*}

\end{proof}

\section{Compliant Non-equilibria}\label{sec:compliant-non-equilibrium}

So far, our positive results \wrt compliance relied on showing that the
protocol is an equilibrium. In this section, we demonstrate the distinction
between the two notions via protocols that are compliant \wrt a non-trivial
infraction predicate, but not Nash equilibria.

\paragraph{Protocol specifications.}
To do this, we consider a simple, yet typical, Single-Leader PoS (SL-PoS)
protocol $\proto$, which features the following characteristics:
\begin{itemize}
    \item The slot leaders are randomly elected, directly proportional to their staking power.
    \item The staking power $\miningpower_{\party}$ of a party $\party$ remains fixed across the execution (this always holds when employing resource-proportional rewards).
    \item Only if elected as slot leader, $\party$ will make a single query to the signing oracle $\oracle_\proto$ and casts the received block at the specific time slot.
    \item The single query cost is $\cost$, a (typically small) polynomial on the security parameter $\secparam$.
\end{itemize}

We provide two results. First, we show that, under resource-proportional
rewards, $\proto$ is compliant \wrt $\infractionPredicate_{conf}$ but
non-compliant (hence not an equilibrium) \wrt $\infractionPredicate_{abs}$.
Next, we show that, under block-proportional rewards, $\proto$ is compliant \wrt
$\infractionPredicate_{conf}$ but is susceptible to \emph{selfish signing}, an
attack akin to selfish mining, when it is ``predictable'', i.e., the leader schedule is known in advance~\cite{DBLP:conf/ec/Brown-CohenNPW19}..

\subsection{Proof-of-Stake under Resource Proportional Rewards}\label{sec:PoS_comp}

First, we consider $\proto$ under resource-proportional rewards (cf. Definition~\ref{def:universal-rewards}) and utility Profit and investigate its compliance \wrt the two types of attacks captured by $\infractionPredicate_\mathrm{conf}$ and $\infractionPredicate_\mathrm{abs}$ (cf. Definition~\ref{def:blockchain-infraction}).
The goal of this study is to show  that, under a well-defined interval of approximation factor values, the protocol, although non $\infractionPredicate_\mathrm{abs}$-compliant (hence, also non approximate Nash equilibrium), operates in a $\infractionPredicate_\mathrm{conf}$-compliant manner. We note that non $\infractionPredicate_\mathrm{abs}$-compliance is consistent with Theorem~\ref{thm:universal-profit}; in particular, it applies that result and assigns concrete values to that theorem's generic parameters.

\begin{theorem}\label{thm:PoS_profit}
    Let:
    \begin{inparaenum}[i)]
        \item $\proto$ be the SL-PoS blockchain protocol specified in Section~\ref{sec:PoS_comp};
        \item $\adversary$ be a synchronous router;
        \item $\totalReward$ be the total rewards distributed by the protocol;
       \item $\universalRewardFunc: [0, 1] \rightarrow [0, 1]$ be the identity resource-proportional reward function, i.e., $\universalRewardFunc(\miningpower_{\party})=\miningpower_{\party}$;
        \item $N\geq\kappa^c$ be the number of time slots of the execution, where $\secparam$ is the security parameter and $c$ is a sufficiently large constant;
        \item $\party_\mathrm{max}$ is the party with the maximum staking power $ \miningpower_{\party_{\mathrm{max}}}$.
    \end{inparaenum}
    If $ \miningpower_{\party_{\mathrm{max}}}<\frac{1}{2}$, then the following hold:
    \begin{inparaenum}[i)]
        \item for every $\epsilon \geq 0$, $\proto$ is $(\epsilon,\infractionPredicate_\mathrm{conf})$-compliant \wrt utility \emph{Profit};
        \item for $\epsilon_\mathrm{max}:=\miningpower_{\party_{\mathrm{max}}}\cdot N\cdot\cost - \miningpower_{\party_{\mathrm{max}}}^{N+1}\cdot \totalReward$ and every $\epsilon<\epsilon_\mathrm{max}$, $\proto$ is \emph{not} $(\epsilon,\infractionPredicate_\mathrm{abs})$-compliant \wrt utility \emph{Profit}, in both cases under $\adversary$.
    \end{inparaenum}

\end{theorem}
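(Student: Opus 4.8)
The plan is to prove the two parts by different routes: part (i) is a self-contained dominance argument that exploits the structure of resource-proportional rewards, while part (ii) is an instantiation of Theorem~\ref{thm:universal-profit} with the concrete parameters of $\proto$.

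For part (i), I would first record that, by Definition~\ref{def:universal-rewards}, party $\party_i$'s reward equals $\miningpower_{\party_i}\cdot\totalReward$ on every trace that contains at least one valid block and is $0$ otherwise; in particular it is independent of which, or how many, blocks $\party_i$ produces. I would use this to show that signing conflicting blocks is never a best response. Fix any strategy $\strategy_i$ under which $\party_i$ produces two conflicting valid blocks of equal index in some slot; since only the slot leader can produce valid blocks, this happens in a leader slot, and each of the two blocks needs its own query to $\oracle_\proto$, so by Assumption~\ref{ass:zero-cost} the duplicate query carries strictly positive cost. Consider the ``deduplicated'' strategy $\strategy_i'$ that coincides with $\strategy_i$ except that, in every slot where $\strategy_i$ signs conflicting blocks, it issues only one of those queries: it then produces a valid block in exactly the same set of slots, so the event ``at least one valid block in $\executionTrace$'' is unchanged and the expected reward is identical, while the expected cost strictly decreases. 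Hence $\utility_i(\strategy_i',\,\cdot\,)>\utility_i(\strategy_i,\,\cdot\,)$ against every fixed profile of the remaining parties, so $\strategy_i$ fails the best-response clause~(iii) of Definition~\ref{def:reach} and cannot occur in any directly reachable profile. As every profile in $\mathsf{Cone}_{\epsilon,\bar{\utility}}(\proto)$ is obtained through a chain of such best-response steps, and $\proto$ itself never signs conflicting blocks, no cone profile contains a conflicting strategy, which yields $(\epsilon,\infractionPredicate_\mathrm{conf})$-compliance for every $\epsilon\geq0$.

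For part (ii), I would instantiate the generic quantities of Theorem~\ref{thm:universal-profit}. Because a unique leader is elected each slot, the all-honest execution produces a block in every slot, so $\alpha=0$; party $\party_i$ is leader in expectation $N\miningpower_{\party_i}$ slots at cost $\cost$ each, so its honest expected cost is $\cost_{\party_i}=N\miningpower_{\party_i}\cdot\cost$; and honest parties make one query per leader slot, so $q=1$. When $\party_i$ abstains while all others stay honest, the trace is block-free exactly when $\party_i$ is the leader of all $N$ slots, whence $\beta_i=\miningpower_{\party_i}^{N}$. Since $\miningpower_{\party_i}\leq\miningpower_{\party_{\mathrm{max}}}<\frac{1}{2}$ and $N\geq\secparam^{c}$, each $\beta_i$ is negligible, so the precondition $\cost>\beta_i\cdot\universalRewardFunc(\miningpower_{\party_i})\cdot\totalReward\cdot q$ of Theorem~\ref{thm:universal-profit} holds (its right-hand side is negligible, whereas $\cost$ is a positive polynomial). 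Substituting $\universalRewardFunc=\mathrm{id}$, $\alpha=0$, $\beta_i=\miningpower_{\party_i}^{N}$ and $\cost_{\party_i}=N\miningpower_{\party_i}\cdot\cost$ into the theorem's non-compliance threshold and reading off the term indexed by $\party_{\mathrm{max}}$ gives $N\miningpower_{\party_{\mathrm{max}}}\cdot\cost-\miningpower_{\party_{\mathrm{max}}}^{N+1}\cdot\totalReward=\epsilon_\mathrm{max}$. As this term lower-bounds the maximum over $i$ that appears in Theorem~\ref{thm:universal-profit}, for every $\epsilon<\epsilon_\mathrm{max}$ the protocol is not $(\epsilon,\infractionPredicate_\mathrm{abs})$-compliant \wrt utility \emph{Profit}.

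I expect the main obstacle to be in part (i): the dominance must be argued against an \emph{arbitrary} profile of the other parties (which may themselves be mid-deviation), so that the conclusion propagates to the entire cone and not merely to one-step deviations from $\profile_\proto$, and the deduplication must be shown to preserve the ``at least one valid block'' event exactly, so that the reward is genuinely unchanged rather than only upper bounded. A secondary point is to reconcile $\infractionPredicate_\mathrm{abs}$ with the fact that honest parties query only in their leader slots, so that $\proto$ is indeed $\infractionPredicate_\mathrm{abs}$-compliant as Theorem~\ref{thm:universal-profit} assumes, and to confirm that $N\geq\secparam^{c}$ drives $\miningpower_{\party_{\mathrm{max}}}^{N}$ low enough for both the precondition and the positivity of $\epsilon_\mathrm{max}$.
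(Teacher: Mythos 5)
Your part (ii) is correct and is essentially the paper's own argument: the paper likewise instantiates Theorem~\ref{thm:universal-profit} with $\alpha=0$, $\beta_i=\miningpower_{\party_i}^{N}$, $q=1$, $\cost_i=\miningpower_{\party_i}\cdot N\cdot\cost$ and $\universalRewardFunc=\mathrm{id}$, verifies the precondition via the negligibility of $\miningpower_{\party_i}^{N}$ (for $N\geq\secparam^c$ and $\miningpower_{\party_i}<\frac{1}{2}$), and evaluates the resulting threshold at $\party_\mathrm{max}$.

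Part (i), however, contains a genuine gap. Your dominance claim --- that $\utility_i(\strategy_i',\strategy_{-i})>\utility_i(\strategy_i,\strategy_{-i})$ for every fixed vector $\strategy_{-i}$ of the other parties' strategies --- is false as stated, because the executions under $(\strategy_i,\strategy_{-i})$ and $(\strategy_i',\strategy_{-i})$ cease to be coupled at the first slot in which $\strategy_i$ signs conflicting blocks: from that point on the other parties observe one published block instead of two, and $\party_i$'s own observation history (including its own past actions) differs. Your reward analysis survives this divergence --- both traces contain at least one valid block, and under resource-proportional rewards nothing else matters --- but the cost analysis does not. A strategy may condition future behaviour on whether conflicting blocks have appeared; for instance, $\strategy_i$ may prescribe ``if my past actions include two conflicting blocks in some slot, make no further queries; otherwise make many junk queries to $\oracle_\proto$ in every subsequent slot''. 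Deduplication does not remove the junk queries (they are not conflicting-block queries), so $\strategy_i'$ incurs strictly higher expected cost than $\strategy_i$, even against honest opponents; reactive opponents (which in a mid-cone profile are $\infractionPredicate_\mathrm{conf}$-compliant but otherwise arbitrary, hence free to respond to observed conflicts) break the comparison independently. Consequently your argument does not establish that conflicting strategies are never best responses. This is exactly the difficulty around which the paper's proof is organized: it proves the deduplication dominance only for \emph{$\infractionPredicate_\mathrm{conf}$-agnostic} profiles (Claim~\ref{claim:non_conf1}), where by definition no strategy conditions on conf-compliance and the two executions stay in lockstep, and it separately shows (Claim~\ref{claim:static1}) that every profile with such conditioning is utility-dominated by an agnostic one, by iteratively eliminating the conditional checks through a case analysis; the induction over the cone then invokes both claims. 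That reduction is the bulk of the paper's proof of part (i) and is entirely absent from your proposal; the obstacles you anticipated (propagation through the cone, exact preservation of the ``at least one block'' event) are not where the real difficulty lies.
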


\begin{proof} We begin by introducing a useful notion. We say that a strategy profile $\profile=\langle\strategy_1,\ldots,\strategy_\totalParties\rangle$ is \emph{$\infractionPredicate_\mathrm{conf}$-agnostic}, if for every $i\in[\totalParties]$, the strategy $\strategy_i$ of $\party_i$ does not depend on the $\infractionPredicate_\mathrm{conf}$-compliance of the other parties' strategies. For instance, $\strategy_i$ does not include checks such as ``if some party is creating conflicting blocks, then create conflicting blocks'', or ``if $\party_j$ does not create conflicting blocks, then abstain''. Clearly, $\profile_\proto$ is $\infractionPredicate_\mathrm{conf}$-agnostic. We note that an $\infractionPredicate_\mathrm{conf}$-agnostic strategy profile may still be non $\infractionPredicate_\mathrm{conf}$-compliant, i.e., the parties may create conflicting blocks independently of other parties' behavior.

In the following claim, we prove that it is not in the parties' interest to deviate by creating conflicting blocks, when they are behaving according to a $\infractionPredicate_\mathrm{conf}$-agnostic strategy profile.

\begin{claim}\label{claim:non_conf1}
Let $\profile=\langle\strategy_1,\ldots,\strategy_\totalParties\rangle\notin (\strategySet_{\infractionPredicate_{\mathrm{conf}}})^\totalParties$ be a strategy profile, i.e., for some party $\party_i\in\partySet$ and some trace $\executionTrace$ where $\party_i$ employs $\strategy_i$, it holds that $\infractionPredicate_{\mathrm{conf}}(\executionTrace,\party_i)=1$. If $\profile$ is also $\infractionPredicate_\mathrm{conf}$-agnostic, then there is a strategy profile $\profile'\in(\strategySet_{\infractionPredicate_{\mathrm{conf}}})^\totalParties$ where $\party_i$ unilaterally deviates from $\profile$, such that $\utility_{\party_i}(\profile')>\utility_{\party_i}(\profile)$.
\end{claim}

\noindent\emph{Proof of Claim~\ref{claim:non_conf1}}. We define $\profile'$ as follows: in any execution, $\party_i$ makes only one query to $\oracle_\proto$ in all time slots that $\party_i$ decided to produce conflicting blocks \wrt $\profile$. Whether $\party_i$ decides to honestly extend the longest chain or to create a fork (e.g., by performing selfish signing) remains unchanged in any corresponding executions \wrt $\profile$ or $\profile'$.

Assume that under $\profile$, $\party_i$ is the leader for a sequence of slots $\slot,\ldots,\slot+d$, $d\geq0$, and by creating conflicting blocks it produces a tree of blocks signed by $\party_i$, rooted at some block that was already in the chain when $\slot$ was reached. Observe that the height of the said tree is at most $d+1$ (if $\party_i$ does not abstain at any of slots $\slot,\ldots,\slot+d$).
This implies that when the longest chain rule is applied by the other parties and $\Omega$, $d'\leq d+1$ blocks of $\party_i$ for that period will be included by following a longest path of length $d'$ on the tree. On the other hand, if $\party_i$ behaves \wrt $\profile'$, then a single path of exactly $d'$ blocks of $\party_i$ will be included in the chain during $\slot,\ldots,\slot+d$.
Given that $\profile$ is $\infractionPredicate_\mathrm{conf}$-agnostic, the other parties' behavior in $\profile'$ remains the same as in $\profile$. So, it holds that $E[\reward_{\party_i, \execution_{\profile'}}]= E[\reward_{\party_i, \execution_\profile}]$ and for every other party $\hat{P}\neq\party_i$, it also holds that $E[\reward_{\hat{\party}, \execution_{\profile'}}]= E[\reward_{\hat{\party}, \execution_\profile}]$.

On the other hand, since $\profile\notin (\strategySet_{\infractionPredicate_{\mathrm{conf}}})^\totalParties$, and since creating conflicting blocks costs more than making only single queries, it holds that there are traces \wrt $\profile$ where the cost of $\party_i$ is strictly larger than the cost of $\party_i$ in the corresponding trace (same random coins) \wrt $\profile'$. Thus, it holds that $E[\cost_{\party_i, \execution_{\profile'}}]< E[\cost_{\party_i, \execution_\profile}]$.

Given the above, we conclude that
\begin{equation*}
\begin{split}
\utility_{\party_i}(\profile')&=E[\reward_{\party_i, \execution_{\profile'}}] - E[\cost_{\party_i, \execution_{\profile'}}]>E[\reward_{\party_i, \execution_{\profile}}] - E[\cost_{\party_i, \execution_{\profile}}]=\utility_{\party_i}(\profile)\;.
\end{split}
\end{equation*}
\hfill $\dashv$\\

Next, in the following claim, we show that by $\proto$'s description, we can focus on  $\infractionPredicate_\mathrm{conf}$-agnostic strategy  profiles.

\begin{claim}\label{claim:static1}
Let $\profile=\langle\strategy_1,\ldots,\strategy_\totalParties\rangle$ be a non  $\infractionPredicate_\mathrm{conf}$-agnostic strategy profile. Then, there exists a strategy profile $\profile'$ that is  $\infractionPredicate_\mathrm{conf}$-agnostic and for every $\party_i\in\partySet$ it holds that $\utility_{\party_i}(\profile')\geq\utility_{\party_i}(\profile)$.
\end{claim}

\noindent\emph{Proof of Claim~\ref{claim:static1}}.
We provide a constructive proof for creating $\profile'$; by expressing the strategy $\strategy_i$ as an algorithm, since $\profile$ is non  $\infractionPredicate_\mathrm{conf}$-agnostic, for $i\in[\totalParties]$, $\strategy_i$ potentially contains in each slot $\slot$ checks of the form ``$\mathbf{if}(A)\;\mathbf{then}\{\mathsf{cmd}_A\}\;\mathbf{else}\{\mathsf{cmd}_{\neg A}\}$'', where (i) $A$ is some condition related to the other parties' strategies regarding $\infractionPredicate_\mathrm{conf}$-compliance, an (ii) by $\proto$'s description, the commands $\mathsf{cmd}_A,\mathsf{cmd}_{\neg A}$ are selected from the following types of commands:
\begin{enumerate}
\item ``do nothing'';
\item ``abstain from querying $\oracle_\proto$'';
\item ``make one query to $\oracle_\proto$ and extend the longest chain''; (honest behavior)
\item ``make one query to $\oracle_\proto$ and create a fork''; (set of commands that includes selfish signing)
\item ``make multiple queries to $\oracle_\proto$ and extend the longest chain''; (set of commands that includes creation of conflicting blocks)
\item ``make multiple queries to $\oracle_\proto$ and create forks''; (set of commands)
\item ``make at least one query to $\oracle_\proto$ but publish no block''; (set of commands)
\item ``$\mathbf{if}(B)\;\mathbf{then}\{\mathsf{cmd}_B\}\;\mathbf{else}\{\mathsf{cmd}_{\neg B}\}$''; (nested $\mathbf{if}$)
\end{enumerate}

We show how we can ``remove'' checks as above, where in the case of nested $\mathbf{if}$ we proceed from the inner to the outer layer.

An inner layer check contains commands of type 1-7. The crucial observation is that the change in the utility by the behavior that derives from each of these seven types of commands is \emph{independent} from other parties' actions regarding $\infractionPredicate_\mathrm{conf}$-compliance. Namely, the change in rewards and cost by executing a command of type 1-7 is not affected by the case that some other subset of parties have created conflicting blocks or not so far (and when).

Thus, for the check ``$\mathbf{if}(A)\;\mathbf{then}\{\mathsf{cmd}_A\}\;\mathbf{else}\{\mathsf{cmd}_{\neg A}\}$'' in slot $\slot$, we consider two strategies $\strategy_{i,A}$, $\strategy_{i,\neg A}$ defined as follows; the party $\party_i$ behaves as in $\strategy_i$ with the following modification: at slot $\slot$, $\party_i$ always executes $\mathsf{cmd}_A$ (resp. $\mathsf{cmd}_{\neg A}$) when following $\strategy_{i,A}$ (resp. $\strategy_{i,\neg A}$).

Now, let $\profile_{i,A}$ (resp. $\profile_{i,\neg A}$) be the strategy profile where $\party_i$ follows $\strategy_{i,A}$ (resp. $\strategy_{i,\neg A}$) and all the other parties follow the same strategy as $\profile$. By the description of $\strategy_{i,A}$, $\strategy_{i,\neg A}$, we show that
\[\big(\utility_{\party_i}(\profile_{i,A})\geq\utility_{\party_i}(\profile)\big)\lor\big(\utility_{\party_i}(\profile_{i,\neg A})\geq\utility_{\party_i}(\profile)\big)\;.\]
Intuitively, the above holds because the utility of $\party_i$ depends on which of the two commands $\mathsf{cmd}_{i,A}$, $\mathsf{cmd}_{i,\neg A}$ will be executed, independently of the other parties' strategies regarding $\infractionPredicate_\mathrm{conf}$-compliance. Formally, we want to prove that it always holds that
\begin{equation}\label{eq:bounded_by_maximum}
\utility_{\party_i}(\profile)\leq\mathrm{max}\big\{\utility_{\party_i}(\profile_{i,A}),\utility_{\party_i}(\profile_{i,\neg A})\big\}\;.
\end{equation}
By the definition of  $\strategy_{i,A}$, $\strategy_{i,\neg A}$, we have that

\begin{align*}
\Pr[\reward_{\party_i, \execution_{\profile}}=x|A]=\Pr[\reward_{\party_i, \execution_{\profile_{i,A}}}=x]\quad&
\quad
\Pr[\reward_{\party_i, \execution_{\profile}}=x|\neg A]=\Pr[\reward_{\party_i, \execution_{\profile_{i,\neg A}}}=x]\\
Pr[\cost_{\party_i, \execution_{\profile}}=y|A]=\Pr[\cost_{\party_i, \execution_{\profile_{i,A}}}=y]\quad&
\quad
\Pr[\cost_{\party_i, \execution_{\profile}}=y|\neg A]=\Pr[\cost_{\party_i, \execution_{\profile_{i,\neg A}}}=y]
\end{align*}
By the above, we get that

\begin{equation*}\label{eq:profile_profit-split}
\begin{split}
\utility_{\party_i}(\profile)&=E[\reward_{\party_i, \execution_{\profile}}]-E[\cost_{\party_i, \execution_{\profile}}]=\sum_x\Pr[\reward_{\party_i, \execution_{\profile}}=x]\cdot x-\sum_y\Pr[\cost_{\party_i, \execution_{\profile}}=y]\cdot y=\\
&=\Pr[A]\cdot\sum_x\Pr[\reward_{\party_i, \execution_{\profile}}=x|A]\cdot x+\Pr[\neg A]\cdot\sum_x\Pr[\reward_{\party_i, \execution_{\profile}}=x|\neg A]\cdot x-\\
&\quad-\Pr[A]\cdot\sum_y\Pr[\cost_{\party_i, \execution_{\profile}}=y|A]\cdot y-\Pr[\neg A]\cdot\sum_y\Pr[\cost_{\party_i, \execution_{\profile}}=y|\neg A]\cdot y=\\
&=\Pr[A]\cdot\Big(\sum_x\Pr[\reward_{\party_i, \execution_{\profile}}=x|A]\cdot x-\sum_y\Pr[\cost_{\party_i, \execution_{\profile}}=y|A]\cdot y\Big)  +\\
&\quad+\Pr[\neg A]\cdot\Big(\sum_x\Pr[\reward_{\party_i, \execution_{\profile}}=x|\neg A]\cdot x-\sum_y\Pr[\cost_{\party_i, \execution_{\profile}}=y|\neg A]\cdot y\Big)=\\
&=\Pr[A]\cdot\Big(\sum_x\Pr[\reward_{\party_i, \execution_{\profile_{i,A}}}=x]\cdot x-\sum_y\Pr[\cost_{\party_i, \execution_{\profile_{i,A}}}=y]\cdot y\Big)  +\\
&\quad+\Pr[\neg A]\cdot\Big(\sum_x\Pr[\reward_{\party_i, \execution_{\profile_{i,\neg A}}}=x]\cdot x-\sum_y\Pr[\cost_{\party_i, \execution_{\profile_{i,\neg A}}}=y]\cdot y\Big)=\\
&=\Pr[A]\cdot\big( E[\reward_{\party_i, \execution_{\profile_{i,A}}}]-E[\cost_{\party_i, \execution_{\profile_{i,A}}}]\big)+\Pr[\neg A]\cdot \big(E[\reward_{\party_i, \execution_{\profile_{i,\neg A}}}]-E[\cost_{\party_i, \execution_{\profile_{i,\neg A}}}]\big)=\\
&=\Pr[A]\cdot\utility_{\party_i}(\profile_{i,A})+\Pr[\neg A]\cdot \utility_{\party_i}(\profile_{i,\neg A})\leq\mathrm{max}\big\{\utility_{\party_i}(\profile_{i,A}),\utility_{\party_i}(\profile_{i,\neg A})\big\}\;.
\end{split}
\end{equation*}
Therefore,  Eq.~\eqref{eq:bounded_by_maximum} holds.
Hence, if $\utility_{\party_i}(\profile_{i,A})\geq\utility_{\party_i}(\profile)$ (resp. $\utility_{\party_i}(\profile_{i,\neg A})\geq\utility_{\party_i}(\profile)$), then we can ``keep'' $\strategy_{i,A}$ (resp. $\strategy_{i,\neg A}$) that contains one less check than $\profile$ and ``discard'' the latter. By continuing this process iteratively for all parties, we end up in a strategy profile $\profile'$ that contains no checks of the aforementioned form, i.e. $\profile'$ is  $\infractionPredicate_\mathrm{conf}$-agnostic. By construction, it holds that $\utility_{\party_i}(\profile')\geq\utility_{\party_i}(\profile)$. \hfill  $\dashv$\\

Given Claims~\ref{claim:non_conf1} and~\ref{claim:static1}, we prove the $\infractionPredicate_\mathrm{conf}$-compliance of $\proto$. In particular, we show that for every $\epsilon\geq0$, if a strategy profile $\profile$ is $\epsilon$-reachable from $\profile_\proto$, then it is $\infractionPredicate_\mathrm{conf}$-compliant.

For the sake of contradiction, assume that  $\profile$ is not $\infractionPredicate_\mathrm{conf}$-compliant. The proof is by induction on the length $\ell$ of the shortest path from $\profile_\proto$ to $\profile$.
\begin{itemize}
\item \emph{Basis}: $\ell=1$. If $\profile$ is $\infractionPredicate_\mathrm{conf}$-agnostic, then by Claim~\ref{claim:non_conf1}, $\profile$ cannot set a best response for the party $\party_i$ that unilaterally deviates from $\profile_\proto$. Besides, if $\profile$ is not $\infractionPredicate_\mathrm{conf}$-agnostic, then by Claim~\ref{claim:static1}, there is a strategy profile $\profile'$ that is $\infractionPredicate_\mathrm{conf}$-agnostic and for which it holds that $\utility_{\party_i}(\profile')\geq\utility_{\party_i}(\profile)$. Since by Claim~\ref{claim:non_conf1} $\profile'$ cannot set a best response for the party $\party_i$, neither $\profile$ can set a best response for the party $\party_i$. Therefore, $\profile$ is not $\epsilon$-directly reachable from $\profile_\proto$.
\item \emph{Induction step}: Assume that the statement holds for every $\epsilon$-reachable strategy profile with shortest path of length $\ell$. Let $\profile_\proto\rightarrow\profile_1\rightarrow\cdots\rightarrow\profile_\ell\rightarrow\profile$ be the shortest path of length $\ell+1$ from $\profile_\proto$ to $\profile$. Then, $\profile_1,\ldots,\profile_\ell$ are all $\infractionPredicate_\mathrm{conf}$-compliant. By following the same steps as in Basis, we conclude that, either $\infractionPredicate_\mathrm{conf}$-agnostic or not, $\profile$ cannot set a best response for the party that unilaterally deviates from $\profile_\ell$, thus $\profile$ is not $\epsilon$-directly reachable from $\profile_\proto$.
\end{itemize}
To prove the non $\infractionPredicate_\mathrm{abs}$-compliance of $\proto$, we apply Theorem~\ref{thm:universal-profit} by setting the probabilities $\alpha$ and $\beta_i,C_i$, $i\in[\totalParties]$ according to $\proto$'s specification. Namely,
\begin{itemize}
\item $\alpha=0$, as when all parties participate in a round then certainly a slot leader will be elected and create a block.

\item $\beta_i=\miningpower_{\party_i}^{N}$, as if only $\party_i$ abstains throughout the entire execution while the other parties remain honest, the only case that a block will not be produced is if $\party_i$ is always elected.

\item $\cost_i=\miningpower_{\party_i}\cdot E[\cost_{\party_i, \execution_\profile}]= N\cdot\cost$, by Assumption~\ref{ass:zero-cost}.

\item The assumption $\cost>\miningpower_{\party_i}^{N}\cdot\miningpower_{\party_i}\cdot\totalReward\cdot1$ holds, as $\miningpower_{\party_i}^{N}$ is typically a very small value ($\mathsf{negl}(\secparam)$).
\end{itemize}
Therefore, we set $\epsilon_\mathrm{max}:=\underset{i \in [\totalParties]}{\mathsf{max}}\{\miningpower_{\party_i}\cdot N\cdot\cost - \miningpower_{\party_i}^{N}\cdot \miningpower_{\party_i} \cdot \totalReward \}=\miningpower_{\party_\mathrm{max}}\cdot N\cdot\cost - \miningpower_{\party_\mathrm{max}}^{N+1}\cdot \totalReward$, where we make the reasonable assumption that $\totalReward$ is strictly upper bounded by $\dfrac{N\cdot\cost}{\miningpower_{\party_\mathrm{max}}^N}$, as $\dfrac{N\cdot\cost}{\miningpower_{\party_\mathrm{max}}^N}$ is typically  a very large value.

By Theorem~\ref{thm:universal-profit}, we have that for every $\epsilon<\epsilon_\mathrm{max}$, $\proto$ is \emph{not} $(\epsilon,\infractionPredicate_\mathrm{abs})$-compliant \wrt $\bar{\utility}$.

\end{proof}

\subsection{Proof-of-Stake under Relative Utilities}\label{sec:relative}

We now continue our study of the SL-PoS protocol $\proto$ above. In particular, we also assume that the protocol is \emph{predictable}, i.e., the slot leader schedule for the entire execution is globally known to the parties in advance~\cite{DBLP:conf/ec/Brown-CohenNPW19}. However, we can get similar results by studying (sufficiently large) fragments of the execution when the protocol is predictable.  We now consider block-proportional rewards (cf. Definition~\ref{def:proportional-rewards}) and a different utility function that we call \emph{Relative Profit}.
This utility is defined as the fraction of the party's expected profit over the aggregate expected rewards of all parties when the denominator is not $0$, and $0$ otherwise
\footnote{A seemingly plausible alternative approach would be to consider the fraction of the party's expected profit over the aggregate expected profit of all parties, i.e., $\utility_{\party}(\profile)= \frac{E[\reward_{\party, \execution_{\env,\adversary,\profile}} - \cost_{\party, \execution_{\env,\adversary,\profile}}]}{E\big[\sum_{\hat{\party}\in\partySet}\reward_{\hat{\party}, \execution_{\env,\adversary,\profile}}-\sum_{\hat{\party}\in\partySet}\cost_{\hat{\party}, \execution_{\env,\adversary,\profile}}\big]}$. However, in this approach, there are corner cases where the denominator becomes negative and then, the utility would not provide intuition on the parties' payoffs. By considering only the (always non-negative) aggregate expected rewards in the fraction, we avoid such problematic cases while maintaining relativity in our definition.}.
Formally, for a party $\party$ and strategy profile $\profile$:

\begin{equation}\label{eq:relative_profit}
\begin{split}
\utility_{\party}(\profile) &=
\left\{\begin{array}{ll}
\dfrac{E[\reward_{\party, \execution_{\env,\adversary,\profile}} - \cost_{\party, \execution_{\env,\adversary,\profile}}]}{E\big[\sum_{\hat{\party}\in\partySet}\reward_{\hat{\party}, \execution_{\env,\adversary,\profile}}\big]}, &\mbox{if }E\big[\sum_{\hat{\party}\in\partySet}\reward_{\hat{\party}, \execution_{\env,\adversary,\profile}}\big]>0\\
0, &\mbox{if }E\big[\sum_{\hat{\party}\in\partySet}\reward_{\hat{\party}, \execution_{\env,\adversary,\profile}}\big]=0\\
\end{array}\right.=\\
&= \left\{\begin{array}{ll}
\dfrac{E[\reward_{\party, \execution_{\env,\adversary,\profile}} ]- E[\cost_{\party, \execution_{\env,\adversary,\profile}}]}{\sum_{\hat{\party}\in\partySet}E\big[\reward_{\hat{\party}, \execution_{\env,\adversary,\profile}}\big]}, &\mbox{if }\sum_{\hat{\party}\in\partySet}E\big[\reward_{\hat{\party}, \execution_{\env,\adversary,\profile}}\big]>0\\
0, &\mbox{if }\sum_{\hat{\party}\in\partySet}E\big[\reward_{\hat{\party}, \execution_{\env,\adversary,\profile}}\big]=0\\
\end{array}\right.
\;.
\end{split}
\end{equation}

It is easy to see that Relative Profit is an extension of  Relative Rewards~\cite{FC:EyaSir14,kiayias16EC,DBLP:conf/ec/Brown-CohenNPW19,kiayias2019coalitionsafe}, where the utility is the fraction of the party's rewards over the total rewards, by now taking non-zero costs into account.  As we will shortly prove, $\proto$ is $\infractionPredicate_\mathrm{conf}$-compliant \wrt to Relative Profit, but it is not compliant under a type of deviant behavior that we call \emph{selfish singing}.
In selfish signing, a party $\party$ that knows she is going to be elected for $d+1$ consecutive time slots, $\slot,\slot+1,\ldots,\slot+d$, where $d$ is the \emph{depth} of the specific selfish signing event, can create a fork of $d+1$ consecutive blocks pointing to a block created $d+1$ steps earlier.
This results in a new longest chain such that the last $d$ blocks of the old chain will get discarded. Given that the discarded blocks belonged to other parties, then selfish signing strictly improves the Relative Profit of $\party$. The selfish signing behavior is described in Algorithm~\ref{alg:selfish-signing} and illustrated in Figure~\ref{fig:selfish-signing}.

\begin{algorithm}[h]
	\scriptsize

	\KwIn{A sequence of $d+1$ blocks $\block_{\slot-d-1}\leftarrow\cdots\leftarrow\block_{\slot-1}$.}
	\KwOut{A new fork of $d+2$ blocks $\block_{\slot-d-1}\leftarrow\block'_\slot\leftarrow\cdots\leftarrow\block'_{\slot+d}$.}\vspace{3pt}
	\For{$j\gets0$ \KwTo $d$}{
	\If{$j==0$}{
	As leader of time slot $\slot$, create a new block $\block'_\slot$ that points to $\block_{\slot-d-1}$;}
	\Else{
	As leader of time slot $\slot+j$, create a new block $\block'_{\slot+j}$ that points to $\block'_{\slot+j-1}$;
	}
	}
	\caption{Selfish signing of depth $d\geq 1$ during time slots $r,\ldots,r+d$.}
	\label{alg:selfish-signing}
\end{algorithm}

\begin{figure}[h]
\centering
\includegraphics[scale=1.0]{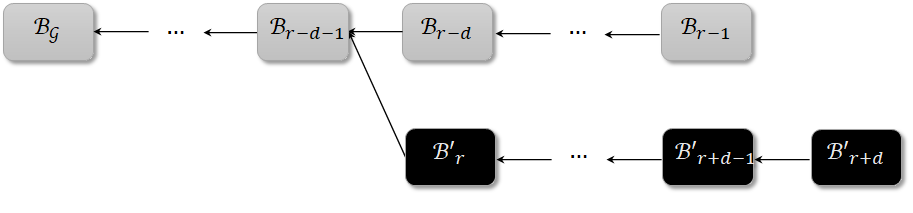}
\caption{Selfish signing of depth $d$ during time slots $\slot,\ldots,\slot+d$. The blocks $\block_{\slot-d},\ldots,\block_{\slot-1}$ will get discarded when the longest chain rule is applied.}
\label{fig:selfish-signing}
\end{figure}
Next, we describe a strategy denoted by $\strategy_\mathrm{self}$, where the party $\party$ takes advantage of $\proto$'s predictability and executes selfish signing at maximum depth whenever possible, under the condition that she never abstains or allows selfish singing to discard her own existing blocks. The strategy $\strategy_\mathrm{self}$ is described in detail in Algorithm~\ref{alg:strategy_self}. For an execution of $N$ time slots, the input is a string $\mathsf{schedule}_\party\in\{0,1\}^N$ defined as follows: for $\slot\in[N]$, $\mathsf{schedule}_\party[\slot]$ is  $1$, if $\party$ is the leader of slot $\slot$, and $0$ otherwise.

\begin{algorithm}[h]
	\scriptsize

	\KwIn{A string $\mathsf{schedule}_\party\in\{0,1\}^N$, where $N$ is the time length of the execution.}
	\KwOut{A sequence of pairs $\big((r_1,d_1),\ldots,(r_w,d_w)\big)$ indicating the time slots that selfish signing will take place at the respective depth.}\vspace{3pt}
	Initialize a list $\mathsf{strategy}_\party\leftarrow()$;\\
	Set $\slot\leftarrow 1$;\\
	\While{$\slot\leq N$}{
	\If{$\mathsf{schedule}_\party[\slot]==0$}{
	Set $\slot\leftarrow\slot+1$\tcc*{As no leader of time slot $\slot$, $\party$ takes no action}}
	\Else{
	Set $k^*\leftarrow\underset{k}{\mathrm{max}}\big\{k\big| \bigwedge_{j=0}^{k}(\mathsf{schedule}_\party[\slot+j]==1)\big\}$; \\
	Set $\ell^*\leftarrow\underset{\ell}{\mathrm{max}}\big\{\ell\big| \bigwedge_{j=1}^{\ell}(\mathsf{schedule}_\party[\slot-j]==0)\big\}$;\\
	\If{$(k^*==0)\lor(\ell^*==0)$}{
	Set $\slot\leftarrow\slot+1$\tcc*{If no selfish signing is possible, $\party$ acts as an honest party}}
	\Else{
  	Set $d\leftarrow\mathsf{min}\{k^*,\ell^*\}$;\\
  	Add $(\slot,d)$ to $\mathsf{strategy}_\party$\tcc*{$\party$ will execute Algorithm~\ref{alg:selfish-signing} in slot $\slot$ at depth $d$}
  	Set $\slot\leftarrow\slot+(d+1)$;
	}
	}
	}
	\Return $\mathsf{strategy}_\party$;
	\caption{The strategy $\strategy_\mathrm{self}$ for party $\party$.}
	\label{alg:strategy_self}
\end{algorithm}

Note that the output $\big((\slot_1,d_1),\ldots,(\slot_w,d_w)\big)$ of Algorithm~\ref{alg:strategy_self} fully determines the behavior of $\party$ throughout the execution. Namely, $\mc{P}$ acts honestly until time slot $\slot_1$ when it performs selfish signing that lasts until $\slot_1+d_1$, then it acts honestly at slots $\slot_1+(d_1+1),\ldots,\slot_2-1$ and at slot $\slot_2$ it performs selfish signing that lasts until $\slot_2+d_2$, etc.

Next, we define the infraction predicate $\infractionPredicate_\mathrm{self}$:\footnote{Observe that $\infractionPredicate_\mathrm{self}$ is a special case of the generic family of long-range attacks, \ie when a party creates a fork by extending an ancestor block, instead of the longest chain's head.}
\begin{equation*}
\infractionPredicate_\mathsf{self}(\executionTrace,\party):=
 \left\{\begin{array}{ll}
  0,&\mbox{if $\party$ never performs selfish signing in }\executionTrace\\
  1,&\mbox{otherwise}
\end{array}
\right.\;.
\end{equation*}

Having introduced $\strategy_\mathrm{self}$ and $\infractionPredicate_\mathrm{self}$, we prove the following theorem. In the theorem statement, we deploy the function $\delta(\miningpower)=5\cdot(1-\miningpower)\cdot\miningpower^2+6\cdot(1-\miningpower)^2\cdot\miningpower^3+3\cdot(1-\miningpower)^2\cdot\miningpower^4+3\cdot(1-\miningpower)^3\cdot\miningpower^4$, for $\miningpower\in(0,1)$. The function $\delta(\miningpower)$  sets a lower bound on the expected number of blocks that get discarded every $7$ consecutive time slots when a party with staking power $\miningpower$ unilaterally deviates from $\profile_\proto$ by following $\strategy_\mathrm{self}$.
It has a maximum at $\miningpower\approx0.64469$ and $\delta(0.64469)\approx1.03001$. In Table~\ref{tab:delta}, we provide some indicative values of $\delta(\miningpower)$, for $\miningpower\leq\frac{1}{2}$. By the data in Table~\ref{tab:delta}, we conclude that the block discarding rate remains significant even when the party's staking power is not particularly high, e.g., $\miningpower=\frac{1}{5}$. This result is a significant enhancement of~\cite{DBLP:conf/ec/Brown-CohenNPW19}, as it provides quantitative evidence on the effectiveness of the selfish signing strategy against predictable PoS protocols.

\begin{table}[h]
\centering
\begin{tabular}{|c|c||c|c|}
\hline
$\miningpower$&$\delta(\miningpower)$&$\miningpower$&$\delta(\miningpower)$\\
\hline
\hline
0.05&0.01258&0.3&0.41462\\
\hline
0.1&0.05032&0.35&0.53819\\
\hline
0.15&0.11228&0.4&0.66247\\
\hline
0.2&0.19624&0.45&0.77994\\
\hline
0.25&0.29864&0.5&0.88281\\
\hline
\end{tabular}
\caption{Evaluation of $\delta(\miningpower)=5\cdot(1-\miningpower)\cdot\miningpower^2+6\cdot(1-\miningpower)^2\cdot\miningpower^3+3\cdot(1-\miningpower)^2\cdot\miningpower^4+3\cdot(1-\miningpower)^3\cdot\miningpower^4$ for various values of $\miningpower$.}
\label{tab:delta}
\end{table}

\begin{theorem}\label{thm:PoS_relative_profit}
    Let:
    \begin{inparaenum}[i)]
        \item $\proto$ be the SL-PoS blockchain protocol specified in Section~\ref{sec:PoS_comp} with block-proportional rewards (Definition~\ref{def:proportional-rewards} for fixed block reward $\reward$), and assume that $\proto$ is also predictable;
        \item a synchronous router $\adversary$;
        \item $\party_\mathrm{max}$ is the party with the maximum staking power $ \miningpower_{\party_{\mathrm{max}}}$.
    \end{inparaenum}

   If $\reward>\cost$ and $ \miningpower_{\party_{\mathrm{max}}}<\frac{1}{2}$, then the following hold:
    \begin{inparaenum}[i)]
        \item for every $\epsilon \geq 0$, $\proto$ is $(\epsilon,\infractionPredicate_\mathrm{conf})$-compliant \wrt utility Relative Profit;
        \item for $\epsilon_\mathrm{max}:=\frac{\miningpower_{\party_\mathrm{max}}}{\frac{7}{\delta(\miningpower_{\party_\mathrm{max}})}-1}\cdot\frac{\reward-\cost}{\reward}$, where $\delta(\miningpower)=5\cdot(1-\miningpower)\cdot\miningpower^2+6\cdot(1-\miningpower)^2\cdot\miningpower^3+3\cdot(1-\miningpower)^2\cdot\miningpower^4+3\cdot(1-\miningpower)^3\cdot\miningpower^4$, and every $\epsilon\leq\epsilon_\mathrm{max}$, $\proto$ is \emph{not} $(\epsilon,\infractionPredicate_\mathrm{self})$-compliant \wrt utility Relative Profit, in both cases under $\adversary$.
    \end{inparaenum}

\end{theorem}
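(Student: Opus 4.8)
The plan is to treat the two parts independently, reusing the machinery of Theorem~\ref{thm:PoS_profit} for part (i) and building a dedicated profitability computation for part (ii). For part (i), I would mirror the $\infractionPredicate_\mathrm{conf}$-compliance argument of Theorem~\ref{thm:PoS_profit} essentially verbatim: the two-claim scheme (conflicting blocks are never worthwhile, and one may restrict to $\infractionPredicate_\mathrm{conf}$-agnostic profiles), followed by induction on the length of the shortest reachability path from $\profile_\proto$. The only place the Profit proof does not transfer directly is the averaging identity in Claim~\ref{claim:static1}, since Relative Profit is a \emph{ratio} $\frac{E[\reward_{\party}]-E[\cost_{\party}]}{\sum_{\hat\party}E[\reward_{\hat\party}]}$ and hence not affine in a conditioning event. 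I would replace that convex-combination step by the weighted mediant inequality: writing $\utility_{\party_i}(\profile)=\frac{\Pr[A]\,n_A+\Pr[\neg A]\,n_{\neg A}}{\Pr[A]\,d_A+\Pr[\neg A]\,d_{\neg A}}$ with conditional numerators $n_A,n_{\neg A}$ and strictly positive conditional denominators $d_A,d_{\neg A}$, the value lies between $n_A/d_A=\utility_{\party_i}(\profile_{i,A})$ and $n_{\neg A}/d_{\neg A}=\utility_{\party_i}(\profile_{i,\neg A})$, so it is at most their maximum, which is exactly Eq.~\eqref{eq:bounded_by_maximum}. For the analog of Claim~\ref{claim:non_conf1}, switching from conflicting blocks to single queries leaves every party's expected reward unchanged (the longest-chain rule extracts the same path from the tree of $\party_i$'s blocks, so the denominator is invariant) while strictly lowering $\party_i$'s cost, hence strictly raising the ratio; that $\miningpower_{\party_\mathrm{max}}<\tfrac12$ keeps blocks produced and the denominator positive.

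For part (ii), I would first pin down the honest baseline and the optimum over compliant strategies. Under the synchronous router every slot contributes exactly one block, so $\party_\mathrm{max}$ (power $\miningpower:=\miningpower_{\party_\mathrm{max}}$) has expected own reward $\miningpower N\reward$, cost $\miningpower N\cost$, and total reward $N\reward$, giving honest Relative Profit $\miningpower\frac{\reward-\cost}{\reward}$. I then argue honest play is optimal among $\infractionPredicate_\mathrm{self}$-compliant strategies: a compliant party extends only the current head, so it can never orphan another party's block; conflicting blocks change no reward (by part (i)) and only add cost, while abstaining at a leader slot drops one of $\party_\mathrm{max}$'s own blocks and shrinks the denominator by one, and since $\frac{\miningpower N-1}{N-1}<\frac{\miningpower N}{N}$ this strictly lowers the ratio. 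Next I analyze $\strategy_\mathrm{self}$ (Algorithm~\ref{alg:strategy_self}): it creates exactly one block per leader slot and all of them survive in $\observer$'s chain, so $\party_\mathrm{max}$'s numerator $\miningpower N(\reward-\cost)$ is \emph{identical} to the honest case, whereas each selfish-signing event of depth $d$ discards $d$ blocks of other parties, reducing the chain length from $N$ to $N-\Delta$ for $\Delta$ the expected number of discarded blocks. Hence the gain is $\miningpower\frac{\reward-\cost}{\reward}\cdot\frac{\Delta}{N-\Delta}$, increasing in $\Delta$.

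The bound then reduces to lower-bounding $\Delta$, which is where $\delta(\miningpower)$ enters as a per-window guarantee: partitioning the $N$ slots into disjoint windows of $7$ consecutive slots, the expected number of blocks discarded by events confined to a window is at least $\delta(\miningpower)$, so $\Delta\geq\frac{N}{7}\delta(\miningpower)$. Substituting and simplifying $\frac{\Delta}{N-\Delta}\geq\frac{1}{7/\delta(\miningpower)-1}$ gives a gain of at least $\epsilon_\mathrm{max}=\frac{\miningpower}{7/\delta(\miningpower)-1}\cdot\frac{\reward-\cost}{\reward}$ (strictly larger once the cross-window events dropped from the bound are counted), so for every $\epsilon\leq\epsilon_\mathrm{max}$ the deviation improves utility by more than $\epsilon$. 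Since $\strategy_\mathrm{self}$ is non-$\infractionPredicate_\mathrm{self}$-compliant yet beats the compliant optimum, $\party_\mathrm{max}$'s best response is itself non-compliant and is directly $\epsilon$-reachable from $\profile_\proto$, placing a non-compliant profile in $\mathsf{Cone}_{\epsilon,\bar{\utility}}(\proto)$, which is precisely non-$(\epsilon,\infractionPredicate_\mathrm{self})$-compliance.

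I expect the main obstacle to be the derivation of $\delta(\miningpower)$ itself. Establishing $\delta(\miningpower)=5(1-\miningpower)\miningpower^2+6(1-\miningpower)^2\miningpower^3+3(1-\miningpower)^2\miningpower^4+3(1-\miningpower)^3\miningpower^4$ as a sound per-$7$-slot lower bound requires enumerating the leader/non-leader patterns in a window (a depth-$d$ event needs a maximal run of at least $d$ non-leader slots immediately followed by at least $d{+}1$ leader slots and contributes $d$ discards), weighting each configuration by its probability, and—crucially—bookkeeping so that the greedy maximum-depth, skip-ahead behavior of Algorithm~\ref{alg:strategy_self} does not double-count discards across overlapping patterns or window boundaries; keeping the count an \emph{under}estimate is what makes the bound valid. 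A secondary delicate point, flagged above, is justifying in part (i) that the conditional relations of Claim~\ref{claim:static1} extend from $\party_i$'s own reward and cost to the full denominator $\sum_{\hat\party}E[\reward_{\hat\party}]$, so that the mediant argument genuinely applies.
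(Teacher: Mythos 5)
Your proposal is correct and follows essentially the same route as the paper's proof: part (i) mirrors the two-claim/agnostic-profile scheme of Theorem~\ref{thm:PoS_profit}, where your weighted-mediant inequality is exactly what the paper's case analysis in Claim~\ref{claim:static2} establishes (including the zero-denominator corner cases you elide), and part (ii) matches the paper's computation of the gain as $\utility(\profile_\proto)\cdot\frac{1}{1-E[D]/N}$ together with the disjoint $7$-slot-window lower bound $E[D]\geq\frac{N}{7}\delta(\miningpower)$, which the paper realizes via the explicit suboptimal strategy $\strategy_{d\leq3}$ enumerating the patterns $011$, $00111$, $0001111$. The delicate points you flag (the denominator bookkeeping in the mediant step, the under-counting discipline for the greedy skip-ahead strategy, and the strictness needed at $\epsilon=\epsilon_\mathrm{max}$) are precisely the ones the paper handles, so the plan fills in to the paper's own argument.
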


\begin{proof} First, we express Relative Profit according to the specifications of $\proto$. Let $N\geq\kappa^c$ be the number of time slots of the execution, where $\secparam$ is the security parameter and $c$ is a sufficiently large constant. For any party $\party$ and strategy profile $\profile$, let $\msgOutputSet_{\party,\executionTrace}$ be the number of
blocks in the chain output by $\observer$ which are produced by $\party$, and $Q_{\party,\executionTrace}$ be the number of
queries that $\party$ makes to $\oracle_\proto$ during the whole execution. Let $\msgOutputSet_{\party, \execution_\profile},Q_{\party, \execution_\profile}$ be the corresponding random variables. By Eq.~\eqref{eq:relative_profit}, we have that if $E\big[\msgOutputSet_{\hat{\party}, \execution_\profile}\big]>0$, then

\begin{equation}\label{eq:PoS_relative_profit}
\utility_{\party}(\profile) = \dfrac{E[\msgOutputSet_{\party, \execution_\profile}]\cdot\reward - E[Q_{\party, \execution_\profile}]\cdot\cost}{\sum_{\hat{\party}\in\partySet}E\big[\msgOutputSet_{\hat{\party}, \execution_\profile}\big]\cdot\reward}\;.
\end{equation}

\noindent By Eq.~\eqref{thm:PoS_relative_profit}, we have that for the all honest strategy profile $\sigma_\proto$ it holds that

\begin{equation}\label{eq:honest_relative_profit}
\utility_{\party}(\profile_\proto) = \dfrac{\miningpower_{\party}\cdot N\cdot\reward - \miningpower_{\party}\cdot N\cdot\cost}{N\cdot\reward}=\miningpower_{\party}
\cdot\dfrac{\reward-\cost}{\reward}\;.
\end{equation}

 To show the  $\infractionPredicate_\mathrm{conf}$-compliance of $\proto$, we follow similar steps as in the proof of Theorem~\ref{thm:PoS_profit}, yet we adapt our analysis to the Relative Profit utility.  In particular, recall that a strategy profile $\profile=\langle\strategy_1,\ldots,\strategy_\totalParties\rangle$ is \emph{$\infractionPredicate_\mathrm{conf}$-agnostic}, if for every $i\in[\totalParties]$, the strategy $\strategy_i$ of $\party_i$ does not depend on the $\infractionPredicate_\mathrm{conf}$-compliance of the other parties' strategies.

In the following claim, we prove that it is not in the parties' interest to deviate by creating conflicting blocks, when they are behaving according to a $\infractionPredicate_\mathrm{conf}$-agnostic strategy profile.

\begin{claim}\label{claim:non_conf2}
Let $\profile=\langle\strategy_1,\ldots,\strategy_\totalParties\rangle\notin (\strategySet_{\infractionPredicate_{\mathrm{conf}}})^\totalParties$ be a strategy profile, i.e., for some party $\party_i\in\partySet$ and some trace $\executionTrace$ where $\party_i$ employs $\strategy_i$, it holds that $\infractionPredicate_{\mathrm{conf}}(\executionTrace,\party_i)=1$. If $\profile$ is also $\infractionPredicate_\mathrm{conf}$-agnostic, then there is a strategy profile $\profile'\in(\strategySet_{\infractionPredicate_{\mathrm{conf}}})^\totalParties$ where $\party_i$ unilaterally deviates from $\profile$, such that $\utility_{\party_i}(\profile')>\utility_{\party_i}(\profile)$.
\end{claim}

\noindent\emph{Proof of Claim~\ref{claim:non_conf2}}. We define $\profile'$ as follows: in any execution, $\party_i$ makes only one query to $\oracle_\proto$ in all time slots that $\party_i$ decided to produce conflicting blocks \wrt $\profile$. Whether $\party_i$ decides to honestly extend the longest chain or to create a fork (e.g., by performing selfish signing) remains unchanged in any corresponding executions \wrt $\profile$ or $\profile'$.

Assume that under $\profile$, $\party_i$ is the leader for a sequence of slots $\slot,\ldots,\slot+d$, $d\geq0$, and by creating conflicting blocks it produces a tree of blocks signed by $\party_i$, rooted at some block that was already in the chain when $\slot$ was reached. Observe that the height of the said tree is at most $d+1$ (if $\party_i$ does not abstain at any of slots $\slot,\ldots,\slot+d$).
This implies that when the longest chain rule is applied by the other parties and $\Omega$, $d'\leq d+1$ blocks of $\party_i$ for that period will be included by following a longest path of length $d'$ on the tree. On the other hand, if $\party_i$ behaves \wrt $\profile'$, then a single path of exactly $d'$ blocks of $\party_i$ will be included in the chain during $\slot,\ldots,\slot+d$.
Given that $\profile$ is $\infractionPredicate_\mathrm{conf}$-agnostic, the other parties' behavior in $\profile'$ remains the same as in $\profile$. So, it holds that $E[\reward_{\party_i, \execution_{\profile'}}]= E[\reward_{\party_i, \execution_\profile}]$ and for every other party $\hat{P}\neq\party_i$, it also holds that $E[\reward_{\hat{\party}, \execution_{\profile'}}]= E[\reward_{\hat{\party}, \execution_\profile}]$.

On the other hand, since $\profile\notin (\strategySet_{\infractionPredicate_{\mathrm{conf}}})^\totalParties$, and since creating conflicting blocks costs more than making only single queries, it holds that there are traces \wrt $\profile$ where the cost of $\party_i$ is strictly larger than the cost of $\party_i$ in the corresponding trace (same random coins) \wrt $\profile'$. Thus, it holds that $E[\cost_{\party_i, \execution_{\profile'}}]< E[\cost_{\party_i, \execution_\profile}]$.

Next, we show that the non $\infractionPredicate_\mathrm{conf}$-compliance of $\profile$ implies an important fact; namely, it holds that $\sum_{\hat{\party}\in\partySet}E\big[\reward_{\hat{\party}, \execution_{\profile}}\big]>0$, so we are not in the corner case of the branching definition of $\utility_{\party_i}(\profile)$ in Eq.~\eqref{eq:relative_profit}. Indeed, recall that there are traces (at least one) where $\party_i$ creates conflicting blocks, and one of them will be part of the longest chain. We observe that even if it gets discarded by another party $\party_j$ that performs selfish signing, this block of $\party_i$ is always replaced by blocks of $\party_j$. Even in the extreme case where back-to-back selfish signing happens, the blocks of the last party that performed selfish signing will be included in the chain. More generally, when a party adds a block in the chain, then we are certain that the chain will contain at least one block in the end of the execution, so the total rewards in the traces where $\party_i$ creates conflicting blocks are a positive value.

Thus, $\sum_{\hat{\party}\in\partySet}E\big[\reward_{\hat{\party}, \execution_{\profile'}}\big]=\sum_{\hat{\party}\in\partySet}E\big[\reward_{\hat{\party}, \execution_{\profile}}\big]>0$, we conclude that
\begin{equation*}
\begin{split}
\utility_{\party_i}(\profile')&=\dfrac{E[\reward_{\party_i, \execution_{\profile'}}] - E[\cost_{\party_i, \execution_{\profile'}}]}{\sum_{\hat{\party}\in\partySet}E\big[\reward_{\hat{\party}, \execution_{\profile'}}\big]}>
\dfrac{E[\reward_{\party_i, \execution_{\profile}}] - E[\cost_{\party_i, \execution_{\profile}}]}{\sum_{\hat{\party}\in\partySet}E\big[\reward_{\hat{\party}, \execution_{\profile}}\big]}=\utility_{\party_i}(\profile)\;.
\end{split}
\end{equation*}

\hfill $\dashv$\\

Next, in the following claim, we show that by $\proto$'s description, we can focus on  $\infractionPredicate_\mathrm{conf}$-agnostic strategy  profiles.

\begin{claim}\label{claim:static2}
Let $\profile=\langle\strategy_1,\ldots,\strategy_\totalParties\rangle$ be a non  $\infractionPredicate_\mathrm{conf}$-agnostic strategy profile. Then, there exists a strategy profile $\profile'$ that is  $\infractionPredicate_\mathrm{conf}$-agnostic and for every $\party_i\in\partySet$ it holds that $\utility_{\party_i}(\profile')\geq\utility_{\party_i}(\profile)$.
\end{claim}

\noindent\emph{Proof of Claim~\ref{claim:static2}}.
We provide a constructive proof for creating $\profile'$; by expressing the strategy $\strategy_i$ as an algorithm, since $\profile$ is non  $\infractionPredicate_\mathrm{conf}$-agnostic, for $i\in[\totalParties]$, $\strategy_i$ potentially contains in each slot $\slot$ checks of the form ``$\mathbf{if}(A)\;\mathbf{then}\{\mathsf{cmd}_A\}\;\mathbf{else}\{\mathsf{cmd}_{\neg A}\}$'', where (i) $A$ is some condition related to the other parties' strategies regarding $\infractionPredicate_\mathrm{conf}$-compliance, an (ii) by $\proto$'s description, the commands $\mathsf{cmd}_A,\mathsf{cmd}_{\neg A}$ are selected from the following types of commands:
\begin{enumerate}
\item ``do nothing'';
\item ``abstain from querying $\oracle_\proto$'';
\item ``make one query to $\oracle_\proto$ and extend the longest chain''; (honest behavior)
\item ``make one query to $\oracle_\proto$ and create a fork''; (set of commands that includes selfish signing)
\item ``make multiple queries to $\oracle_\proto$ and extend the longest chain''; (set of commands that includes creation of conflicting blocks)
\item ``make multiple queries to $\oracle_\proto$ and create forks''; (set of commands)
\item ``make at least one query to $\oracle_\proto$ but publish no block''; (set of commands)
\item ``$\mathbf{if}(B)\;\mathbf{then}\{\mathsf{cmd}_B\}\;\mathbf{else}\{\mathsf{cmd}_{\neg B}\}$''; (nested $\mathbf{if}$)
\end{enumerate}

We show how we can ``remove'' checks as above, where in the case of nested $\mathbf{if}$ we proceed from the inner to the outer layer.

An inner layer check contains commands of type 1-7. The crucial observation is that the change in the utility by the behavior that derives from each of these six types of commands is \emph{independent} from other parties' actions regarding $\infractionPredicate_\mathrm{conf}$-compliance. Namely, the change in rewards and cost by executing a command of type 1-7 is not affected by the case that some other subset of parties have created conflicting blocks or not so far (and when).

Thus, for the check ``$\mathbf{if}(A)\;\mathbf{then}\{\mathsf{cmd}_A\}\;\mathbf{else}\{\mathsf{cmd}_{\neg A}\}$'' in slot $\slot$, we consider two strategies $\strategy_{i,A}$, $\strategy_{i,\neg A}$ defined as follows; the party $\party_i$ behaves as in $\strategy_i$ with the following modification: at slot $\slot$, $\party_i$ always executes $\mathsf{cmd}_A$ (resp. $\mathsf{cmd}_{\neg A}$) when following $\strategy_{i,A}$ (resp. $\strategy_{i,\neg A}$).

Now, let $\profile_{i,A}$ (resp. $\profile_{i,\neg A}$) be the strategy profile where $\party_i$ follows $\strategy_{i,A}$ (resp. $\strategy_{i,\neg A}$) and all the other parties follow the same strategy as $\profile$. By the description of $\strategy_{i,A}$, $\strategy_{i,\neg A}$, we show that
\[\big(\utility_{\party_i}(\profile_{i,A})\geq\utility_{\party_i}(\profile)\big)\lor\big(\utility_{\party_i}(\profile_{i,\neg A})\geq\utility_{\party_i}(\profile)\big)\;.\]
Intuitively, the above holds because the utility of $\party_i$ depends on which of the two commands $\mathsf{cmd}_{i,A}$, $\mathsf{cmd}_{i,\neg A}$ will be executed, independently of the other parties' strategies regarding $\infractionPredicate_\mathrm{conf}$-compliance. Formally, we want to prove that it always holds that
\begin{equation}\label{eq:bounded_by_max}
\utility_{\party_i}(\profile)\leq\mathrm{max}\big\{\utility_{\party_i}(\profile_{i,A}),\utility_{\party_i}(\profile_{i,\neg A})\big\}\;.
\end{equation}
By the definition of  $\strategy_{i,A}$, $\strategy_{i,\neg A}$, we have that

\begin{align*}
\Pr[\reward_{\party_i, \execution_{\profile}}-\cost_{\party_i, \execution_{\profile}}=x|A]&=\Pr[\reward_{\party_i, \execution_{\profile_{i,A}}}-\cost_{\party_i, \execution_{\profile_{i,A}}}=x]\\
\Pr[\reward_{\party_i, \execution_{\profile}}-\cost_{\party_i, \execution_{\profile}}=x|\neg A]&=\Pr[\reward_{\party_i, \execution_{\profile_{i,\neg A}}}-\cost_{\party_i, \execution_{\profile_{i,\neg A}}}=x]\\
\Pr\big[\sum_{\hat{\party}\in\partySet}\reward_{\hat{\party} ,\execution_{\profile}}=y|A\big]&=\Pr\big[\sum_{\hat{\party}\in\partySet}\reward_{\hat{\party}, \execution_{\profile_{i,A}}}=y\big]\\
\Pr\big[\sum_{\hat{\party}\in\partySet}\reward_{\hat{\party}, \execution_{\profile}}=y|\neg A\big]&=\Pr\big[\sum_{\hat{\party}\in\partySet}\reward_{\hat{\party}, \execution_{\profile_{i,\neg A}}}=y\big]
\end{align*}
By the above, we get that

\begin{equation}\label{eq:profile_profit-split_enumerator}
\begin{split}
&E[\reward_{\party_i, \execution_{\profile}}-\cost_{\party_i, \execution_{\profile}}]=\sum_x\Pr[\reward_{\party_i, \execution_{\profile}}-\cost_{\party_i, \execution_{\profile}}=x]\cdot x=\\
&=\Pr[A]\cdot\sum_x\Pr[\reward_{\party_i, \execution_{\profile}}-\cost_{\party_i, \execution_{\profile}}=x|A]\cdot x+\Pr[\neg A]\cdot\sum_x\Pr[\reward_{\party_i, \execution_{\profile}}-\cost_{\party_i, \execution_{\profile}}=x|\neg A]\cdot x=\\
&=\Pr[A]\cdot\sum_x\Pr[\reward_{\party_i, \execution_{\profile_{i,A}}}-\cost_{\party_i, \execution_{\profile_{i,A}}}=x]\cdot x+\Pr[\neg A]\cdot\sum_x\Pr[\reward_{\party_i, \execution_{\profile_{i,
\neg A}}}-\cost_{\party_i, \execution_{\profile_{i,
\neg A}}}=x]\cdot x=\\
&=\Pr[A]\cdot E[\reward_{\party_i, \execution_{\profile_{i,A}}}-\cost_{\party_i, \execution_{\profile_{i,A}}}]+(1-\Pr[A])\cdot E[\reward_{\party_i, \execution_{\profile_{i,\neg A}}}-\cost_{\party_i, \execution_{\profile_{i,\neg A}}}].
\end{split}
\end{equation}
and
\begin{equation}\label{eq:profile_profit-split_denominator}
\begin{split}
&E\big[\sum_{\hat{\party}\in\partySet}\reward_{\hat{\party}, \execution_{\profile}}\big]=\sum_y\Pr\big[\sum_{\hat{\party}\in\partySet}\reward_{\hat{\party}, \execution_{\profile}}=y\big]\cdot y=\\
&=\Pr[A]\cdot\sum_y\Pr\big[\sum_{\hat{\party}\in\partySet}\reward_{\hat{\party}, \execution_{\profile}}=y|A\big]\cdot y+\Pr[\neg A]\cdot\sum_y\Pr\big[\sum_{\hat{\party}\in\partySet}\reward_{\hat{\party}, \execution_{\profile}}=y|\neg A\big]\cdot y=\\
&=\Pr[A]\cdot\sum_y\Pr\big[\sum_{\hat{\party}\in\partySet}\reward_{\hat{\party}, \execution_{\profile_{i,A}}}=y\big]\cdot y+\Pr[\neg A]\cdot\sum_y\Pr\big[\sum_{\hat{\party}\in\partySet}\reward_{\hat{\party}, \execution_{\profile_{i,
\neg A}}}=y\big]\cdot y=\\
&=Pr[A]\cdot E\big[\sum_{\hat{\party}\in\partySet}\reward_{\hat{\party}, \execution_{\profile_{i,A}}}\big]+(1-\Pr[ A])\cdot E\big[\sum_{\hat{\party}\in\partySet}\reward_{\hat{\party}, \execution_{\profile_{i,\neg A}}}\big]
\end{split}
\end{equation}
For simplicity, we set $x:=\Pr[A]$, $a:=E[\reward_{\party_i, \execution_{\profile_{i,A}}}-\cost_{\party_i, \execution_{\profile_{i,A}}}]$, $b:=E\big[\sum_{\hat{\party}\in\partySet}\reward_{\hat{\party}, \execution_{\profile_{i,A}}}\big]$, $c:=E[\reward_{\party_i, \execution_{\profile_{i,\neg A}}}-\cost_{\party_i, \execution_{\profile_{i,\neg A}}}]$, and $d:=E\big[\sum_{\hat{\party}\in\partySet}\reward_{\hat{\party}, \execution_{\profile_{i,\neg A}}}\big]$. By Eq.~\eqref{eq:relative_profit},~\eqref{eq:profile_profit-split_enumerator}, \eqref{eq:profile_profit-split_denominator}, we have that

\begin{align*}
\utility_{\party_i}(\profile)&=\left\{
\begin{array}{ll}
\frac{x\cdot a+(1-x)\cdot c}{x\cdot b+(1-x)\cdot d},&\mbox{if }x\cdot b+(1-x)\cdot d>0\\
0,&\mbox{if }x\cdot b+(1-x)\cdot d=0
\end{array}\right.\\
\utility_{\party_i}(\profile_{i,A})&=\left\{
\begin{array}{ll}
\frac{a}{b},&\mbox{if }b>0\\
0,&\mbox{if } b=0
\end{array}\right.\\
\utility_{\party_i}(\profile_{i,\neg A})&=\left\{
\begin{array}{ll}
\frac{c}{d},&\mbox{if }d>0\\
0,&\mbox{if } d=0
\end{array}\right.
\end{align*}
We study all the following  possible cases:
\begin{itemize}
\item If $x\cdot b+(1-x)\cdot d=0$, then
\begin{itemize}
\item If $b=0$, then $0=\utility_{\party_i}(\profile)=\utility_{\party_i}(\profile_{i,A})$.
\item If $d=0$, then $0=\utility_{\party_i}(\profile)=\utility_{\party_i}(\profile_{i,\neg A})$.
\end{itemize}
\item If $x\cdot b+(1-x)\cdot d>0$ ,then
\begin{itemize}
\item If $b=0$, then $a\leq0$ and $(1-x)\cdot d>0$. Therefore,
\[\utility_{\party_i}(\profile)=\dfrac{x\cdot a+(1-x)\cdot c}{x\cdot b+(1-x)\cdot d}\leq\dfrac{(1-x)\cdot c}{(1-x)\cdot d}=\dfrac{c}{d}=\utility_{\party_i}(\profile_{i,\neg A}).\]
\item If $d=0$, then $c\leq0$ and $x\cdot b>0$. Therefore,
\[\utility_{\party_i}(\profile)=\dfrac{x\cdot a+(1-x)\cdot c}{x\cdot b+(1-x)\cdot d}\leq\dfrac{x\cdot a}{x\cdot b}=\dfrac{a}{b}=\utility_{\party_i}(\profile_{i, A}).\]
\item If $b>0$ and $d>0$,
\begin{itemize}
\item If $x=0$, then $\utility_{\party_i}(\profile)=\frac{c}{d}=\utility_{\party_i}(\profile_{i,\neg A})$.
\item If $x=1$, then $\utility_{\party_i}(\profile)=\frac{a}{b}=\utility_{\party_i}(\profile_{i, A})$.
\item If $x\in(0,1)$ and $\frac{a}{b}\geq\frac{c}{d}$, then
\begin{equation*}
\begin{split}
&(1-x)\cdot c\cdot b\leq(1-x)\cdot a\cdot d\Leftrightarrow x\cdot a\cdot b+(1-x)\cdot c\cdot b\leq x\cdot a\cdot b+(1-x)\cdot a\cdot d\Leftrightarrow\\
\Leftrightarrow&\dfrac{x\cdot a+(1-x)\cdot c}{x\cdot b+(1-x)\cdot d}\leq\dfrac{a}{b}\Leftrightarrow\utility_{\party_i}(\profile)\leq\utility_{\party_i}(\profile_{i,A}).
\end{split}
\end{equation*}
\item If $x\in(0,1)$ and $\frac{a}{b}<\frac{c}{d}$, then
\begin{equation*}
\begin{split}
&x\cdot c\cdot b>x\cdot a\cdot d\Leftrightarrow x\cdot c\cdot b+(1-x)\cdot c\cdot d>x\cdot a\cdot d+(1-x)\cdot c\cdot d\Leftrightarrow\\
\Leftrightarrow&\dfrac{x\cdot a+(1-x)\cdot c}{x\cdot b+(1-x)\cdot d}<\dfrac{c}{d}\Leftrightarrow\utility_{\party_i}(\profile)<\utility_{\party_i}(\profile_{i,\neg A}).
\end{split}
\end{equation*}
\end{itemize}
\end{itemize}
\end{itemize}

Therefore, for all possible cases Eq.~\eqref{eq:bounded_by_max} holds.
Hence, if $\utility_{\party_i}(\profile_{i,A})\geq\utility_{\party_i}(\profile)$ (resp. $\utility_{\party_i}(\profile_{i,\neg A})\geq\utility_{\party_i}(\profile)$), then we can ``keep'' $\strategy_{i,A}$ (resp. $\strategy_{i,\neg A}$) that contains one less check than $\profile$ and ``discard'' the latter. By continuing this process iteratively for all parties, we end up in a strategy profile $\profile'$ that contains no checks of the aforementioned form, i.e. $\profile'$ is  $\infractionPredicate_\mathrm{conf}$-agnostic. By construction, it holds that $\utility_{\party_i}(\profile')\geq\utility_{\party_i}(\profile)$. \hfill  $\dashv$\\

Given Claims~\ref{claim:non_conf2} and~\ref{claim:static2}, we prove the $\infractionPredicate_\mathrm{conf}$-compliance of $\proto$. In particular, we show that for every $\epsilon\geq0$, if a strategy profile $\profile$ is $\epsilon$-reachable from $\profile_\proto$, then it is $\infractionPredicate_\mathrm{conf}$-compliant.

For the sake of contradiction, assume that  $\profile$ is not $\infractionPredicate_\mathrm{conf}$-compliant. The proof is by induction on the length $\ell$ of the shortest path from $\profile_\proto$ to $\profile$.
\begin{itemize}
\item \emph{Basis}: $\ell=1$. If $\profile$ is $\infractionPredicate_\mathrm{conf}$-agnostic, then by Claim~\ref{claim:non_conf2}, $\profile$ cannot set a best response for the party $\party_i$ that unilaterally deviates from $\profile_\proto$. Besides, if $\profile$ is not $\infractionPredicate_\mathrm{conf}$-agnostic, then by Claim~\ref{claim:static2}, there is a strategy profile $\profile'$ that is $\infractionPredicate_\mathrm{conf}$-agnostic and for which it holds that $\utility_{\party_i}(\profile')\geq\utility_{\party_i}(\profile)$. Since by Claim~\ref{claim:non_conf2} $\profile'$ cannot set a best response for the party $\party_i$, neither $\profile$ can set a best response for the party $\party_i$. Therefore, $\profile$ is not $\epsilon$-directly reachable from $\profile_\proto$.
\item \emph{Induction step}: Assume that the statement holds for every $\epsilon$-reachable strategy profile with shortest path of length $\ell$. Let $\profile_\proto\rightarrow\profile_1\rightarrow\cdots\rightarrow\profile_\ell\rightarrow\profile$ be the shortest path of length $\ell+1$ from $\profile_\proto$ to $\profile$. Then, $\profile_1,\ldots,\profile_\ell$ are all $\infractionPredicate_\mathrm{conf}$-compliant. By following the same steps as in Basis, we conclude that, either $\infractionPredicate_\mathrm{conf}$-agnostic or not, $\profile$ cannot set a best response for the party that unilaterally deviates from $\profile_\ell$, thus $\profile$ is not $\epsilon$-directly reachable from $\profile_\proto$.
\end{itemize}
%
%
Subsequently, we will prove the non $\infractionPredicate_\mathrm{self}$-compliance of $\proto$. We denote by $\profile_i$ the strategy profile where $\party_i$ unilaterally deviates from $\profile_\proto$ by following $\strategy_\mathrm{self}$. Let $D_{\party_i,\execution_{\profile_i}}$ be the number of blocks that are discarded due to the selfish signing of $\party_i$ \wrt $\profile_i$. By the description of $\strategy_\mathrm{self}$ (cf. Algorithm~\ref{alg:strategy_self}), we have that
(i) $E[\msgOutputSet_{\party_i, \execution_{\profile_i}}]=E[\msgOutputSet_{\party, \execution_{\profile_\proto}}]$, (ii) $E[Q_{\party_i, \execution_{\profile_i}}]=E[Q_{\party_i, \execution_{\profile_\proto}}]$,  and
(iii) $E\big[\sum_{\hat{\party}\in\partySet}\msgOutputSet_{\hat{\party}, \execution_{\profile_i}}\big]=E\big[\sum_{\hat{\party}\in\partySet}\msgOutputSet_{\hat{\party}, \execution_{\profile_\proto}}-D_{\party_i,\execution_{\profile_i}}\big]$. Thus, by Eq.~\eqref{eq:PoS_relative_profit} and~\eqref{eq:honest_relative_profit}, we get that

\begin{equation}\label{eq:profile_self}
\begin{split}
\utility_{\party_i}(\profile_i)&=\dfrac{\miningpower_{\party_i}\cdot N\cdot\reward - \miningpower_{\party_i}\cdot N\cdot\cost}{N\cdot\reward-E[D_{\party_i,\execution_{\profile_i}}]\cdot\reward}=\miningpower_{\party_i}\cdot\dfrac{\reward-\cost}{\reward}\cdot\dfrac{1}{1-\frac{E[D_{\party_i,\execution_{\profile_i}}]}{N}}=\\
&=\utility_{\party_i}(\profile_\proto)\cdot\dfrac{1}{1-\frac{E[D_{\party_i,\execution_{\profile_i}}]}{N}}.
\end{split}
\end{equation}

We will lower bound $\utility_{\party_i}(\profile_i)$ by showing a lower bound for $E[D_{\party_i,\execution_{\profile_i}}]$. To achieve this, we devise a selfish signing strategy, $\strategy_{d\leq3}$, that although less effective than $\strategy_\mathrm{self}$, it captures a significant part of all possible selfish signing attempts in an execution.

The effectiveness of $\strategy_{d\leq3}$ relies on the fact that the probability the conditions allow for performing selfish signing of depth $d$ drops exponentially in $d$. Namely, if we consider the string $\mathsf{schedule}_{\party_i}\in\{0,1\}^N$ as in Algorithm~\ref{alg:strategy_self}, the probability that for round $\slot$, a substring $\underbrace{0\cdots0}_{\text{$d$ times}}1\underbrace{1\cdots1}_{\text{$d$ times}}$ appears (indicating conditions that allow for performing selfish signing of depth $d$), is $(1-\miningpower_{\party_i})^d\cdot\miningpower_{\party_i}^{d+1}$.
Therefore, by searching $\mathsf{schedule}_{\party_i}$ for substrings $011,00111,0001111$ that correspond to selfish signing at depth up to $3$, we can capture a good portion of all available cases that selfish signing is possible according to $\mathsf{schedule}_{\party_i}$.

Given the above, $\party_i$ follows $\strategy_{d\leq3}$ by executing the following steps:
\begin{enumerate}
\item She fragments $\mathsf{schedule}_{\party_i}$ into $7$-bit substrings, where for simplicity we assume that the length, $N$, of $\mathsf{schedule}_{\party_i}$ is a multiple of $7$.
\item For each $7$-bit substring denoted by $b_1b_2b_3b_4b_5b_6b_7\in\{0,1\}^7$:
\begin{enumerate}
\item She sequentially checks if $b_1b_2b_3=011$, $b_2b_3b_4=011$, $b_3b_4b_5=011$, $b_4b_5b_6=011$, $b_5b_6b_7=011$. When some of these checks is successful, she performs selfish signing at depth $1$ during the corresponding slot.
\item If all the above five checks fail, she sequentially checks if $b_1b_2b_3b_4b_5=00111$, $b_2b_3b_4b_5b_6=00111$, $b_3b_4b_5b_6b_7=00111$. When some of these checks is successful, she performs selfish signing at depth $2$ during the corresponding slot.
\item If all the above three checks fail, she sequentially checks if $b_1b_2b_3b_4b_5b_6b_7=0001111$. If the check is successful, she performs selfish signing at depth $3$ during the corresponding slot.
\item After all checks are completed, she proceeds similarly with the next $7$-bit substring.
\end{enumerate}
\end{enumerate}

Next, we compute the probabilities of the corresponding events. We get that the following hold:

\begin{itemize}
\item $\Pr[b_1b_2b_3=011]=\Pr[b_2b_3b_4=011]=\Pr[b_3b_4b_5=011]=\Pr[b_4b_5b_6=011]=\Pr[b_5b_6b_7=011]=(1-\miningpower_{\party_i})\cdot\miningpower_{\party_i}^2$.
\item $\Pr[b_1b_2b_3b_4b_5=00111]=\Pr[b_2b_3b_4b_5b_6=00111]=\Pr[b_3b_4b_5b_6b_7=00111]=(1-\miningpower_{\party_i})^2\cdot\miningpower_{\party_i}^3$.
\item $\Pr[b_1b_2b_3b_4b_5b_6b_7=0001111]=(1-\miningpower_{\party_i})^3\cdot\miningpower_{\party_i}^4$.
\item $\Pr[(b_1b_2b_3=011)\land(b_4b_5b_6=011)]=\Pr[(b_1b_2b_3=011)\land(b_5b_6b_7=011)]=\Pr[(b_2b_3b_4=011)\land(b_5b_6b_7=011)]=(1-\miningpower_{\party_i})^2\cdot\miningpower_{\party_i}^4$, while the probability of any other conjuction of events is $0$.
\end{itemize}

Let $\profile_{i,d\leq3}$ be the strategy profile where $\party_i$ unilaterally deviates from $\profile_\proto$ by following $\strategy_{d\leq3}$.

Let $D_{\party_i,b_{1-7}}$ be the number of blocks that are discarded during the $7$ consecutive slots that correspond to a single $7$-bit substring due to the selfish signing of $\party_i$ \wrt $\profile_{i,d\leq3}$. By the above, we have that

\begin{equation*}
\begin{split}
E[D_{\party_i,b_{1-7}}]=&\sum_{j=1}^3j\cdot\Pr[\mbox{exactly }j \mbox{ blocks are discarded in }b_1b_2b_3b_4b_5b_6b_7]=\\
=&1\cdot\big(5\cdot(1-\miningpower_{\party_i})\cdot\miningpower_{\party_i}^2-3\cdot(1-\miningpower_{\party_i})^2\cdot\miningpower_{\party_i}^4\big)+\\
&2\cdot\big(3\cdot(1-\miningpower_{\party_i})^2\cdot\miningpower_{\party_i}^3+3\cdot(1-\miningpower_{\party_i})^2\cdot\miningpower_{\party_i}^4\big)+\\
&3\cdot(1-\miningpower_{\party_i})^3\cdot\miningpower_{\party_i}^4=\\
=&5\cdot(1-\miningpower_{\party_i})\cdot\miningpower_{\party_i}^2+6\cdot(1-\miningpower_{\party_i})^2\cdot\miningpower_{\party_i}^3+3\cdot(1-\miningpower_{\party_i})^2\cdot\miningpower_{\party_i}^4+3\cdot(1-\miningpower_{\party_i})^3\cdot\miningpower_{\party_i}^4\;.
\end{split}
\end{equation*}

Let $\delta(\miningpower_{\party_i}):=E[D_{\party_i,b_{1-7}}]$
Let $D_{\party_i,\execution_{\profile_{i,d\leq3}}}$ be the total number of blocks that are discarded due to the selfish signing of $\party_i$ \wrt $\profile_{i,d\leq3}$ during the execution. Since the examination of these events is independent across the $7$-bit substrings, we have that

\begin{equation}\label{eq:d_leq_3_total}
\begin{split}
E[D_{\party_i,\execution_{\profile_{i,d\leq3}}}]=\dfrac{N}{7}\cdot\delta(\miningpower_{\party_i})\;.
\end{split}
\end{equation}

By the description of $\strategy_\mathrm{self}$ and $\strategy_{d\leq3}$, it is straightforward that $E[D_{\party_i,\execution_{\profile_i}}]>E[D_{\party_i,\execution_{\profile_{i,d\leq3}}}]$. Thus, by Eq.~\eqref{eq:honest_relative_profit},~\eqref{eq:profile_self} and~\eqref{eq:d_leq_3_total}, we have that

\begin{equation*}
\begin{split}
\utility_{\party_i}(\profile_i)&>\utility_{\party_i}(\profile_\proto)\cdot\dfrac{1}{1-\frac{\frac{N}{7}\cdot\delta(\miningpower_{\party_i})}{N}}=\utility_{\party_i}(\profile_\proto)\cdot\Big(1+\dfrac{1}{\frac{7}{\delta(\miningpower_{\party_i})}-1}\Big)=\\
&=\utility_{\party_i}(\profile_\proto)+ \dfrac{\miningpower_{\party_i}}{\frac{7}{\delta(\miningpower_{\party_i})-1}}\cdot\dfrac{\reward-\cost}{\reward}\;.
\end{split}
\end{equation*}

The function $\delta(\miningpower)=5\cdot(1-\miningpower)\cdot\miningpower^2+6\cdot(1-\miningpower)^2\cdot\miningpower^3+3\cdot(1-\miningpower)^2\cdot\miningpower^4+3\cdot(1-\miningpower)^3\cdot\miningpower^4$ is increasing on the interval $(0,0.64469]$, therefore the function
$\frac{\miningpower}{\frac{7}{\delta(\miningpower)}-1}\cdot\frac{\reward-\cost}{\reward}$ is also increasing on the same interval.
So, by setting $\epsilon_\mathrm{max}:=\frac{\miningpower_{\party_\mathrm{max}}}{\frac{7}{\delta(\miningpower_{\party_\mathrm{max}})}-1}\cdot\frac{\reward-\cost}{\reward}$, we have that for every $\epsilon\leq\epsilon_\mathrm{max}$, it holds that
\[\utility_{\party_\mathrm{max}}(\profile_\mathrm{max})>\utility_{\party_\mathrm{max}}(\profile_\proto)+\epsilon\;,\]
where in $\profile_\mathrm{max}$ is the strategy profile that the party $\party_\mathrm{max}$ unilaterally deviates from $\profile_\proto$ by following $\strategy_\mathrm{self}$.

It remains to show that the best response for some party $\party_i$ that unilaterally deviates from $\profile_\proto$ is not $\infractionPredicate_\mathrm{self}$-compliant. In fact, we will show something stronger; for every strategy profile $\profile$ that is a $\infractionPredicate_\mathrm{self}$-compliant unilateral deviation of $\party_i$, it holds that $\utility_{\party_i}(\profile)\leq\utility_{\party_i}(\profile_\proto)$.

By the $(0,\infractionPredicate_\mathrm{conf})$-compliance of $\profile_\proto$, we have that it is in $\party_i$'s interest to make only one query to $\oracle_\proto$, if she decides to participate at a given slot. Thus, it suffices to focus on strategy profiles that are not $\infractionPredicate_\mathrm{abs}$-compliant. So, assume that there are execution traces \wrt $\profile$ where $\party_i$ abstains at certain slots. However, each time $\party_i$ decides to abstain, she loses an amount $\reward-\cost$ on her overall profit while, since we assumed that $\profile$ is     $\infractionPredicate_\mathrm{self}$-compliant, the other (honest) parties' rewards remain unaffected. Therefore, by abstaining, $\party_i$ only reduces the  contribution to the her own  utility.\footnote{Note that this does not imply the $\infractionPredicate_\mathrm{abs}$-compliance of $\profile_\proto$, only that it is not in the party's interest to unilaterally deviate from $\profile_\proto$ by just abstaining.}. Formally, let $A_{\party_i,\execution_{\profile}}$ be the number of times that $\party_i$ decides to abstain. Since $E[A_{\party_i,\execution_{\profile}}]>0$, we have that
\begin{equation*}
\begin{split}
\utility_{\party_i}(\profile)&=\dfrac{(\miningpower_{\party_i}\cdot N-E[A_{\party_i,\execution_{\profile}}])\cdot\reward - (\miningpower_{\party_i}\cdot N-E[A_{\party_i,\execution_{\profile}}])\cdot\cost}{(N-E[A_{\party_i,\execution_{\profile}}])\cdot\reward}=\\
&=\dfrac{\miningpower_{\party_i}\cdot N-E[A_{\party_i,\execution_{\profile}}]}{N-E[A_{\party_i,\execution_{\profile}}]}\cdot\dfrac{\reward-\cost}{\reward}<\dfrac{\miningpower_{\party_i}\cdot (N-E[A_{\party_i,\execution_{\profile}}])}{N-E[A_{\party_i,\execution_{\profile}}]}\cdot\dfrac{\reward-\cost}{\reward}=\\
&=\miningpower_{\party_i}\cdot\dfrac{\reward-\cost}{\reward}=\utility_{\party_i}(\profile_\proto)\;.
\end{split}
\end{equation*}

Finally, given that (i) for every $\epsilon\leq\epsilon_\mathrm{max}$, it holds that $\utility_{\party_\mathrm{max}}(\profile_\mathrm{max})>\utility_{\party_\mathrm{max}}(\profile_\proto)+\epsilon$ and (ii) the best response for any party is not $\infractionPredicate_\mathrm{self}$-compliant, we conclude that for every $\epsilon\leq\epsilon_\mathrm{max}$, the protocol $\proto$ is not $(\epsilon,\infractionPredicate_\mathrm{self})$-compliant \wrt Relative Profit.

\end{proof}

\paragraph{Remark.}
$\strategy_\mathrm{self}$, although profitable, as shown in
Theorem~\ref{thm:PoS_relative_profit}, is not optimal. Indeed, one may
consider incremental improvements by combining the selfish signing behavior
with other infractions, such as strategically abstaining during some
slots. On that account, obtaining an optimal attack that exploits selfish
signing and identifying for which infractions such attack is non-compliant
is an interesting future research direction.

\section{Externalities}\label{sec:externalities}

In practice, blockchains coexist with other systems, which may affect the
participants' behavior.
This section enhances
our analysis with parameters external to
the ledger. We introduce an exchange rate,
to account rewards in the same unit as costs, and
analyze how it should behave to ensure compliance, assuming infractions yield
an external utility, and finally take penalties into account.

\subsection{Utility}

In distributed ledger systems, rewards are denominated in the ledger's native
currency, but cost is typically denominated in fiat. Therefore, we introduce an
\emph{exchange rate}, between the ledger's native currency and USD, to
denominate the rewards and cost in the same unit of account and precisely
estimate a party's utility.
The exchange rate $\exchangeRate_{\execution}$ is a random variable,
parameterized by a strategy profile $\profile$. For a trace
$\executionTrace_{\profile}$ under $\profile$, the exchange rate takes a
non-negative real value. The exchange rate is applied once, at the end of the
execution. Intuitively, this implies that a party eventually sells their
rewards at the end of the execution. Therefore, its utility depends on the
accumulated rewards, during the execution, and the exchange rate at the end.

The infraction predicate expresses a deviant behavior that parties may exhibit.
So far, we considered distributed protocols in a standalone fashion, analyzing
whether they incentivize parties to avoid infractions. In reality, a ledger
exists alongside other systems, and a party's utility may depend on parameters
external to the distributed ledger. For instance, double spending against
Bitcoin is a common hazard, which does not increase an attacker's \emph{Bitcoin
rewards}, but awards them external rewards, \eg goods that are purchased with
the double-spent coins.

The external -- to the ledger -- reward is modeled as a random variable
$\utilityBoost_{\party, \execution_{\profile}}$, which takes non-negative integer
values. Similarly to the rewards' random variable, it is
parameterized by a
party $\party$ and a strategy profile $\profile$.  The infraction utility is
applied once when computing a party's utility and has the property that, for
every trace $\executionTrace$ during which a party $\party$ performs no
infraction, it holds that $\utilityBoost_{\party, \executionTrace} = 0$, \ie
a party receives these external rewards only by performing an
infraction.

We define a new utility function $\utility$, which also takes two forms,
\emph{Reward} and \emph{Profit}. For the former, $\utility$
applies the exchange rate on the protocol rewards and adds the external. For \emph{Profit}, it also subtracts the cost. Definition~\ref{def:utility-external} defines the utility under
externalities. For ease of notation, we set the following:
\begin{itemize}
    \item $\rewardVal_{\party, \profile} = E[\reward_{\party, \execution_{\env, \adversary, \profile}}]$;
    \item $\exchangeRateVal_{\profile} = E[\exchangeRate_{\execution_{\env, \adversary, \profile}}]$;
    \item $\utilityBoostVal_{\party, \profile} = E[\utilityBoost_{\party, \execution_{\env, \adversary, \profile}}]$;
    \item $\costVal_{\party, \profile} = E[\cost_{\party, \execution_{\env, \adversary, \profile}}]$.
\end{itemize}
As in Section~\ref{sec:blockchain-utility}, when computing the utility, the
environment and the router are fixed.

\begin{definition}\label{def:utility-external}
    Let:
    \begin{inparaenum}[i)]
        \item $\profile$ be a strategy profile;
        \item $\execution_{\profile}$ be an execution under $\profile$;
        \item $\exchangeRateVal_{\profile}$ be the (expected) exchange rate of $\execution_{\profile}$;
        \item $\utilityBoostVal_{\party, \profile}$ be the (expected) external rewards of $\party$ under $\profile$.
    \end{inparaenum}
    We define two types of utility $\utility_{\party}$ of a party $\party$ for
    $\profile$ under externalities:
    \begin{enumerate}
        \item \emph{Reward}: $\utility_{\party}(\profile) = \rewardVal_{\party, \profile} \cdot \exchangeRateVal_{\profile} + \utilityBoostVal_{\party, \profile}$;
        \item \emph{Profit}: $\utility_{\party}(\profile) = \rewardVal_{\party, \profile} \cdot \exchangeRateVal_{\profile} + \utilityBoostVal_{\party, \profile} - \costVal_{\party, \profile}$.
    \end{enumerate}
\end{definition}

\subsection{Compliance}

To evaluate compliance under externalities, we will
find a relation between the assets' price and external, infraction-based
rewards, \st the former counters the latter, hence parties are
incentivized to remain compliant.
Specifically, it suffices to show that
the exchange rate reduction counterbalances the external rewards.
More formally,
Theorem~\ref{thm:external-ouroboros} analyzes SL-PoS under a synchronous
network and externalities; similar statements can be made for the positive
results of Sections~\ref{sec:universal} and~\ref{subsec:bitcoin}.

\begin{theorem}\label{thm:external-ouroboros}
    Assume
    \begin{inparaenum}[i)]
        \item a synchronous router $\adversary$ (cf. Section~\ref{sec:preliminaries}),
        \item the conflicting predicate $\infractionPredicate_\mathrm{conf}$, and
        \item that $\forall \party \in \partySet: \miningpower_{\party} <$~$\frac{1}{2}$.
    \end{inparaenum}
    Also let:
    \begin{inparaenum}[i)]
        \item $\strategySet_{-\infractionPredicate_\mathrm{conf}}$: the set of all non $\infractionPredicate_\mathrm{conf}$-compliant strategies;
        \item $\exchangeRateVal_{\profile_\proto}$: the (expected) exchange rate under $\execution_{\profile_\proto}$;
        \item $\exchangeRateVal_{\profile_{\strategy_{\party}}}$: the (expected) exchange rate when only $\party$ employs some non $\infractionPredicate_\mathrm{conf}$-compliant strategy $\strategy_{\party}$;
        \item $\utilityBoostVal_{\party, \profile_{\strategy_{\party}}}$: the external utility that $\strategy_{\party}$ yields for $\party$.
    \end{inparaenum}

    SL-PoS with block-proportional rewards (cf.
    Definition~\ref{def:proportional-rewards}, for fixed block reward
    $\reward$) under the aforementioned externalities is not $(\epsilon,
    \infractionPredicate_\mathrm{conf})$-compliant (cf.
    Definition~\ref{def:compliant}) \wrt utility \emph{Reward} (cf.
    Definition~\ref{def:utility}) and, if $\reward > \cost$, it is also not
    $(\epsilon, \infractionPredicate_\mathrm{conf})$-compliant \wrt utility
    \emph{Profit}, in both cases under $\adversary$ and if and only if
    $\epsilon < \mathsf{max}\{  \underset{\party \in \partySet}{\mathsf{max}}\{ \underset{\strategy_{\party} \in \strategySet_{-\infractionPredicate_\mathrm{conf}}}{\mathsf{max}}\{ \rewardVal_{\party, \profile_{\proto}} \cdot (\exchangeRateVal_{\profile_{\strategy_{\party}}} - \exchangeRateVal_{\profile_\proto}) + \utilityBoostVal_{\party, \profile_{\strategy_{\party}}} \} \}, 0 \}$.
\end{theorem}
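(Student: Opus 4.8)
The plan is to reduce the statement to a one-step comparison between $\profile_\proto$ and the most profitable unilateral non-$\infractionPredicate_\mathrm{conf}$-compliant deviation of each party, reusing the synchronous analysis underlying Theorem~\ref{thm:compliant-ouroboros-synchronous}. First I would record that an honest party performs no infraction, so $\utilityBoostVal_{\party,\profile_\proto}=0$ and, by Definition~\ref{def:utility-external}, $\utility_{\party}(\profile_\proto)=\rewardVal_{\party,\profile_\proto}\cdot\exchangeRateVal_{\profile_\proto}$ for Reward (and the same minus $\costVal_{\party,\profile_\proto}$ for Profit). The key structural fact, inherited from the single-leader synchronous setting, is that conflicting blocks cannot increase a party's block count in $\observer$'s chain: as in Theorem~\ref{thm:compliant-ouroboros-synchronous}, each leader contributes exactly one block to the unique longest chain, so for every unilateral deviation $\profile_{\strategy_{\party}}$ we have $\rewardVal_{\party,\profile_{\strategy_{\party}}}\le\rewardVal_{\party,\profile_\proto}$, with equality whenever $\strategy_{\party}$ still extends the longest chain with its honest block at each leader slot.

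Using $\rewardVal_{\party,\profile_{\strategy_{\party}}}\le\rewardVal_{\party,\profile_\proto}$ together with $\exchangeRateVal_{\profile_{\strategy_{\party}}}\ge 0$, I would bound the gain of any non-compliant deviation by
\[
\utility_{\party}(\profile_{\strategy_{\party}})-\utility_{\party}(\profile_\proto)\le \rewardVal_{\party,\profile_\proto}\cdot(\exchangeRateVal_{\profile_{\strategy_{\party}}}-\exchangeRateVal_{\profile_\proto})+\utilityBoostVal_{\party,\profile_{\strategy_{\party}}},
\]
and then argue tightness: replacing $\strategy_{\party}$ by its ``extend-the-longest-chain'' variant preserves the conflicting infraction (hence the triggered external reward $\utilityBoostVal_{\party,\profile_{\strategy_{\party}}}$ and the induced market reaction $\exchangeRateVal_{\profile_{\strategy_{\party}}}$) while forcing $\rewardVal_{\party,\profile_{\strategy_{\party}}}=\rewardVal_{\party,\profile_\proto}$, so the inequality becomes an equality. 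This identifies the displayed double-maximum as the exact best achievable gain from a non-compliant move, and the outer $\mathsf{max}\{\cdot,0\}$ simply records that no move is taken when every such gain is non-positive.

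For the direction $(\Leftarrow)$, given $\epsilon$ strictly below the displayed maximum I pick a maximizing pair $(\party,\strategy_{\party})$, pass to its longest-chain variant, and observe it is a best response (by the above equality it attains the supremum over all deviations) whose gain exceeds $\epsilon$; since it creates conflicting blocks it lies outside $(\strategySet_{\infractionPredicate_\mathrm{conf}})^{\totalParties}$, so by Definition~\ref{def:reach} it is directly $\epsilon$-reachable from $\profile_\proto$ and witnesses non-compliance. For $(\Rightarrow)$ I prove the contrapositive: when $\epsilon\ge\mathsf{max}\{\cdots,0\}$, every non-compliant deviation gains at most the maximum $\le\epsilon$, while every $\infractionPredicate_\mathrm{conf}$-compliant deviation has $\utilityBoostVal=0$ and $\rewardVal\le\rewardVal_{\party,\profile_\proto}$ and therefore cannot strictly beat $\profile_\proto$; hence no profile is directly $\epsilon$-reachable and the cone is $\{\profile_\proto\}$, which is compliant.

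The Profit case follows the same skeleton, with the extra per-query cost of the conflicting blocks folded into the (net) external reward $\utilityBoostVal_{\party,\profile_{\strategy_{\party}}}$ and the hypothesis $\reward>\cost$ ensuring participation remains worthwhile. I expect two points to carry the real weight. The first, and main, obstacle is the tightness/best-response step: showing that the strategy maximizing the upper bound can be realized as an \emph{actual} non-compliant best response, which requires arguing that the external reward and the exchange rate are not diminished by normalizing to longest-chain extension, and that no \emph{compliant} deviation can raise $\rewardVal_{\party,\cdot}\cdot\exchangeRateVal_{\cdot}$ above its honest value. The second is extending $(\Rightarrow)$ to the full (multi-step) cone rather than only its first edge; here I would mirror the $\infractionPredicate_\mathrm{conf}$-agnostic reduction of Theorem~\ref{thm:PoS_profit} (Claims~\ref{claim:non_conf1} and~\ref{claim:static1}) to reduce any reachable intermediate profile to the $\profile_\proto$-based comparison, so that no reachability path can escape $(\strategySet_{\infractionPredicate_\mathrm{conf}})^{\totalParties}$.
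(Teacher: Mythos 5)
Your proposal is correct and follows essentially the same route as the paper's proof: both reduce the question to the synchronous analysis of Theorem~\ref{thm:compliant-ouroboros-synchronous}, conclude that conflicting-only (non-abstaining) deviations preserve protocol rewards, i.e.\ $\rewardVal_{\party,\profile_{\strategy_{\party}}} = \rewardVal_{\party,\profile_\proto}$, and hence that the gain of any non-compliant unilateral deviation is at most (and, for such deviations, exactly) $\rewardVal_{\party,\profile_\proto}\cdot(\exchangeRateVal_{\profile_{\strategy_{\party}}}-\exchangeRateVal_{\profile_\proto})+\utilityBoostVal_{\party,\profile_{\strategy_{\party}}}$, so that non-compliance holds precisely when $\epsilon$ falls below the displayed maximum. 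If anything, you are more thorough than the paper, whose proof is a short sketch arguing only the forward (non-compliance) direction for utility Reward and leaving implicit the converse direction, the tightness/best-response step, and the closure of the multi-step cone --- exactly the points you flag and address via the $\infractionPredicate_\mathrm{conf}$-agnostic reduction of Theorem~\ref{thm:PoS_profit}.
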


    \begin{proof}
    Following the same reasoning as
    Theorem~\ref{thm:compliant-ouroboros-synchronous}, if a party $\party$
    deviates by only producing conflicting messages, but does not abstain, its
    expected rewards are the same as following the protocol; specifically, due
    to network synchronicity, after every round when $\party$ is leader, every
    other party adopts one of the blocks produced by $\party$ (although
    possibly not everybody adopts the same block), and, since all these blocks
    are part of the (equally-long) longest chain (at that point), eventually
    one of these blocks will be output in the chain of the observer.
    Consequently, it holds that $\rewardVal_{\party,
    \profile_{\strategy_{\party}}} = \rewardVal_{\party, \profile_{\proto}}$.

    Second, the maximum additional utility that a party $\party$ may receive by
    deviating from the honest protocol via producing conflicting blocks is:
    $\underset{\strategy_{\party} \in \strategySet_{-\infractionPredicate_{conf}}}{\mathsf{max}}\{ \rewardVal_{\party, \profile_{\proto}} \cdot (\exchangeRateVal_{\profile_{\strategy_{\party}}} - \exchangeRateVal_{\profile_\proto}) + \utilityBoostVal_{\party, \profile_{\strategy_{\party}}} \}$.
    Therefore, if for at least one party this value is non-negligible,
    $\epsilon$ is not small enough and so the protocol is not compliant.
\end{proof}

The previous sections offer non-compliance, negative results in PoS systems where
\begin{inparaenum}[(a)]
    \item resource-proportional rewards are employed and
    \item a party is incentivized to produce multiple
        conflicting messages, \ie under a lossy network or multiple leaders per
        slot.
\end{inparaenum}

Regarding (a), Section~\ref{sec:universal} shows that resource-proportional rewards ensure
compliance under utility \emph{Reward}, but non-compliance regarding profit.
Specifically, assuming a minimal participation cost $\cost^\bot_{\party}$, we
showed that, if $\party$ abstains, they incur zero cost without any reward
reduction. To explore compliance of resource-proportional rewards under externalities, we
consider two strategy profiles $\profile_\proto,
\profile_{\strategy_{\party}}$, as before. Notably, $\strategy_{\party}$ is the
abstaining strategy which, as shown in Section~\ref{sec:universal}, maximizes
utility in the standalone setting. For the two profiles, the profit for
$\party$ becomes
$\rewardVal_{\party, \profile_\proto} \cdot \exchangeRateVal_{\profile_\proto} - \costVal_{\party, \profile_\proto}$ and
$\rewardVal_{\party, \profile_{\strategy_{\party}}} \cdot \exchangeRateVal_{\profile_{\strategy_{\party}}} + \utilityBoostVal_{\party, \profile_{\strategy_{\party}}}$ respectively.
Again, in both cases the party's rewards are equal.
Therefore, since it
holds that $\cost^\bot_{\party} \leq \costVal_{\party, \profile_\proto}$,
$\party$ is incentivized to be $(\epsilon, \infractionPredicate_\mathrm{conf})$-compliant (for some $\epsilon$) if:
\begin{align}
    \rewardVal_{\party, \profile_{\strategy_{\party}}} \cdot \exchangeRateVal_{\profile_{\strategy_{\party}}} + \utilityBoostVal_{\party, \profile_{\strategy_{\party}}} \leq \rewardVal_{\party, \profile} \cdot \exchangeRateVal_{\profile_\proto} - \costVal_{\party, \profile_\proto} + \epsilon \Rightarrow 
    \cost^\bot_{\party} + \utilityBoostVal_{\party, \profile_{\strategy_{\party}}} \leq \rewardVal_{\party, \profile_\proto} \cdot (\exchangeRateVal_{\profile_\proto} - \exchangeRateVal_{\profile_{\strategy_{\party}}}) + \epsilon \nonumber
\end{align}
If the abstaining strategy yields no external rewards, as is typically the
case, $\utilityBoostVal_{\party, \profile_{\strategy_{\party}}} = 0$, so the
exchange rate needs to only counterbalance the minimal participation cost.

Regarding (b), we consider single-leader PoS under a lossy network, since the
analysis is similar for multi-leader PoS. We again consider two strategy
profiles $\profile_\proto, \profile_{\strategy_{\party}}$ as above. Now, under
$\profile_{\strategy_{\party}}$, $\party$ produces $k$ blocks during each slot for which
it is leader, to increase the probability that at least one of them is output
in the observer's final chain. Also, for simplicity, we set
$\utilityBoostVal_{\party, \profile_{\strategy_{\party}}} = 0$. These PoS
systems become $(\epsilon', \infractionPredicate_\mathrm{conf})$-compliant (for
$\epsilon' = \frac{\epsilon}{(1 - \networkLossProb^k \cdot (1 - \networkLossProb)^2) \cdot \reward_{max}}$) if
    $\exchangeRateVal_{\profile_{\strategy_{\party}}} \leq \frac{1 - \networkLossProb \cdot (1 - \networkLossProb)^2}{1 - \networkLossProb^k \cdot (1 - \networkLossProb)^2} \cdot \exchangeRateVal_{\profile_\proto} + \epsilon'$
where $\reward_{max} = \reward \cdot \sum_{i \in [1, \epoch]} \epochLength \cdot \miningpower_{\party, i}$ and $\networkLossProb, \reward, \epochLength$ are as in Subsection~\ref{subsec:multi-leader-pos}.

\subsection{Attacks and Market Response}\label{sec:attacks}

To estimate the exchange rate's behavior vis-à-vis external infraction rewards,
we turn to historical data from the cryptocurrency market. Although no
infractions of the type considered in this work have been observed in
deployed PoS systems, we extrapolate data from similar attacks against PoW
cryptocurrencies (Table~\ref{tab:attacks}).

\begin{table*}[ht]
    \centering \def\arraystretch{1.5}

    \begin{center}
      \footnotesize
        \begin{tabular}{|c|c|c|c|c|c|}
            \hline

              System
            & Date
            & \begin{tabular}[c]{@{}c@{}} External Utility \end{tabular}
            & \begin{tabular}[c]{@{}c@{}} Rewards \end{tabular}
            & \begin{tabular}[c]{@{}c@{}} Reward Difference \end{tabular}
            & \begin{tabular}[c]{@{}c@{}} Attack Hash Rate \% \end{tabular} \\
            \hline

            \multirow{4}{*}{\begin{tabular}[c]{@{}c@{}} Ethereum \\ Classic \end{tabular}}   & 5/1/19 \cite{etc-attack-2}        & \$$1.1$M     & \$$12.410$       & \$$-2,646$       & $0.48026$  \\
                                                                                             & 1/8/20 \cite{etc-attack}          & \$$5.6$M     & \$$84,059$      & \$$-11,806$     & $0.4913$  \\
                                                                                             & 6/8/20 \cite{etc-attack-3}        & \$$1.68$M    & \$$91,715$      & \$$-5,761$      & $0.4913$  \\
            \hline

            Horizen                             & 8/6/18 \cite{zencash-attack}      & \$$550,000$   & \$$5,756$      & \$$-752$  & $0.461373$ \\
            \hline

            \multirow{1}{*}{Vertcoin}           & 2/12/18 \cite{vertcoin-attack}     & \$$100,000$   & \$$3,978$       & \$$-879$  & $0.487124$ \\
            \hline

            \multirow{2}{*}{\begin{tabular}[c]{@{}c@{}} Bitcoin Gold \end{tabular}}       & 16/5/18 \cite{btg-attack-2}       & \$$17.5$M     & \$$11,447$     & \$$-1,404$    & $0.441631$ \\
                                                & 23/1/20 \cite{btg-attack}         & \$$72,000$    & \$$4,247$      & \$$814$   & $0.43991$ \\
            \hline

            Feathercoin                         & 1/6/13 \cite{ftc-attack}         & \$$63,800$    & \$$1,203$      & \$$-95.73$   & $0.48283$ \\
            \hline

        \end{tabular}
      \normalsize
    \end{center}
    \caption{
        Double spending attacks and the market's response to them.
        External utility is estimated as the reward from double-spent
        transactions. To compute the reward difference, we multiply the rewards
        from reorganized blocks with the exchange rate difference, \ie the
        asset's price $5$ days after the attack minus the expected price, if an
        attack had not occurred (following Bitcoin's price in the same period).
    }
    \label{tab:attacks}
\end{table*}

In the considered attacks, the perpetrator $\attacker$ performed double
spending. Specifically, $\attacker$ created a fork and two conflicting
transactions, each published on the two chains of the fork, the
main and the adversarial chain. The main chain's transaction is redeemed for
external rewards, \eg a payment in USD, while the adversarial chain's
transaction transfers the assets between two accounts of $\attacker$.
Therefore, $\attacker$ both receives external rewards and retains its
cryptocurrency rewards.

The adversarial chain contains a number of blocks created by $\attacker$.
After this chain becomes longest and is adopted by the network,
$\attacker$ sells its block rewards for USD.  To evaluate the exchange rate at
this point, we set the period between the launch of the attack and the
(presumable) selling of the block rewards to $5$ days. This value depends on
various parameters. For instance, in Bitcoin, the rewards for a block $\block$
can be redeemed after a ``coinbase maturity'' period of $100$ confirmations,
\ie after at least $100$ blocks have been mined on top of $\block$ (equiv. $17$
hours).\footnote{A Bitcoin
block is created on expectation every $10$ minutes.}
Furthermore, transactions are typically not finalized immediately; for
instance, most parties finalize a Bitcoin transaction after $6$ confirmations
and an Ethereum transaction after $240$ confirmations (equiv. approximately $1$
hour).
Usually this restraint is tightened~\cite{etc-attack-4} after an
attack is revealed.

To estimate the difference in rewards that an infraction effects, we use
cryptocurrency prices from
Coinmarketcap.\footnote{\url{https://coinmarketcap.com/}}  First, we obtain the
price $P_\mathbf{C}$ of each cryptocurrency $\mathbf{C}$ $5$ days after the attack. Second, we
compute the percentage difference $p_{BTC}$ of Bitcoin's price, between the end
and the beginning of the $5$ day period. The value $P_\mathbf{C} \cdot p_{BTC}$
expresses the \emph{expected} price of the cryptocurrency, assuming no attack
had occurred.\footnote{Historically, the prices of Bitcoin and alternative
cryptocurrencies are strongly correlated~\cite{btc-price-correlation}.} Next,
we find the number $b$ of blocks created in the attack and the reward
$\reward$ per block. Thus, the reward difference is $P_\mathbf{C}
\cdot p_{BTC} \cdot b \cdot \reward$.

As shown in~\cite{CCS:GazKiaRus20,CCS:DKTTVWZ20}, this attack is
optimal. Therefore, using the computations in~\cite{nakamoto2008bitcoin} and
the reorganized blocks during each attack, we approximate the
percentage of power needed so that the attack's success
probability is at least $0.5$.

\subsection{Penalties}

Historically (cf. Subsection~\ref{sec:attacks}), attacks
are profitable, so the market's response is typically insufficient to
incentivize compliance. Interestingly, in many occasions the external utility
was so high that, even if the exchange rate became $0$, it would exceed the
amount of lost rewards. Therefore, an additional form of utility reduction is
necessary to prevent any specific infraction that is essential to mount attacks
similar to those presented in Subsection~\ref{sec:attacks}.
In many PoS systems, like
Casper~\cite{buterin2017casper,casper-incentives},
Gasper~\cite{buterin2020combining}, and Tezos~\cite{tezos-pos}, a form of utility
reduction has been implemented in  the
form of penalties. In effect, each party $\party$ is required to deposit an
amount of assets $\deposit_{\party}$, which it forfeits if it violates a
well-defined condition.

Under penalties, $\party$'s reward is as follows. Consider profiles
$\profile_\proto, \profile_{\strategy_{\party}}$ as before. With
$\profile_\proto$, $\party$ receives $\rewardVal_{\party, \profile_\proto}$ and
retains its deposit $\deposit_{\party}$, both exchanged at rate
$\exchangeRateVal_{\profile_\proto}$. With $\profile_{\strategy_{\party}}$,
$\party$ forfeits its rewards and deposit, but receives external utility
$\utilityBoostVal_{\party, \profile_{\strategy_{\party}}}$.  Thus, under
penalties a party is incentivized to be compliant if the deposit and
rewards are larger than the external utility.
Note that Theorem~\ref{thm:penalties-ouroboros}'s $\epsilon$ bound is tighter
than that of Theorem~\ref{thm:external-ouroboros}, so penalties can make
infractions less appealing.

\begin{theorem}\label{thm:penalties-ouroboros}
    Assume
    \begin{inparaenum}[i)]
        \item a synchronous router $\adversary$ (cf. Section~\ref{sec:preliminaries}),
        \item the conflicting predicate $\infractionPredicate_\mathrm{conf}$, and
        \item that $\forall \party \in \partySet: \miningpower_{\party} <$~$\frac{1}{2}$.
    \end{inparaenum}
    Also let:
    \begin{inparaenum}[i)]
        \item $\strategySet_{-\infractionPredicate_\mathrm{conf}}$: the set of all non-compliant strategies;
        \item $\exchangeRateVal_{\profile_\proto}$: the (expected) exchange rate under $\profile_\proto$;
        \item $\exchangeRateVal_{\profile_{\strategy_{\party}}}$: the (expected) exchange rate when only $\party$ employs some non-compliant conflicting strategy $\strategy_{\party}$;
        \item $\utilityBoostVal_{\party, \profile_{\strategy_{\party}}}$: the external utility that $\strategy_{\party}$ yields for $\party$;
        \item and $\rewardVal_{\party, \profile} = E[\reward_{\party, \execution_{\profile}}]$, \ie the expected rewards of $\party$ under
    profile $\profile$.
    \end{inparaenum}
    Finally, assume the block-proportional rewards (cf.
    Definition~\ref{def:proportional-rewards} for fixed block reward $\reward$)
    for which it also holds:
    $$
    \forall \executionTrace \; \forall \party \in \partySet: \reward_{\party, \executionTrace} =
    \left\{
    \begin{array}{ll}
        \proportionalRewardFunc(\chain_{\observer,\executionTrace}, \party) \cdot \totalReward_{\observer,\executionTrace} + \deposit_{\party},&\party \mbox{ produces no conflicting blocks during } \executionTrace\\
        0,&\mbox{otherwise}
    \end{array}
    \right.$$
    where  $\deposit_{\party}$ is a protocol-specific deposit value.
    SL-PoS with the above rewards and under the aforementioned externalities
    is not $(\epsilon, \infractionPredicate_\mathrm{conf})$-compliant (cf.
    Definition~\ref{def:compliant}) \wrt utility \emph{Reward} (cf.
    Definition~\ref{def:utility}) and, if $\reward > \cost$, it is also not
    $(\epsilon, \infractionPredicate_\mathrm{conf})$-compliant \wrt utility \emph{Profit},
    in both cases under $\adversary$ and if and only if
    $\epsilon < \mathsf{max}\{ \underset{\party \in \partySet}{\mathsf{max}}\{ \underset{\strategy_{\party} \in \strategySet_{-\infractionPredicate_\mathrm{conf}}}{\mathsf{max}}\{ \utilityBoostVal_{\party, \profile_{\strategy_{\party}}} \} \} - \rewardVal_{\party, \profile_{\proto}} \cdot \exchangeRateVal_{\profile_\proto}, 0 \} - \mathsf{negl}(\secparam)$.
\end{theorem}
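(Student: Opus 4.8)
The plan is to mirror the structure of the proof of Theorem~\ref{thm:external-ouroboros}, reusing the same synchronous-network argument, but with one decisive modification coming from the penalty term in the reward function. First I would fix an arbitrary party $\party$ and compare the all-honest profile $\profile_\proto$ against a unilateral deviation $\profile_{\strategy_{\party}}$ in which $\party$ employs a non $\infractionPredicate_\mathrm{conf}$-compliant strategy $\strategy_{\party}$. Under $\profile_\proto$ the party is compliant, so by the reward function it collects its block-proportional share plus its deposit, i.e. $\rewardVal_{\party, \profile_\proto}$, while its external reward vanishes ($\utilityBoostVal_{\party, \profile_\proto} = 0$ by the defining property of $\utilityBoost$); hence $\utility_{\party}(\profile_\proto) = \rewardVal_{\party, \profile_\proto} \cdot \exchangeRateVal_{\profile_\proto}$ for utility \emph{Reward}, and the analogous expression with the cost subtracted for \emph{Profit}, where the $\reward > \cost$ hypothesis guarantees that participation remains worthwhile exactly as in Theorem~\ref{thm:compliant-ouroboros-synchronous}.

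The crucial step, and the point where this proof departs from Theorem~\ref{thm:external-ouroboros}, is the evaluation of $\utility_{\party}(\profile_{\strategy_{\party}})$. In the penalty-free setting, synchronicity guaranteed $\rewardVal_{\party, \profile_{\strategy_{\party}}} = \rewardVal_{\party, \profile_\proto}$, so only the exchange-rate gap survived. Here, the moment $\strategy_{\party}$ produces a single conflicting block the reward function returns $0$ on every such trace, so $\rewardVal_{\party, \profile_{\strategy_{\party}}} = 0$ and the entire term $\rewardVal_{\party, \profile_{\strategy_{\party}}} \cdot \exchangeRateVal_{\profile_{\strategy_{\party}}}$ collapses to $0$ irrespective of the exchange rate. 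Thus $\utility_{\party}(\profile_{\strategy_{\party}}) = \utilityBoostVal_{\party, \profile_{\strategy_{\party}}}$, and the gain from deviating is precisely $\utilityBoostVal_{\party, \profile_{\strategy_{\party}}} - \rewardVal_{\party, \profile_\proto} \cdot \exchangeRateVal_{\profile_\proto}$, the quantity inside the outer maximum of the claimed bound. This also substantiates the remark that the $\epsilon$-bound is tighter than in Theorem~\ref{thm:external-ouroboros}: the forfeited deposit and block rewards are now subtracted in full, rather than appearing only through an exchange-rate difference.

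To turn this computation into the stated equivalence I would argue both directions. For the forward direction, non-compliance means a non $\infractionPredicate_\mathrm{conf}$-compliant profile lies in the cone; reasoning as in Theorems~\ref{thm:PoS_profit} and~\ref{thm:PoS_relative_profit} (via the $\infractionPredicate_\mathrm{conf}$-agnostic reduction and induction on shortest-path length), the first deviation off $\profile_\proto$ must itself be a non-compliant best response whose improvement is at most $\underset{\party \in \partySet}{\mathsf{max}}\,\underset{\strategy_{\party} \in \strategySet_{-\infractionPredicate_\mathrm{conf}}}{\mathsf{max}}\{\utilityBoostVal_{\party, \profile_{\strategy_{\party}}}\} - \rewardVal_{\party, \profile_\proto} \cdot \exchangeRateVal_{\profile_\proto}$, forcing $\epsilon$ below that value. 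For the converse, whenever this maximum (capped below at $0$) strictly exceeds $\epsilon + \mathsf{negl}(\secparam)$, I would show the best response for the maximizing party is non-compliant: among all $\infractionPredicate_\mathrm{conf}$-compliant strategies the deposit is always retained and, by the synchronous equilibrium of Theorem~\ref{thm:compliant-ouroboros-synchronous}, the block rewards are maximized by $\proto$ up to $\mathsf{negl}(\secparam)$, so no compliant strategy beats $\rewardVal_{\party, \profile_\proto} \cdot \exchangeRateVal_{\profile_\proto} + \mathsf{negl}(\secparam)$, whereas the best conflicting strategy attains $\underset{\strategy_{\party} \in \strategySet_{-\infractionPredicate_\mathrm{conf}}}{\mathsf{max}}\{\utilityBoostVal_{\party, \profile_{\strategy_{\party}}}\}$. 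The $\mathsf{negl}(\secparam)$ slack absorbs exactly this near-optimality of the honest strategy together with the vanishing probability that no block is produced. The main obstacle I anticipate is this best-response bookkeeping: certifying that $\proto$ dominates every compliant alternative only up to negligible terms (so that a genuinely non-compliant strategy, not merely a different compliant one, becomes the best response), correctly folding the (higher) cost of the conflicting strategy into the $\reward > \cost$ analysis for \emph{Profit}, and confirming that the outer $\mathsf{max}\{\cdot, 0\}$ faithfully captures the degenerate case where no infraction is profitable and the protocol is in fact $(\epsilon,\infractionPredicate_\mathrm{conf})$-compliant for every $\epsilon \geq 0$.
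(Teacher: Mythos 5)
Your proposal is correct and follows essentially the same route as the paper's own proof: the decisive observation that the penalty clause forces $\rewardVal_{\party, \profile_{\strategy_{\party}}} = 0$ on any conflicting strategy (so the deviator's utility collapses to $\utilityBoostVal_{\party, \profile_{\strategy_{\party}}}$), the comparison against $\rewardVal_{\party, \profile_\proto} \cdot \exchangeRateVal_{\profile_\proto}$, the appeal to Theorem~\ref{thm:compliant-ouroboros-synchronous} to certify that the honest strategy dominates all compliant alternatives (so the best response is genuinely non-compliant when the bound is exceeded), and the $\mathsf{negl}(\secparam)$ slack for the leader-schedule bias are all exactly the paper's steps. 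Your additional invocation of the $\infractionPredicate_\mathrm{conf}$-agnostic reduction and shortest-path induction for the forward direction is extra care the paper leaves implicit, not a different method.
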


\begin{proof}
    When a party $\party$ employs a non-compliant strategy $\strategy_{\party}$,
    it receives an external utility $\utilityBoostVal_{\party,
    \profile_{\strategy_{\party}}}$. Also, in that case, it produces
    conflicting blocks (due to the non-compliance property of
    $\strategy_{\party}$). Therefore, by definition of the above
    block-proportional rewards, $\reward_{\party, \profile_{\strategy_{\party}}}
    = 0$; in other words, when $\party$ employs $\strategy_{\party}$ and produces
    conflicting blocks, $\party$ forfeits the protocol's rewards (which include
    the original rewards plus the deposit $\deposit_{\party}$). Therefore, when
    $\party$ employs $\strategy_{\party}$ and all other parties employ $\proto$,
    $\party$'s utility is
    $\utility_{\party}(\profile_{\strategy_{\party}}) = \utilityBoostVal_{\party, \profile_{\strategy_{\party}}}$
    (cf. Definition~\ref{def:utility-external}).

    We also remind that (from the proof of
    Theorem~\ref{thm:compliant-ouroboros-synchronous}), $\party$ can bias the
    leader schedule with some negligible probability $negl(\secparam)$, if it
    controls a minority of power.

    Therefore, for any party $\party$ and strategy $\strategy_{\party}$,
    $\profile_{\strategy_{\party}}$ is directly $\epsilon$-reachable from
    $\profile_{\proto}$ if:
    \begin{equation*}
    \begin{split}
        \rewardVal_{\party, \profile_{\proto}} \cdot \exchangeRateVal_{\profile_\proto} + \mathsf{negl}(\secparam) + \epsilon &< \utilityBoostVal_{\party, \profile_{\strategy_{\party}}} \Leftrightarrow  \\
     \Leftrightarrow   \epsilon &< \utilityBoostVal_{\party, \profile_{\strategy_{\party}}} - \rewardVal_{\party, \profile_{\proto}} \cdot \exchangeRateVal_{\profile_\proto} - \mathsf{negl}(\secparam) \nonumber
    \end{split}
    \end{equation*}

    Across all parties and all non-compliant strategies, the maximum such $\epsilon$ is:
    \begin{align}
        \epsilon < \underset{\party \in \partySet}{\mathsf{max}}\{ \underset{\strategy_{\party} \in \strategySet_{-\infractionPredicate_{conf}}}{\mathsf{max}}\{ \utilityBoostVal_{\party, \profile_{\strategy_{\party}}} \} \} - \rewardVal_{\party, \profile_{\proto}} \cdot \exchangeRateVal_{\profile_\proto} - \mathsf{negl}(\secparam) \nonumber
    \end{align}

    We note that, as shown in
    Theorem~\ref{thm:compliant-ouroboros-synchronous}, SL-PoS with block
    proportional rewards under a synchronous router is an equilibrium, \ie the
    honest protocol yields the maximum rewards for each party compared to all
    other strategies. In the present setting, the honest protocol again yields
    the maximum utility, compared to all other \emph{compliant} strategies. To
    prove this it suffices to observe that, if a party does not produce
    conflicting blocks, its rewards are a linear function of the rewards of the
    setting of Theorem~\ref{thm:compliant-ouroboros-synchronous}; therefore,
    between compliant strategies, the honest protocol yields the maximum
    rewards (as shown in Theorem~\ref{thm:compliant-ouroboros-synchronous}).

    Therefore, the given bound of $\epsilon$ is bounded from below and, for
    $\epsilon$ less than this bound, there exists a party $\party$ that is
    incentivized to employ a (non-compliant) strategy $\strategy_{\party}$ and
    produce conflicting blocks, rendering $\proto$ not $(\epsilon,
    \infractionPredicate_{conf})$-compliant.
\end{proof}

The $\epsilon$ bounds in Theorems~\ref{thm:external-ouroboros}
and~\ref{thm:penalties-ouroboros} depend on the external utility boost
$\utilityBoostVal_{\party, \profile_{\strategy_{\party}}}$. This highlights the
inherent limitations of such systems' designers, since the bound depends on
external (to the protocol) parameters. Intuitively, these bounds show that
attacks which utilize the $\infractionPredicate_\mathrm{conf}$ infraction can
be prevented in two ways.
First, larger deposits increase the threshold that makes some attacks
profitable.
However, they also shut off small parties, with inadequate assets. Therefore, a
tradeoff exists in preventing such attacks and enabling participation.
Second, the longer an attack's duration, the more blocks an adversary
needs to produce, hence the larger the rewards that it forfeits. Typically, the
attack's duration depends on the required number of confirmations for a
transaction to be finalized. Therefore, different confirmation limits, \eg
based on a transaction's value, could satisfy the tradeoff between fast
settlement and security.

Considering the latter observation, we now briefly review users' behavior in
SL-PoS (cf. Section~\ref{subsec:single-leader-pos}) under deposits and
penalties. In an SL-PoS execution, the percentage of parties that actively
participate during each epoch is identifiable via the block density and the
number of empty slots (when no block is diffused). Therefore, it is
possible to estimate the level of double-signing that a party
needs to perform to mount a double-spending attack, and then enforce a
transaction finalization rule to dis-incentivize such attacks.

Let $\party$ be a user of an SL-PoS ledger. $\party$ requires
$k$ confirmations, \ie finalizes a transaction after it is
``buried'' under $k$ blocks. Let $\tau$ be a transaction, published
on slot $\slot$, with value $v_\tau$. After $l$ slots,
$\tau$ is buried under $b$ blocks, with $b = x \cdot l$ for some $x \in
(0, 1)$. In case we have full participation in the protocol and the adversary
is bounded by $\frac{1}{2}$, it holds $x > \frac{1}{2}$; in the rest of the
section, we will focus on this setting.
Observe that $(1 - x) \cdot 100$\% of slots will be -- seemingly -- empty.
$\party$ will (on expectation) confirm $\tau$ after $\frac{1}{x} \cdot k$
slots, \ie when $k$ blocks are produced; of these, $\frac{1 - x}{x} \cdot k$
are empty.

Let $\attacker$ be a party that wants to double-spend $\tau$.
$\attacker$ should produce a private chain with at least $k$ blocks. Of these,
at most $\frac{1 - x}{x} \cdot k$ correspond to the respective empty slots,
while $k - \frac{1 - x}{x} \cdot k = \frac{2 \cdot x - 1}{x} \cdot k$ conflict
with existing blocks, \ie are evidence of infraction.
Let $d$ be a deposit amount, which corresponds to a single slot. Thus, for a
period of $t$ slots, the total deposited assets $D = t \cdot d$ are distributed evenly
across all slots.
$\attacker$ can be penalized only for infraction blocks, \ie for
slots which showcase conflicting blocks. In a range of
$\frac{1}{x} \cdot k$ slots, infraction slots are $\frac{2 \cdot x - 1}{x} \cdot k$.
Therefore, $\attacker$ forfeits at most $\frac{2 \cdot x - 1}{x} \cdot k \cdot
d$ in deposit and $\frac{2 \cdot x - 1}{x} \cdot k \cdot \reward$ in rewards
that correspond to infraction blocks.
Thus, if $v_\tau > \frac{2 \cdot x - 1}{x} \cdot k \cdot (d + \reward)$,
$\attacker$ can profitably double-spend $\tau$.
Consequently, depending on the amount $d$ of deposit per slot, the block reward
$\reward$, and the rate $(1 - x)$ of empty slots, for a transaction $\tau$ with
value $v_\tau$, $\party$ should set the confirmation window's size to:

\begin{align}
    k_\tau > \frac{v_\tau}{\frac{2 \cdot x - 1}{x} \cdot (d + \reward)}
\end{align}

Finally, the system should allow each participant to withdraw their deposit
after some time. However, it should also enforce \emph{some} time limit, such
deposits are adequate to enforce (possible) penalties.
Intuitively, a party $\party$ should be able to withdraw a deposit amount that
corresponds to a slot $\slot$, only if no transaction exists, such that $\slot$
is part of the window of size $k$ (computed as above). In other words,
$\party$'s deposit should be enough to cover all slots, which $\party$ has led
and which are part of the confirmations' window of at least one non-finalized
transaction.

\paragraph{Remark.}
    The correctness of the penalty enforcement mechanism should be verifiable
    by only parsing the ledger, \ie in a non-interactive manner. Interestingly,
    it might be impossible to create such (non-interactive) proofs of
    misbehavior for some predicates. For example, in a semi-synchronous network,
    block withholding (\eg selfish mining) is indistinguishable from
    behavior that occurs due to benign network delays.
    In those cases, the protocol's designer
    cannot rely on reward distribution and penalization to enforce compliance;
    instead, the utility should reflect resistance to such infractions, \eg via
    extra costs.
    Additionally, penalties are not applicable if the block's creators
    cannot be identified, \eg as in PoW, where block producers
    are decoupled from the users, and anonymous
    protocols~\cite{SP:MGGR13,SP:BCGGMT14,EC:GanOrlTsc19,SP:KKKZ19}.

\section{Conclusion}\label{sec:conclusion}

Our work explores the ability of blockchain designs to disincentivize
infractions. Inspired by Nash dynamics, we present a model for
strategic compliance, \ie rational
participant behavior that, while it does not
violate well-defined
properties, it may potentially exhibit ``compliant'' protocol deviations.
We focus primarily on two blockchain-related infractions,
abstaining and same-origin conflicting blocks. Given two
types of utility, ``rewards''  and ``profit'', we
analyze compliance of various protocols, offering both
positive and negative results:
\begin{inparaenum}[i)]
    \item PoS blockchains that employ resource-proportional rewards, \ie which
        depend \emph{solely} on a party's power, are compliant \wrt rewards,
        but non-compliant \wrt profit, as they incentivize abstaining;
    \item PoW systems with block-proportional rewards are compliant;
    \item compliance of PoS systems, which enforce a single participant per
        slot, depends on the network's lossiness;
    \item PoS systems where multiple participants per slot may act are
        non-compliant, under certain network routing conditions, as they
        incentivize producing conflicting blocks;
    \item PoS systems that  are compliant \wrt producing conflicting blocks, but
        non-compliant \wrt abstaining or performing a selfish signing attack.
\end{inparaenum}
Finally, we consider externalities, namely the assets' market price
and external proceedings of a successful attack. We show that, if the market
does not respond decisively when an attack occurs, infractions that yield such
proceedings may render protocols non-compliant. We also suggest a mitigation that
combines two approaches in the context of longest chain protocols:
i) deposits, which are potentially used for penalizing misbehavior;
ii) an adaptive transaction finality rule, applied by users, which increases the
number of blocks an attacker has to create to discard a
finalized high-value transaction.

Our work opens various lines for future work. First, alternative infraction
predicates and utilities could be explored, \eg to capture attacks such as
selfish mining and double spending. For instance, given that Bitcoin is not an
equilibrium \wrt relative rewards due to selfish mining, proving compliance for
this utility would be a rather promising result. Also we only consider rewards
that originate from the system; however, in real world systems, a party may
also receive assets via \emph{reward transfers} from another party. Introducing
transfers would enrich the strategy set, \eg allowing a party to
temporarily reduce its rewards for
long term profits, possibly affecting compliance in terms of coalition
forming. Finally, Section~\ref{sec:externalities} assumes that the exchange
rate and protocol rewards are independent random variables; future work could
explore possible correlations between the two and produce an
analysis that can be (arguably) closer to the real world. In addition, our
analysis treated changes in the market price as an effect of infractions;
an interesting question is whether price volatility (\eg as examined
in~\cite{DBLP:conf/sigecom/NodaOH20}) can be the cause of non-compliance.

\def\doi#1{\url{https://doi.org/#1}}
\bibliographystyle{plain}
\bibliography{additional,abbrev0,crypto_crossref}

\appendix

\section{Hybrid PoS and Finality Gadgets}\label{subsec:gasper}

Gasper~\cite{buterin2020combining} is a proof-of-stake blockchain protocol that
combines ideas from Casper FFG~\cite{buterin2017casper,casper-incentives} and
the (last message driven) GHOST fork-choice rule~\cite{FC:SomZoh15}. It is
of special interest, since it is the protocol of choice for Ethereum's updated
version, 2.0. Interestingly, Eth 2.0 aims at preventing malicious behavior via
penalties on incentives, which take the form of confiscating the deposit of a
(provably) misbehaving party. However, the high level of required
deposit\footnote{The required deposit has been set to
$32$ETH~\cite{gasper-rewards}, which as of April 2021 is equivalent to
\$$75000$. [\url{https://coinmarketcap.com}]} prevents small investors from
participating in the protocol. Therefore, compliance is a helpful property in
identifying whether infraction is viable under the plain protocol (\ie without
penalties) and, if so, how high the penalty should be set, \st it both
incentivizes compliance and allows as many parties as possible to participate.
Although Gasper is a consensus protocol in itself, our analysis applies to most
systems that employ finality gadgets like Winkle~\cite{EPRINT:AzoDanNik19},
Afgjort~\cite{SCN:DMMNT20}, and Fantomette~\cite{azouvi2018betting}.

As with the other PoS protocols (Section~\ref{subsec:pos}),
in Gasper time is measured in \emph{slots} and a constant number $\epochLength$
of slots defines an \emph{epoch}. In Gasper, a slot $\slot$ which has number
$\epoch \cdot \epochLength + k$, $k \in [0, \epochLength-1]$ belongs to epoch
$\epoch$, denoted as $\mathsf{ep}(\slot) = \epoch$. The \emph{view} of a block
$\block$ is the view consisting of $\block$ and all its ancestors, where an
ancestor of $\block$ is a block that is reachable from $\block$ following
parent-child edges on the chain.

For a block $\block$ and an epoch $\epoch$ there is a well-defined
\emph{$\epoch$-th epoch boundary block} of $\block$, denoted by
$\mathsf{EBB}(\block, \epoch)$. In particular, $\mathsf{EBB}(\block, \epoch)$
is the block with the highest slot number that is $\leq \epoch \cdot
\epochLength$ in the chain determined by $\block$. The \emph{last epoch
boundary block} of $\block$ is denoted by $\mathsf{LEBB}(\block)$. Since a
block $\block'$ can be an epoch boundary block of $\block$ for multiple epochs,
an \emph{epoch boundary pair} $(\block', \epoch')$ is used. These pairs play
the role of checkpoints, \ie a party that follows the protocol never discards a
justified block (and, consequently, its ancestor block back to the genesis).

In each slot, a subset of protocol participants, called \emph{validators},
forms a committee. Each validator belongs to a single committee per epoch. The
first committee member is the \emph{proposer}, \ie is responsible for creating
a new block; similar to the protocols of
Section~\ref{subsec:single-leader-pos}, a single proposer per slot is assigned.
Then, each member of the committee attests to the block that is the head of its
chain, using a variation of GHOST. Each attestation $\attestation$ is defined
as:
\begin{align*}
    \attestation = \langle \slot_{\attestation}, \block_{\attestation}, \mathsf{LJ}(\attestation)\overset{\validator}{\rightarrow}\mathsf{LE}(\attestation) \rangle
\end{align*}
where:
\begin{inparaenum}[(i)]
    \item $\slot_{\attestation} = \epoch \cdot \epochLength + k$ is the slot of
        epoch $\epoch$ during which the validator $\validator$ makes the
        attestation;
    \item the $\block_{\attestation}$ block that $\attestation$ attests to;
    \item
        $\mathsf{LJ}(\attestation)\overset{\validator}{\rightarrow}\mathsf{LE}(\attestation)$
        is a \emph{checkpoint edge}, where $\mathsf{LJ}(\attestation)$ and
        $\mathsf{LE}(\attestation)$ are epoch boundary pairs that will be
        defined shortly.
\end{inparaenum}
Therefore, $\attestation$ acts as a ``GHOST vote'' for block
$\block_{\attestation}$ and as a ``Casper FFG vote'' for the transition from
the epoch boundary pair $\mathsf{LJ}(\attestation)$ to
$\mathsf{LE}(\attestation)$.  $\mathsf{slot}(\attestation)$ denotes the slot of
$\attestation$ and two attestations are equal if their hash is equal (for some
hash function $\hash$).

If a set of validators with weight more than $\frac{2}{3}$ of the total
validating stake attests to an edge $(\block',
\epoch')\overset{\validator}{\rightarrow}(\block, \epoch)$, then a
\emph{supermajority link} is formed, denoted by $(\block',
\epoch')\overset{J}{\rightarrow}(\block, \epoch)$.

To define the pairs $\mathsf{LJ}(\attestation)$ and
$\mathsf{LE}(\attestation)$, the concept of \emph{justified pairs} is
introduced. Given a party $\party$'s view of the execution trace
$\executionTrace^{\party}$, the set of justified pairs
$J(\executionTrace^{\party})$ is defined recursively as follows:
\begin{itemize}
    \item the pair $(\genesis, 0)$ is in $J(\executionTrace^{\party})$, where
        $\genesis$ is the genesis block;
    \item if $(\block', \epoch') \in J(\executionTrace^{\party})$ and $(\block',
        \epoch') \overset{J}{\rightarrow} (\block, \epoch)$, then $(\block,
        \epoch) \in J(\executionTrace^{\party})$.
\end{itemize}
Now given an attestation $\attestation = \langle \slot_{\attestation},
\block_{\attestation},
\mathsf{LJ}(\attestation)\overset{\validator}{\rightarrow}\mathsf{LE}(\attestation)
\rangle$, the following holds:
\begin{itemize}
    \item $\mathsf{LJ}(\attestation)$ is the \emph{last justified pair of
        $\attestation$} (in terms of epoch number) \wrt the view of
        $\mathsf{LEBB}(\block_{\attestation})$;
    \item $\mathsf{LE}(\attestation)$ is the \emph{last epoch boundary pair of
        $\attestation$}, \ie $\mathsf{LE}(\attestation) =
        (\mathsf{LEBB}(\block_{\attestation}), \epoch_{\attestation})$.
\end{itemize}

Finally, with $(\block_k, \epoch + k)$ being the last justified pair in
$J(\executionTrace^{\party})$ (in terms of epoch number), a pair $(\block,
\epoch)$ is \emph{$k$-finalized} in $\executionTrace^{\party}$ if:
\begin{enumerate}
    \item $(\block, \epoch)=(\genesis, 0)$, or
    \item there exist $k+1$ (justified) adjacent epoch boundary pairs $(\block, \epoch),
        (\block_1, \epoch+1), \ldots, (\block_k, \epoch + k) \in J(\executionTrace^{\party})$
        and there also exists $(\block, \epoch)\overset{J}{\rightarrow}(\block_k, \epoch +
        k)$.
\end{enumerate}

In Gasper, each protocol message $\mesg$ is an attestation $\attestation$.
As with PoS protocols, the oracle $\oracle_\proto$ produces a signature for a
submitted attestation and a party can produce as many (valid) signatures per
round as queries to $\oracle_\proto$.

With the introduction of attestations, we need to slightly adapt the
standard blockchain infraction predicate (cf.
Section~\ref{sec:blockchain-infraction-predicate}) for Gasper. The deviant behavior for
validators is expressed by Gasper's infraction predicate
(Definition~\ref{def:gasper-infraction-validator}), which defines three
clauses, based on Gasper's ``slashing conditions''~\cite{buterin2020combining}.
Intuitively, an infraction occurs if a validator:
\begin{inparaenum}[i)]
    \item produces two conflicting attestations for the same (attestation) epoch;
    \item attempts a ``long-range'' attestation, \ie produces two attestations \st
        the first's edge source is older than the second's and its edge target
        is newer than the second's;
    \item does not produce an attestation for some epoch.
\end{inparaenum}

\begin{definition}[Gasper Infraction Predicate (Validators)]\label{def:gasper-infraction-validator}
    Given a validator party $\validator$ and an execution trace
    $\executionTrace$, $\infractionPredicate_{Gasper}(\executionTrace_{\env, \adversary,
    \slot}, \validator) = 1$ at slot $\slot$ if one of the following conditions
    holds:
    \begin{enumerate}
        \item $\exists \attestation, \attestation':
            (\messageValidityPredicate(\executionTrace_{\env, \sigma, \slot}^\validator, \attestation) = \messageValidityPredicate(\executionTrace_{\env, \sigma, \slot}^\validator, \attestation') = 1) \land
            (\mathsf{creator}(\attestation) = \mathsf{creator}(\attestation') =
            \validator) \land (\mathsf{slot}(\attestation) =
            \mathsf{slot}(\attestation')) \land (\attestation \neq
            \attestation')$;
        \item $\exists
            \attestation = \langle \cdot, \cdot, (\block_s, \epoch_s) \overset{\validator}{\rightarrow} (\block_t, \epoch_t) \rangle,
            \attestation' = \langle \cdot, \cdot, (\block'_s, \epoch'_s) \overset{\validator}{\rightarrow} (\block'_t, \epoch'_t) \rangle:
            (\mathsf{creator}(\attestation) = \mathsf{creator}(\attestation') =
            \validator) \land (\epoch_s < \epoch'_s) \land (\epoch_t > \epoch'_t)$;
        \item $\validator$ is a committee member for slot $\slot$ and makes
            \emph{no} queries to the oracle $\oracle_\proto$ during $\slot$.
    \end{enumerate}
\end{definition}

Regarding rewards, we assume those of Eth $2.0$, which implements Gasper. Given the chain
$\chain_{\observer}$ output by the observer $\observer$, parties receive
rewards as validators and/or as proposers.
A validator receives a fixed reward\footnote{In
practice, both the validation and proposer rewards are not fixed, but depend on
the total amount of stake in the system and the number of included attestations
respectively. Our simplification though eases analysis and expresses the
setting where the total stake and the rate of produced attestations per round
are fixed throughout the execution.} $\reward_v$ for each attestation
$\attestation$ it produces, which satisfies the following conditions:
\begin{inparaenum}[i)]
    \item the block that $\attestation$ attests to is finalized in
        $\chain_{\observer}$;
    \item $\attestation$ is published in $\chain_{\observer}$.
\end{inparaenum}
A proposer receives a fixed reward $\reward_p$ for each block $\block$ it
produces which is finalized in $\chain_{\observer}$.

Consequently, the analysis of Section~\ref{subsec:single-leader-pos} applies
directly on Gasper, for the updated infraction predicate
(Definition~\ref{def:gasper-infraction-validator}). Specifically, when a party
is chosen to participate, either as proposer or validator, no other party may
also be elected for the same role; therefore, Gasper's participation schedule
is equivalent to single-leader PoS protocols. Additionally, Gasper rewards are
akin to block proportional rewards, as parties are rewarded for producing
attestations which are eventually accepted by $\observer$. Therefore, applying
the analysis of Section~\ref{subsec:single-leader-pos}, Gasper is compliant
(\wrt the Gasper infraction predicate) under a synchronous network, but it is
not compliant under a lossy network.

\end{document}